\definecolor{webgreen}{rgb}{0,.5,0}
\definecolor{webbrown}{rgb}{.8,0,0}
\definecolor{emphcolor}{rgb}{0.95,0.95,0.95}
\ifpdf \hypersetup{pdftex,
	pdfstartview=FitH, 
	bookmarksopen=true,
	bookmarksnumbered=true
} \else \hypersetup{dvips} \fi
\newcommand {\e}{\mathbb{E}}
\numberwithin{equation}{section}
\newtheorem{theorem}{Theorem}[section]
\newtheorem{proposition}{Proposition}[section]
\newtheorem{remark}{Remark}[section]
\newtheorem{lemma}{Lemma}[section]
\newtheorem{assumption}{Assumption}[section]
\numberwithin{remark}{section} \numberwithin{proposition}{section}
\numberwithin{corollary}{section}
\newcommand {\R}{\mathbb{R}}
\newcommand {\F}{\mathcal{F}}
\newcommand {\p}{\mathbb{P}}
\newcommand {\E}{\mathbb{E}}
\newcommand{\diff}{{\rm d}}
\newcommand{\lev}{L\'{e}vy }
\title{
The Leland-Toft optimal capital structure model under Poisson observations
}
\author[Z. Palmowski]{Zbigniew Palmowski$^*$\,}\thanks{$*$\,Faculty of Pure and Applied Mathematics,
Wroc\l aw University of Science and Technology,
Wyb. Wyspia\'nskiego 27, 50-370 Wroc\l aw, Poland. Email: \mbox{{\em
        zbigniew.palmowski@pwr.edu.pl}}}
\author[J.L. P\'erez]{Jos\'e Luis P\'erez$^{**}$\,}\thanks{$**$\,Department of Probability and Statistics, Centro de Investigaci\'on en Matem\'aticas, A.C. Calle Jalisco S/N
C.P. 36240, Guanajuato, Mexico. Email: \mbox{{\em
        jluis.garmendia@cimat.mx}}}
\author[B. A. Surya]{Budhi Arta Surya$^\dag$\,}\thanks{$\dag$\,School of Mathematics and Statistics, Victoria University of Wellington, Gate 6, Kelburn PDE, Wellington 6140, New Zealand. Email: \mbox{{\em
        budhi.surya@vuw.ac.nz}}}
\author[K. Yamazaki]{\,Kazutoshi Yamazaki$^\ddag$}\thanks{$\ddag$\, Department of Mathematics,
Faculty of Engineering Science, Kansai University, 3-3-35 Yamate-cho, Suita-shi, Osaka 564-8680, Japan. Email: \mbox{{\em
kyamazak@kansai-u.ac.jp}}}
\begin{document}
\maketitle
\begin{abstract} 
This paper revisits the optimal capital structure model with
endogenous bankruptcy, first studied by Leland  \cite{Leland94} and Leland and Toft \cite{Leland96}. Unlike in the standard case, where shareholders continuously observe the asset value and bankruptcy is executed instantaneously and without delay,
 the information of the asset value is assumed to be updated only at intervals,
 modeled by the jump times of an independent Poisson
process. Under the spectrally negative L\'evy model, we obtain the optimal bankruptcy strategy and the corresponding capital structure.   A series of numerical studies enable analysis of the sensitivity of observation frequency in the optimal solutions,  optimal leverage and credit spreads.

\end{abstract}
 \noindent \small{\textbf{Keywords:}\,  Credit risk, optimal capital structure, spectrally negative \lev processes, scale functions }\\
 \noindent \small{\textbf{JEL Classification:}\, D92, G32, G33}\\
\noindent \small{\textbf{Mathematics Subject Classification (2010):}\,60G40, 60G51, 91G40}


\section{Introduction}


The study of capital structures dates back to the seminal work by Modigliani and Miller \cite{Modigliani}, which shows that, \emph{in a frictionless economy,} the value of a firm is invariant to the choice of capital structures.  While the  Modigliani-Miller (MM) theory 
 is regarded as an effective starting point for research on capital structures and has provided valuable insights in the field, it is not directly applicable to businesses. In reality,  selection of capital structures is not perfectly random.
Instead, it depends significantly on factors such as industry type, county and  corporate law. In the field of corporate finance, various approaches have been taken to explain how
much debt a firm should issue. A reasonable conclusion can be obtained only after challenging some of the assumptions of the classical MM theory.

The \emph{trade-off theory} is one well-known approach for the study of capital structures. While various frictions may affect a firm's decisions,  (1) \emph{bankruptcy costs} and (2) \emph{tax benefits} are believed to be the most important factors.
By issuing debt, bankruptcy costs increase, while at the same time the firm can enjoy tax shields for coupon payments to the bondholders.  The trade-off theory states that firms issue the appropriate debt  to solve the trade-off between  minimizing bankruptcy costs and maximizing tax benefits.
To formulate this optimization problem,
one needs an efficient and realistic way of modeling not only bankruptcy but also tax benefits, which depend heavily on the dynamics of the firm's asset value. For more details on the trade-off theory and its review, see, e.g., \cite{Kraus, Frank, Ju}. 





Classically, there are two models of bankruptcy in credit risk:  the \emph{structural approach} and the \emph{reduced-form approach} (see \cite{Bielecki}). The former, first proposed by Black and Cox \cite{Black}, models bankruptcy time as the first time the asset value goes below a fixed barrier. The latter models it as the first jump epoch of a doubly stochastic process (known hereafter as the Cox process) where the jump rate is driven by another stochastic process.  Both approaches were developed extensively in the 2000s and are now commonly used throughout the asset pricing and credit risk literature.
An extension of the structural approach, which we call the \emph{excursion (Parisian) approach,} models it as the first instance in which the amount of time \emph{the asset price stays continuously below a threshold exceeds a given grace period}. Motivated by the Parisian option, this is sometimes called the \emph{Parisian ruin} (see \cite{Chesney}). In the corporate finance literature, the approach has been used to model the reorganization process (Chapter 11), as in \cite{Francois_Morellec, Broadie}.
 Here, reorganization is undertaken whenever the asset value is below a threshold; although there is a chance of recovering to reach above the threshold, if reorganization time exceeds the grace period, the firm is liquidated.  For more information, see the literature review in Section \ref{section_literature_review}.


\subsection{A new model of bankruptcy} \label{section_model_bankruptcy}


This paper considers the scenario where asset value information is updated only at epochs
 $(T_n^\lambda)_{n \geq 1}$, given by the jump times of a Poisson process $(N^\lambda_t)_{t \geq 0}$ with fixed rate $\lambda$.  Given a bankruptcy barrier $V_B$, chosen by the equity holders, bankruptcy is triggered at the first update time where the asset process $(V_t)_{t \geq 0}$ is below $V_B$:
\begin{align}
\inf \{ T_i^\lambda: V_{T_i^\lambda} < V_B \}. \label{our_default}
\end{align}
This is also written as the \emph{classical bankruptcy time}
\begin{align}
\inf \{ t > 0: V^\lambda_{t} < V_B \}, \label{our_default_continuous}
\end{align}
of the asset value if it is only updated at $(T_n^\lambda; n \geq 1)$:
\begin{align*}
V^\lambda_t := V_{T^\lambda_{N^\lambda_t}}, \quad t \geq 0.
\end{align*}
Here $T^\lambda_{N^\lambda_t}$ is the \emph{most recent update time} before $t$.
In Figure \ref{plot_simulated}, we plot sample paths of $(V_t)_{t \geq 0}$,  $(V_t^\lambda)_{t \geq 0}$, $(T_n^\lambda)_{n \geq 1}$ and the corresponding bankruptcy time.

 \begin{figure}[htbp]
\begin{center}
\begin{minipage}{1.0\textwidth}
\centering
\begin{tabular}{c}
 \includegraphics[scale=0.5]{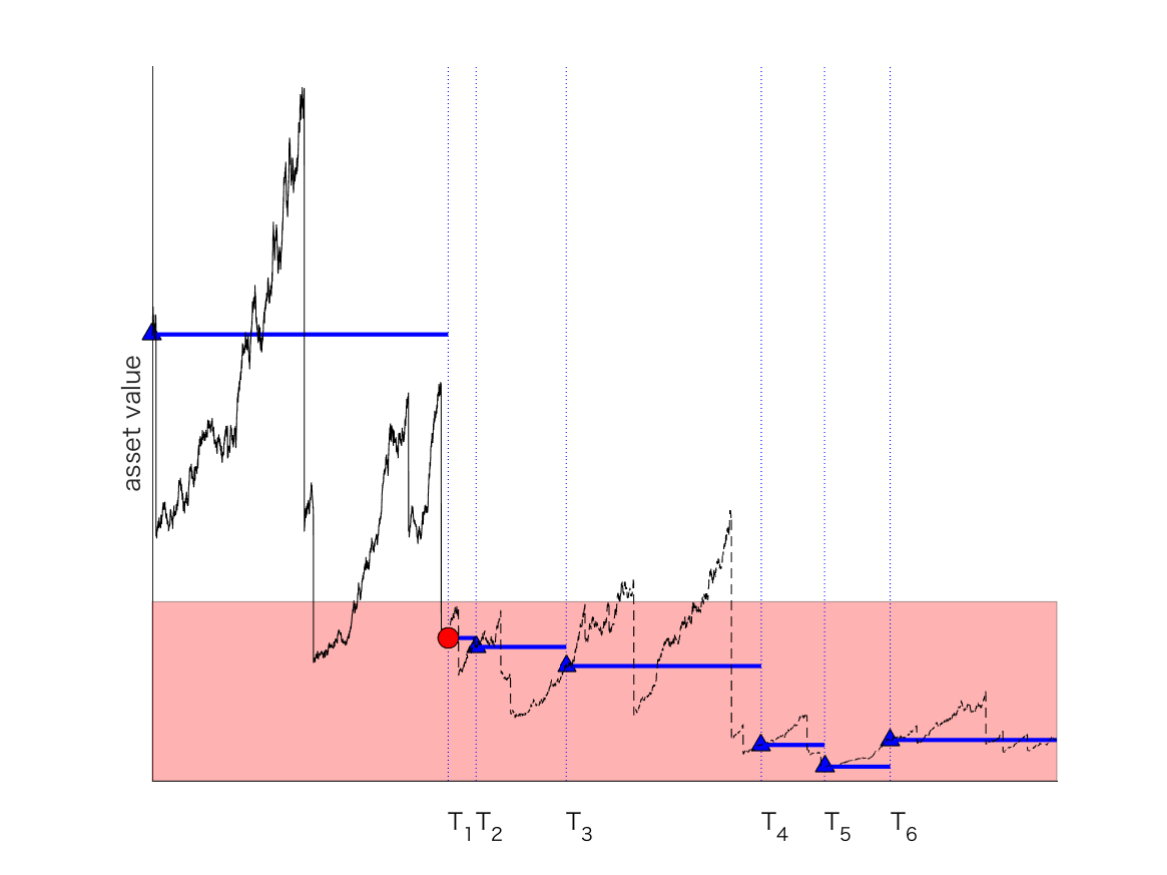}
\end{tabular}
\caption{Sample paths of the asset value $(V_t)_{t \geq 0}$ (black lines) and $(V_t^\lambda)_{t \geq 0}$ (horizontal blue lines) along with the Poisson arrival times $(T_n^\lambda)_{n \geq 1}$ (indicated by dotted vertical lines). The red zone $(0, V_B)$ is given by the rectangle colored in red.
The asset values at  bankruptcy and other observation times are indicated by the red circle and blue triangles, respectively.  Here, the bankruptcy time corresponds to $T_1^\lambda$, but the asset value has crossed $V_B$ before and then recovered back before $T_1^\lambda$.
Note that $(V_t^\lambda)_{t \geq 0}$ has a positive jump at $T_6^\lambda$.
}  \label{plot_simulated}
\end{minipage}
\end{center}
\end{figure}

The bankruptcy model
 \eqref{our_default} is closely related to the reduced-form and excursion approaches reviewed above.
\begin{enumerate}
\item The bankruptcy time \eqref{our_default} is equivalent to the Parisian ruin with the (constant) grace period replaced with an exponential time clock, the first epoch being the time spent continuously below $V_B$ for more than an independent exponential time. For more details see Appendix \ref{bankruptcy_parisianruin}.
\item It is also equivalent to the bankruptcy time in the reduced-form credit risk model, where 
the bankruptcy time is the first jump time of the Cox process  with hazard rate given by $(h_t := \lambda \mathbf{1}_{\{V_t < V_B \}})_{t \geq 0}$. As in Figure \ref{plot_simulated}, the region $(0, V_B)$ can be seen as the ``red zone"; here, bankruptcy is triggered at rate $\lambda$ whereas, in the ``healthy zone" $(V_B, \infty)$,  this probability is negligible.
\end{enumerate}


There are several motivations for considering the bankruptcy strategy \eqref{our_default} for the study of capital structures.

First, in reality, it is not possible to continuously observe the accurate status of a firm and make bankruptcy decisions instantaneously.  In addition, unlike in the case of American options pricing, for which computer programs can be set up to exercise automatically, in our case, information is acquired by humans. As observed in the literature of \emph{rational inattention} \cite{Sims}, the amount of information a decision maker can capture and handle is limited, and instead they rationally decide to stay with imperfect information.  Taking a bankruptcy decision requires complex information and it is more realistic to assume that the information for the decision makers is updated only at random discrete times.  While they are expected to respond promptly,  delays are inevitable and possibly have a significant impact on bankruptcy costs.


Second,
 the majority of the existing literature assumes continuous observation using a continuous asset value process -- in this case, the asset value at bankruptcy is, in any event, precisely $V_B$.  Unfortunately, it is unreasonable to assume that one can precisely predict the asset value at bankruptcy, which is in reality random.  The randomness can be realized by adding negative jumps to the process.  We  underline that in our model this
randomness can also be achieved by any choice (continuous or c\'{a}dl\'{a}g) of the underlying process.
See Figure \ref{fig_bankruptcy_value_r} in Section  \ref{section_numerics}.






Third, this model generalizes the classical model and allows more flexibility by having one more parameter $\lambda$. The classical structural model (with instantaneous liquidation upon downcrossing the barrier) corresponds to the case $\lambda = \infty$ and the no-bankruptcy model corresponds to the case $\lambda = 0$.  With careful calibration of $\lambda$, the model can potentially estimate the bankruptcy costs and tax benefits more precisely. Typically, for calibration, credit spread data is used.  As shown in the numerical results (see Figure \ref{plot_credit_spread}), a variety of term structures can be achieved by choosing the value of $\lambda$.

Finally, thanks to the equivalence of our bankruptcy time with the classical bankruptcy time \eqref{our_default_continuous} of the process $(V^\lambda_t)_{t \geq 0}$, this research can be considered a contribution  to the classical structural approach.  Existing results featuring asset value processes with two sided jumps are rather limited.
However, we provide a new analytically tractable case for $(V^\lambda_t)_{t \geq 0}$, containing two-sided jumps even when  $(V_t)_{t \geq 0}$ does not have positive jumps
(see Figure \ref{plot_simulated}).  By appropriately selecting the driving process $(V_t)_{t \geq 0}$ as well as $\lambda$, it is possible to construct a wide range of stochastic processes with two-sided jumps.






\subsection{Contributions of the paper}
This model is built based on
the seminal paper by Leland and Toft \cite{Leland96}, with a feature of endogenous default. While Leland \cite{Leland94}'s framework is more frequently used and is certainly more mathematically tractable, its extension  \cite{Leland96} more accurately  captures  the flow of debt financing by successfully avoiding the use of perpetual bonds assumed in \cite{Leland94}.

In addition, while the majority of papers in financial economics assume a geometric Brownian motion for the asset price $(V_t)_{t \geq 0}$, we  follow the works of Hilberink and Rogers \cite{Hilberink}, Kyprianou and Surya \cite{Kyprianou} and Surya and Yamazaki \cite{Surya_Yamazaki} and consider an exponential \lev process with arbitrary negative jumps (spectrally negative \lev processes). Although it is  more desirable to also  allow positive jumps as in Chen and Kou \cite{Chen_Kou_2009},  as discussed in \cite{Hilberink}, negative jumps occur more frequently and  effectively model  the downward risks.
With the spectrally negative assumption, semi-explicit expressions of the equity value as well as the optimal bankruptcy threshold are elicited, without focusing on a particular set of jump measures. Again, see the discussion above on how our model is capable of modeling the two-sided jump case in the classical structural approach, even when a spectrally negative \lev process is used for $(V_t)_{t \geq 0}$. For a more general study of financial models using \lev processes, the reader should   refer to Cont and Tankov \cite{ContTankov}.


To solve the problem, recent developments of the \emph{fluctuation theory} of \lev processes are utilized.
First, the firm/debt/equity values are expressed in terms of the so-called \emph{scale functions}, which exist for a general spectrally negative \lev process.  These permit direct computation of the optimal bankruptcy barrier and the corresponding firm/debt/equity values. 


With these analytical results, a sequence of numerical experiments can be conducted.
Here, to easily comprehend the impacts of the parameters describing the problem, we use a (spectrally negative) hyperexponential jump diffusion (a mixture of Brownian motion and i.i.d.\ hyperexponentially distributed jumps), for which the scale function can be written as a sum of exponential functions.  The equity/debt/firm values can be written explicitly and the optimal bankruptcy barrier can be computed instantaneously by a classical bisection method. The optimal capital structure is obtained by solving the \emph{two-stage optimization problem} as proposed in \cite{Leland96}.  In addition, with numerical Laplace inversion, we also obtain the term structures of credit spreads and the density/distribution of the bankruptcy time and the corresponding asset value.
Because various numerical experiments have already been conducted in other papers, here we focus on analyzing the impacts of the frequency of observation $\lambda$.  We verify the convergence to the classical case of \cite{Hilberink, Kyprianou}, and also observe monotonicity, with respect to $\lambda$, of  the bankruptcy barrier, firm value under the optimal capital structure, the optimal leverage, and the credit spread.



\subsection{Related literature} \label{section_literature_review}
Before concluding this section, we review several relevant papers motivating our problem.

The most relevant paper, to our best knowledge, is Francois and Mollerec \cite{Francois_Morellec}, in which the authors modeled the reorganization process (Chapter 11)
using the excursion approach with a deterministic grace period as described above.
Broadie et al. \cite{Broadie} considered a similar model with an additional barrier for immediate liquidation upon crossing, whereas Moraux \cite{Moraux} considered a variant of \cite{Francois_Morellec}  using the \emph{occupation time approach}, in which distress level accumulates without being reset each time the asset process recovers to a healthy state.
These papers are based on
Leland \cite{Leland94}, with perpetual bonds  and asset values driven by geometric Brownian motions for mathematical tractability. However, it is significantly more challenging than the classical structural approach and hence most of them rely on numerical approaches.  In this paper, on the other hand,  semi-analytical solutions for a more general asset value process with jumps are obtained as a result of the
use of Poisson arrival times for the update times.

This paper is also motivated by Duffie and Lando \cite{Duffie}, in which they modeled the asymmetry of information between firms and bond investors.
The authors assumed that bond investors cannot observe the firm's assets directly and that instead, they  receive only periodic and imperfect accounting reports on the firm's status. Under these assumptions, the authors successfully explained the non-zero credit spread limit.

Regarding the study of \lev processes observed at Poisson arrival times, there has been substantial progress in the last few years. Recently, Albrecher and Ivanovs \cite{Albrecher_Ivanovs} investigated close links between \lev processes observed continuously and periodically. In results similar to those for
the classical hitting time at a barrier, they found that the exit identities under periodic observation can be obtained,
if the Wiener-Hopf factorization is known.  In particular, when focusing on the spectrally one-sided case, these can be written in terms of the scale function.
For the results of our paper, we use the joint Laplace transform of the bankruptcy time \eqref{our_default} and the asset value in that instance, which is obtained in \cite{Albrecher, Albrecher_Ivanovs}. In addition, we obtain the resolvent measure killed at the first Poissonian downward passage time 
\eqref{our_default} for the computation of tax benefits.

Regarding the optimal stopping problems under Poisson observations, perpetual American options have been studied by Dupuis and Wang \cite{Dupuis_Wang} for the geometric Brownian motion case. This has recently been generalized to the \lev case by P\'erez and Yamazaki \cite{Perez_Yamazaki_options}.  Several key studies have been performed on the application of scale functions in optimal stopping in the continuous observation setting (e.g., \cite{Alili_Kyprianou, Avram_Kyprianou_Pistorius, Long, Rodosthenous, Surya}). The periodic observation model is more frequently used in the insurance community, in particular in the optimal dividend problem (see \cite{Avanzi_Cheung_Wong_Woo, Avanzi_Tu_Wong, Noba_Perez_Yamazaki_Yano}).


To the best of our knowledge, this is the first attempt to introduce Poisson observations in the problem of capital structures.  We believe the techniques used in this paper can be used similarly in related problems described above when the Poisson observation is introduced.






\subsection{Organization of the paper}

The organization of this paper is as follows.
In Section \ref{section_problem} we present formally the main problem that we work on in this article.
In Section \ref{section_equity_value}, we compute the equity value using the scale function, and, in Section \ref{section_optimal_barrier}, we identify the optimal barrier.
Section  \ref{section_two_stage} considers the two-stage problem to obtain the optimal capital structure.
Section \ref{section_numerics} deals with numerical examples confirming theoretical results. Section \ref{section_conlusion} concludes the paper.
Long proofs are deferred to the Appendix.

\section{Problem Formulation} \label{section_problem}

Let $(\Omega, \F, \p)$ be a complete probability space hosting a  L\'{e}vy process $X=(X_t)_{t\ge 0}$.  
The value of the \emph{firm's asset} is assumed to evolve according to an \emph{exponential \lev process} given by,  for the initial value $V > 0$,
\[V_t := V e^{X_t},\qquad t\geq0.
\]
   Let $r > 0$ be the positive risk-free interest rate and $0 \leq \delta < r$ the total payout rate to the firm's investors.  We assume that the market is complete and this requires $(e^{-(r-\delta) t} V_t)_{t \geq 0}$ to be a $\p$-martingale.

The firm is partly financed by debt with a constant debt profile: it issues,  for some given constants $p, m > 0$, new debt at a constant rate $p$ with maturity profile $\varphi(s) := m e^{-ms}$.  In other words,  the face value of the debt issued in the small time interval $(t, t+\diff t)$ that matures in the small time interval $(t+s, t+s+\diff s)$ is approximately given by $p \varphi(s) \diff t \diff s$.
Assuming the infinite past,
the face value of debt held at time $0$ that matures in $(s, s+ \diff s)$ becomes
\begin{align}
\left[  \int_{-\infty}^0 p \varphi(s-u) \diff u \right] \diff s = p e^{-m s} \diff s, \label{mat_profile}
\end{align}
and {\it the face value of all debt} is a constant value,
\begin{align*}
P := \int_0^\infty p e^{-m s} \diff s = \frac p m.
\end{align*}
For more details,  see \cite{Hilberink, Kyprianou}.

Let $(N^\lambda_t)_{t \geq 0}$ be {\it an independent Poisson process with rate $\lambda > 0$} and $\mathcal{T} := (T_n^\lambda)_{n \geq 1}$ be its jump  times. Suppose the bankruptcy is triggered at the first time of $\mathcal{T}$ the asset value process $(V_t)_{t \geq 0}$ goes below a given level $V_B > 0$:
\begin{align}
T_{V_B}^-  := \inf \left\{ S \in \mathcal{T}: V_{S} < V_B \right\}
\label{default_time}
\end{align}
with the convention $\inf \emptyset =\infty$. In our model, it is more natural to assume that the bankruptcy decision can be made at time zero.  Hence, we modify the above and consider the random time
\begin{align}
\overline{T}_{V_B}^-  := \inf \left\{ S \in \mathcal{T} \cup \{0\}: V_{S} < V_B \right\} = T_{V_B}^- \mathbf{1}_{\{V \geq V_B \}}.
\label{default_time_modified}
\end{align}



(i)  Suppose $V \geq V_B$ so that $\overline{T}_{V_B}^-  = T_{V_B}^-$.

The debt pays {\it a constant coupon flow at a fixed rate $\rho > 0$ and a constant fraction $0 <\alpha < 1$ of the asset value is lost at the bankruptcy time $T_{V_B}^-$}.
In this setting, the value of the debt with a unit face value  and maturity $t>0$ becomes
\begin{align}
	d (V; V_B, t) := \E \left[ \int_0^{t \wedge T_{V_B}^-  } e^{-rs} \rho \diff s \right] + \E\left[ e^{-rt} \mathbf{1}_{\{t < T_{V_B}^-  \}}\right] + \frac 1 P \E \left[ e^{-r T_{V_B}^-}V_{T_{V_B}^-} \left(1-\alpha  \right) \mathbf{1}_{\{T_{V_B}^-  < t \}}\right]. \label{debt_constant_unit}
\end{align}	
Here, the first term is the total value of the coupon payments accumulated until maturity or bankruptcy whichever comes first;  the second term is the value of the principle payment; the last term corresponds to the $1/P$ fraction of the remaining asset value that is distributed, in the event of bankruptcy, to the bondholder of a unit face value.
Integrating this, the \emph{total value of debt} becomes, by \eqref{mat_profile} and Fubini's theorem,
	\begin{align*}
		\mathcal{D} (V; V_B) &:= \int_0^\infty p e^{-m t} d (V; V_B, t) \diff t \\
		&= \E\left[ \int_0^{T_{V_B}^-}  e^{-(r+m)t} \left( P \rho+ p \right) \diff t\right] +  \E \left[ e^{-(r+m) T_{V_B}^-}V_{T_{V_B}^-} \left(1-\alpha  \right) \mathbf{1}_{\{T_{V_B}^-  < \infty \}}\right].
	\end{align*}
	
Regarding the \emph{value of the firm}, it is assumed that there is a corporate tax rate $\kappa > 0$ and its (full) rebate on coupon payments is gained if and only if $V_t \geq V_T$ for some given cut-off level $V_T \geq 0$ (for the case $V_T = 0$, it enjoys the benefit at all times).  Based on the trade-off theory (see e.g.\ \cite{Brealey_Myers_2001}), {\it the firm value} becomes the sum of the asset value and total value of tax benefits less the value of loss at bankruptcy, given by
\begin{align}\label{fun_V}
	\mathcal{V}(V;V_B) &:= V + \E \left[ \int_0^{T_{V_B}^- } e^{-rt} \mathbf{1}_{\{V_t \geq  V_T \}} P \kappa \rho \diff t\right] -  \alpha  \E \left[ e^{-r T_{V_B}^-}V_{T_{V_B}^-} \mathbf{1}_{\{T_{V_B}^-  < \infty \}}   \right].
\end{align}	

(ii)  Suppose $V < V_B$ so that $\overline{T}_{V_B}^-  = 0$ a.s. Then, 
\begin{align}
\mathcal{D} (V; V_B) = \mathcal{V} (V; V_B) = (1-\alpha) V. 
\label{D_V_when_V_less_V_B}
\end{align}

The problem is to pursue an \emph{optimal bankruptcy level} $V_B \geq 0$ that maximizes the \emph{equity value},
\begin{align}
\mathcal{E}(V;V_B) :=\mathcal{V}(V;V_B) - \mathcal{D} (V; V_B),
\label{equity}
\end{align}
 subject to the \emph{limited liability constraint},
\begin{align}
\mathcal{E}(V;V_B) \geq 0, \quad V \geq  V_B, \label{constraint}
\end{align}
if such a level exists. Here,  $V_B = 0$ means that it is never optimal to go bankrupt with the limited liability constraint satisfied for all $V >0$. 
Note that when $V < V_B$ then \eqref{D_V_when_V_less_V_B} gives $\mathcal{E}(V;V_B) = 0$.

\section{Computation of the equity value} \label{section_equity_value}

Suppose from now on that $(X_t)_{t \geq 0}$ is a spectrally negative \lev process, that is a L\'evy process without positive jumps.
We denote by
\begin{align} \label{laplace_exponent}
\psi(\theta) := \log \e\big[{\rm e}^{\theta X_1}\big], \qquad \theta\ge 0
\end{align}
its Laplace exponent with the right-inverse
		\begin{align}
				\Phi(q) := \sup \{ s \geq 0: \psi(s) = q\}, \quad q \geq 0.
			\label{def_varphi}
		\end{align}
\subsection{Scale functions}
The starting point of whole analysis is introducing  the so-called $q$-scale function $W^{(q)}(x)$, with $q\geq 0$ and $x\in\mathbb{R}$.
It features invariably in almost all known fluctuation identities of spectrally negative L\'evy processes; see
Zolotarev \cite{Z} and Tak\'acs \cite{Ta} for the origin of this function. See also \cite{Kyprianou, KKR}  for a detailed review.

Fix $q \geq 0$. The $q$-scale function $W^{({q})}$ is the mapping from $\R$ to $[0, \infty)$ that takes value zero on the negative half-line, while on the positive half-line it is a continuous and strictly increasing function with the Laplace transform:
		\begin{align} \label{scale_function_laplace}
			\begin{split}
				\int_0^\infty  \mathrm{e}^{-\theta x} W^{({q})}(x) \diff x &= \frac 1 {\psi(\theta)-q}, \quad \theta > \Phi({q}).
			\end{split}
		\end{align}
Define also the second scale function: 
\begin{align*} 
Z^{({q})}(x; \theta ) &:=e^{\theta x} \left( 1 + (q- \psi(\theta )) \int_0^{x} e^{-\theta  z} W^{(q)}(z) \diff z	\right), \quad x \in \R, \, \theta  \geq 0.
\end{align*}
In particular, for $x \in \R$, we let $Z^{(q)}(x) :=Z^{(q)}(x; 0)$ and, for $\lambda > 0$,
\begin{align*} 
	\begin{split}
		Z^{(q)}(x; \Phi(q+\lambda)) &=e^{\Phi(q+\lambda) x} \left( 1 -\lambda \int_0^{x} e^{-\Phi(q+\lambda) z} W^{(q)}(z) \diff z \right).
	\end{split}
\end{align*}
In the next section, we see that the equity value \eqref{equity} can be written in terms of the scale functions $W^{(q)}$ and $Z^{(q)}$.

\subsection{Related fluctuation identities} \label{section_fluctuation_identities}

For $y \in \mathbb{R}$, let $\p_y$ be the conditional probability under which the initial value of the spectrally negative \lev process is $X_0 = y$. 

Following equation (4.5) in \cite{Kyprianou} (see also Emery \cite{Emery} and \cite[eq. (3.19)]{APP2007}), the joint Laplace transform of the first passage time
\begin{equation}\label{tauzero}\tau_0^-:=
\inf\{t\geq 0: X_t<0\} \end{equation}
and $X_{\tau_0^-}$
is given by the following identity
\begin{align}\label{der_gamma_0_0}
\begin{split}
H^{(q)}(y; \theta) &:= \E_{y} \left[ e^{-q \tau_0^- + \theta X_{\tau_0^-}} \mathbf{1}_{\{\tau_0^-< \infty \}}\right]
=Z^{({q})}(y;\theta)-\frac{\psi(\theta)-q}{\theta-\Phi(q)}W^{({q})}(y),
\end{split}
\end{align}
where $y\in\R$, $\theta\geq 0$, and $q \geq 0$.
Similar results have been obtained for the Poisson observation case.
Recall that $\mathcal{T} :=(T_n^\lambda; n \geq 1)$ is
the set of jump times of an independent Poisson process. We define
\begin{align}
\tilde{T}_{z}^-  := \inf \left\{ S \in \mathcal{T} : X_{S} < z \right\}, \quad z \in \R.
\label{default_time}
\end{align}
By equation (14) of Theorem 3.1 in \cite{Albrecher}, for $\theta \geq 0$ and $y \in \mathbb{R}$,
\begin{align}\label{fun_gamma}
\begin{split}
J^{(q, \lambda)}(y; \theta) &:= \E_y\left[ e^{-q \tilde{T}_0^- + \theta X_{\tilde{T}_0^-}} \mathbf{1}_{\{\tilde{T}_0^-< \infty \}} \right] \\
&=\frac{\lambda}{\lambda +{q}-\psi(\theta)}\left(Z^{({q})}(y;\theta)-Z^{({q})}(y;\Phi({q}+\lambda))\frac{\psi(\theta)-{q}}{\lambda}\frac{\Phi({q}+\lambda)-\Phi({q})}{\theta-\Phi({q})}\right) \\
&=\Big[1-\frac{(\psi(\theta)-q)}{(\theta-\Phi(q))} \frac{(\Phi(\lambda+q)-\theta )}{(\lambda+q-\psi(\theta))}\Big]Z^{(q)}(y;\theta) \\
&\hspace{0.85cm}-\frac{(\psi(\theta)-q)}{(\theta-\Phi(q))} \frac{(\Phi(\lambda+q)-\Phi(q))}{(\lambda+q-\psi(\theta))}\big(Z^{(q)}(y;\Phi(\lambda+q)) -Z^{(q)}(y;\theta)\big).
\end{split}
\end{align}

\begin{remark}\label{remark_gamma_0} (1) We have
\begin{align*}
\begin{split}
J^{(q, \lambda)}(0; 1)
&=\frac{\lambda}{\lambda +{q}-\psi(1)} - \frac{\psi(1)-{q}}{\lambda +{q}-\psi(1)}  \frac{\Phi({q}+\lambda)-\Phi({q})}{1-\Phi({q})}=1-\frac{\psi(1)-{q}}{\lambda +{q}-\psi(1)}  \frac{\Phi({q}+\lambda)-1}{1-\Phi({q})} > 0, \\
J^{(q, \lambda)}(0; 0)
&=\frac{\lambda}{\lambda +{q}} - \frac{{q}}{\lambda +{q}}  \frac{\Phi({q}+\lambda)-\Phi({q})}{\Phi({q})} =1- \frac{{q}}{\lambda +{q}}  \frac{\Phi({q}+\lambda)}{\Phi({q})} > 0,
\end{split}
\end{align*}
where the positivity holds 
	by the probabilistic expression of $J^{(q,\lambda)}$ as in \eqref{fun_gamma}.

(2) 
We have
\begin{align} \label{J_less_than_1}
J^{(q, \lambda)}(y; \theta) < 1, \quad q > 0, \, \theta \geq 0, \, y \in \mathbb{R}.
\end{align}
To see this,
by the memoryless property of the exponential random variable, we can write, for some independent exponential random variable $\mathbf{e}_\lambda$, the first observation time at which $X$ is below zero is $\tau_0^- + \mathbf{e}_\lambda$ and hence $\tilde{T}_0^-$ is bounded from below by an exponential random variable. In addition, we must have $X_{\tilde{T}_0^-} \leq 0$ $\mathbb{P}_y$-a.s. and hence we have \eqref{J_less_than_1}.
\end{remark}

In order to write the equity value, we obtain an expression for
\begin{align} \label{Lambda_def}
\Lambda^{(r,\lambda)}(y,z) &:= \E_y\left[ \int_0^{\tilde{T}_{z}^-} e^{-rt}\mathbf{1}_{\{X_t \geq \log V_T\}}  \diff t\right], \quad y, z \in \mathbb{R}.
\end{align} 
In Appendix \ref{proof_prop_lambdaidentified}, we obtain the resolvent measure killed at $\tilde{T}_{z}^-$ and the following result as a corollary.
\begin{proposition}\label{lambdaidentified}
Fix $y, z \in \mathbb{R}$. For $V_T > 0$, we have
\begin{align*}
\Lambda^{(r,\lambda)}(y,z)
& = Z^{(r)}(y-z;\Phi(r+\lambda))\frac  {\Phi(r+\lambda)- \Phi(r)} {\lambda}
\\ &\quad \times \Big( \frac 1 {\Phi(r)} Z^{(r+\lambda)}(z-\log V_T;\Phi(r)) - \frac \lambda {\Phi(r)} \overline{W}^{(r+\lambda)}(z-\log V_T)  \Big)
\\
&\quad- \overline{W}^{(r+\lambda)}(y- \log V_T) \mathbf{1}_{\{z > \log V_T\}} -  \overline{W}^{(r)}(y-\log V_T)  \mathbf{1}_{\{ z \leq \log V_T \}} \\ &\quad+\lambda \mathbf{1}_{\{ z > \log V_T\}} \int_0^{y-z}W^{(r)}(y-z-u)   \overline{W}^{(r+\lambda)}(u+z-\log V_T) \diff u,
\end{align*}
where $\overline{W}^{(q)}(y) := \int_0^y W^{(q)}(u) \diff u$ for all $q > 0$ and $y \in \R$. 

For $V_T = 0$, we have
$\Lambda^{(r,\lambda)}(y,z)=(1-J^{(r,\lambda)}(y-z;0))/r.$


\end{proposition}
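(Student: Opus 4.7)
The plan is to first reduce to the case $z = 0$ by spatial homogeneity of $X$ and then compute the $r$-resolvent of $X$ killed at $\tilde{T}_0^-$. Shifting $X$ by $-z$ replaces $y$ with $y - z$ and the threshold $\log V_T$ with $a := \log V_T - z$, so that
\[
\Lambda^{(r,\lambda)}(y, z) = \int_a^\infty U^{(r,\lambda)}(y - z, dx), \qquad U^{(r,\lambda)}(\eta, dx) := \E_\eta\Big[\int_0^{\tilde{T}_0^-} e^{-rt} \mathbf{1}_{\{X_t \in dx\}} \, dt\Big].
\]
The case $V_T = 0$ is essentially free: the indicator is identically one, so
\[
\Lambda^{(r,\lambda)}(y, z) = \E_{y-z}\Big[\frac{1 - e^{-r\tilde{T}_0^-} \mathbf{1}_{\{\tilde{T}_0^- < \infty\}}}{r}\Big] = \frac{1 - J^{(r)}(y - z; 0)}{r}
\]
by \eqref{fun_gamma}. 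For $V_T > 0$, everything hinges on an explicit formula for $U^{(r,\lambda)}$.

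To derive $U^{(r,\lambda)}$, I would exploit the occupation-time representation
\[
\E_\eta\Big[\int_0^{\tilde{T}_0^-} e^{-rt} g(X_t) \, dt\Big] = \E_\eta\Big[\int_0^\infty e^{-rt - \lambda \int_0^t \mathbf{1}_{\{X_s < 0\}} ds} g(X_t) \, dt\Big],
\]
which is a direct consequence of the memoryless property of the Poisson observation times (the same identity underlying \eqref{fun_gamma}). This places the computation squarely in the framework of the doubly discounted resolvents for spectrally negative L\'evy processes of Loeffen--Renaud--Zhou type, from which one can read off an explicit density involving $W^{(r)}$, $W^{(r+\lambda)}$ and $Z^{(r)}(\,\cdot\,;\Phi(r+\lambda))$. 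An equivalent route is to decompose at $\tau_0^- = \inf\{t : X_t < 0\}$: on $[0, \tau_0^-]$ one uses the classical killed resolvent density $e^{-\Phi(r) x} W^{(r)}(\eta) - W^{(r)}(\eta - x)$ supported on $x \geq 0$, and on $(\tau_0^-, \tilde{T}_0^-)$ one applies the strong Markov property at $\tau_0^-$ and splices in \eqref{fun_gamma} to describe what happens once $X$ enters the red zone before the first Poisson observation below zero. Either route yields a resolvent measure with distinct expressions on $[0,\infty)$ and $(-\infty,0)$.

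The stated formula then follows by integrating $\mathbf{1}_{\{x \geq a\}}$ against $U^{(r,\lambda)}(y-z, dx)$. The dichotomy $\mathbf{1}_{\{z > \log V_T\}}$ versus $\mathbf{1}_{\{z \leq \log V_T\}}$ in the statement corresponds exactly to the sign of $a$: when $a < 0$ the integral picks up both halves of the resolvent (giving the extra convolution term against $W^{(r)}$), while for $a \geq 0$ only the ``healthy'' half on $[0,\infty)$ contributes. The anti-derivatives $\overline{W}^{(r)}$ and $\overline{W}^{(r+\lambda)}$ arise precisely from these integrations, and the leading factor $Z^{(r)}(y - z;\Phi(r+\lambda))\,(\Phi(r+\lambda) - \Phi(r))/\lambda$ is the ``survival'' factor already visible on the right-hand side of \eqref{fun_gamma}. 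The main obstacle, which I would defer to an appendix, is the clean derivation of $U^{(r,\lambda)}$ on each half-line and the bookkeeping needed to rearrange the pieces into the compact symmetric form displayed in the statement; the second-scale-function identity hidden in $Z^{(r+\lambda)}(z-\log V_T;\Phi(r)) - \lambda \overline{W}^{(r+\lambda)}(z-\log V_T) \Phi(r)/\Phi(r)$ should in particular be the algebraic shortcut that consolidates the contributions coming from the two halves.
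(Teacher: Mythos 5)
Your overall plan matches the paper's: reduce to $z=0$ by spatial homogeneity, handle $V_T=0$ directly from \eqref{fun_gamma}, and for $V_T>0$ compute the $r$-resolvent killed at $\tilde T_0^-$ and integrate its density over $[a,\infty)$ with $a=\log V_T-z$. The $V_T=0$ case you give is complete. The occupation-time representation you propose is also a legitimate (and potentially cleaner) alternative to the paper's derivation.

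The problem is that the two hard parts are named but not done, and they are precisely where the content of the proposition lives. First, the resolvent density $R^{(r,\lambda)}$ itself is not established: in the paper this requires a two-stage strong Markov decomposition (at $\tau_0^-$, then at $\tau_0^+$ for sub-zero starting points), expressing $g(x)$ in terms of $g(0)$, and then a separate decomposition at the first Poisson epoch $T_1^\lambda$ to solve a fixed-point equation for $g(0)$; none of this, nor an LRZ-style substitute, is carried out. Second, the integration and bookkeeping are deferred wholesale, but they are not routine. To get the $\mathbf{1}_{\{z\le\log V_T\}}\,\overline W^{(r)}(y-\log V_T)$ term you need the convolution identity
\begin{equation*}
\overline W^{(r+\lambda)}(y)-\lambda\int_0^y W^{(r)}(y-u)\,\overline W^{(r+\lambda)}(u)\,\diff u=\overline W^{(r)}(y),
\end{equation*}
and to produce the leading bracket you must compute
\begin{equation*}
\int_{-\infty}^{z_T}H^{(r+\lambda)}(y;\Phi(r))\,\diff y
=\frac{1}{\Phi(r)}\Big(Z^{(r+\lambda)}(z_T;\Phi(r))-\lambda\frac{\Phi(r+\lambda)}{\Phi(r+\lambda)-\Phi(r)}\overline W^{(r+\lambda)}(z_T)\Big),
\end{equation*}
and then cancel the $\overline W^{(r+\lambda)}(z_T)$ contribution from the other half. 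These identities are not "algebraic shortcuts" that drop out automatically; without proving them the displayed formula cannot be verified. As written, the proposal is a correct outline with the essential computations missing.
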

%

\subsection{Expression for the equity value in terms of the scale function} \label{subsection_computation_equity_scale}

Using the identities in Section \ref{section_fluctuation_identities}, the equity value \eqref{equity} can be written as follows. Here, we focus on the case $V_B > 0$.  The case $V_B = 0$ (for which, as we will see, only the case $V_T = 0$ needs to be considered) is given later in \eqref{equity_value_0}.


First by \eqref{fun_gamma},  we have, for $q = r$ and $q = r+m$,
\begin{align*}
\E \left[ e^{-q T_{V_B}^-}V_{T_{V_B}^-} \mathbf{1}_{\{T_{V_B}^-  < \infty \}}   \right] &= V_B J^{(q, \lambda)} \Big(\log \frac V {V_B}; 1 \Big) \quad \textrm{and} \quad
\E \left[ e^{-q T_{V_B}^-} \mathbf{1}_{\{T_{V_B}^-  < \infty \}}   \right] &= J^{(q, \lambda)} \Big(\log \frac V {V_B}; 0 \Big).
\end{align*}
In addition, 
by \eqref{Lambda_def},
\begin{align*}
\E \left[ \int_0^{T_{V_B}^- } e^{-rt} \mathbf{1}_{\{V_t \geq  V_T \}}  \diff t\right] = \Lambda^{(r,\lambda)} (\log V, \log V_B).
\end{align*}
Hence, we can write
\begin{align} \label{value_debt_asset}
\begin{split}
\mathcal{D} (V; V_B)
&= \frac {P \rho+ p} {r+m}  \Big(1- J^{(r+m, \lambda)}\Big(\log \frac {V} {V_B};0 \Big) \Big) + \left(1-\alpha  \right) V_B J^{(r+m, \lambda)} \Big(\log \frac {V} {V_B}; 1 \Big), \\
\mathcal{V}(V;V_B)
&= V + P \kappa \rho \Lambda^{(r,\lambda)}(\log V, \log V_B) -  \alpha  V_B J^{(r, \lambda)}\Big(\log \frac {V} {V_B}; 1 \Big),
\end{split}
\end{align}
and therefore, by taking their difference, the equity value is 
\begin{align} \label{equity_value}
\begin{split}
\mathcal{E}(V;V_B)
&= V + P \kappa \rho \Lambda^{(r,\lambda)}(\log V, \log V_B) -  \alpha  V_B J^{(r, \lambda)}\Big(\log \frac {V} {V_B}; 1\Big) \\
&\qquad- \frac {P \rho+ p} {r+m}  \Big(1- J^{(r+m, \lambda)}\Big(\log \frac {V} {V_B};0 \Big) \Big) - \left(1-\alpha  \right) V_B J^{(r+m, \lambda)}\Big(\log \frac {V} {V_B}; 1 \Big).
\end{split}
\end{align}
	


\section{Optimal barrier} \label{section_optimal_barrier}

Having the equity value $\mathcal{E}(V;V_B)$ given in \eqref{equity_value} identified using equation \eqref{fun_gamma} and Proposition \ref{lambdaidentified}, we are ready to find
the optimal barrier $V_B^*$  maximizing it.
Our objective in this paper is to show that the optimal barrier is $V_B^*$ such that
\begin{align}\label{optimalB*}
\mathcal{E}(V_B^*;V_B^*)=0,
\end{align}
if it exists, where, by \eqref{equity_value} and Remark \ref{remark_gamma_0}(1), for $V_B > 0$,
\begin{align} \label{E_B_B}
\begin{split}
\mathcal{E}(V_B;V_B)  &= V_B + P \kappa \rho \Lambda^{(r,\lambda)}(\log V_B, \log V_B) \\ &\qquad-  \alpha  V_B J^{(r, \lambda)}(0; 1) - \frac {P \rho+ p} {r+m}  (1- J^{(r+m, \lambda)}(0;0) ) - \left(1-\alpha  \right) V_B J^{(r+m, \lambda)}(0; 1) \\
&= V_B [1 -  \alpha J^{(r, \lambda)}(0; 1) - (1-\alpha) J^{(r+m, \lambda)}(0; 1)] +  P \kappa \rho \Lambda^{(r,\lambda)}(\log V_B, \log V_B) \\&\qquad- \frac {P \rho+ p}{\lambda +r+m}  \frac{\Phi(r+m+\lambda)}{\Phi(r+m)}.
\end{split}
\end{align}
\subsection{Existence} \label{section_existence}


We first show the condition for the existence of $V_B^*$ satisfying \eqref{optimalB*}.  To this end, we show the following result; the proof is given in
Appendix \ref{proof_lemma_Lambda_limit}.
\begin{lemma}
	\label{lemma_Lambda_limit} 
The mapping $z \mapsto \Lambda^{(r,\lambda)}(z,z)$ is non-decreasing on $\R$ with the limit
\begin{align*}
\lim_{z \downarrow -\infty }\Lambda^{(r,\lambda)}(z,z)
= \begin{cases} 0 & \textrm{if } \;  V_T > 0, \\ \frac{1}{\lambda +r}  \frac{\Phi(r+\lambda)}{\Phi(r)} &\textrm{if } \;  V_T = 0. \end{cases}
\end{align*}
\end{lemma}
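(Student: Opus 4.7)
The plan is to exploit the spatial homogeneity of the spectrally negative L\'evy process $X$ to rewrite $\Lambda^{(r,\lambda)}(z,z)$ as an expectation that depends on $z$ only through the indicator inside the integrand, and then apply a monotone/dominated convergence argument. This avoids working directly with the (somewhat unwieldy) explicit expression given in Proposition \ref{lambdaidentified}.

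Concretely, under $\p_z$, the process $(X_t)_{t \geq 0}$ has the same law as $(z + \tilde{X}_t)_{t \geq 0}$ where $\tilde{X}$ starts from $0$. The Poissonian passage time below $z$ of $X$ then becomes the passage time below $0$ of $\tilde{X}$, and the condition $\{X_t \geq \log V_T\}$ becomes $\{\tilde{X}_t \geq \log V_T - z\}$. Consequently,
\begin{align*}
\Lambda^{(r,\lambda)}(z,z) = \e_0\Bigl[\int_0^{\tilde{T}_0^-} e^{-rt} 1_{\{\tilde{X}_t \geq \log V_T - z\}} \diff t\Bigr].
\end{align*}
Since the integrand is nondecreasing in $z$ pointwise (as $\log V_T - z$ is nonincreasing in $z$) and the bound $e^{-rt}$ is integrable on $[0,\infty)$, monotone convergence yields that $z\mapsto \Lambda^{(r,\lambda)}(z,z)$ is nondecreasing on $\R$.

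For the limit as $z\downarrow -\infty$, I split on whether $V_T > 0$ or $V_T = 0$. If $V_T > 0$, then $\log V_T - z \to +\infty$ while each realized path $\tilde{X}_t(\omega)$ is finite for each $t < \tilde{T}_0^-(\omega)$; hence the indicator tends pointwise to $0$ and, since the integrand is dominated by $e^{-rt}$, dominated convergence gives $\Lambda^{(r,\lambda)}(z,z) \to 0$. If $V_T = 0$ the indicator equals $1$ identically, so
\begin{align*}
\Lambda^{(r,\lambda)}(z,z) = \e_0\Bigl[\int_0^{\tilde{T}_0^-} e^{-rt} \diff t\Bigr] = \frac{1 - J^{(r)}(0;0)}{r},
\end{align*}
independently of $z$. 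Plugging in the formula $J^{(r)}(0;0) = 1 - \tfrac{r}{\lambda+r}\tfrac{\Phi(r+\lambda)}{\Phi(r)}$ from Remark \ref{remark_gamma_0}(1) gives the stated value $\tfrac{1}{\lambda+r}\tfrac{\Phi(r+\lambda)}{\Phi(r)}$.

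There is no real obstacle here; the only subtle point is justifying the interchange of limit and expectation for the $V_T>0$ case, which is handled by dominated convergence since $e^{-rt}$ is integrable and $\tilde{T}_0^-$ is a.s. finite or, if not, the indicator still tends to $0$ pathwise on $[0,\tilde{T}_0^-)$.
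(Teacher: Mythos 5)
Your proof is correct and follows essentially the same route as the paper: both rewrite $\Lambda^{(r,\lambda)}(z,z)$ via spatial homogeneity as $\E_0[\int_0^{\tilde{T}_0^-} e^{-rt}1_{\{X_t \geq \log V_T - z\}}\diff t]$, read off monotonicity from the indicator, use bounded/dominated convergence for the $V_T>0$ limit, and for $V_T=0$ reduce to $(1-J^{(r)}(0;0))/r$ and evaluate via Remark \ref{remark_gamma_0}(1).
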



\begin{figure}[htbp]
\begin{center}
\begin{minipage}{1.0\textwidth}
\centering
\begin{tabular}{cc}
 \includegraphics[scale=0.5]{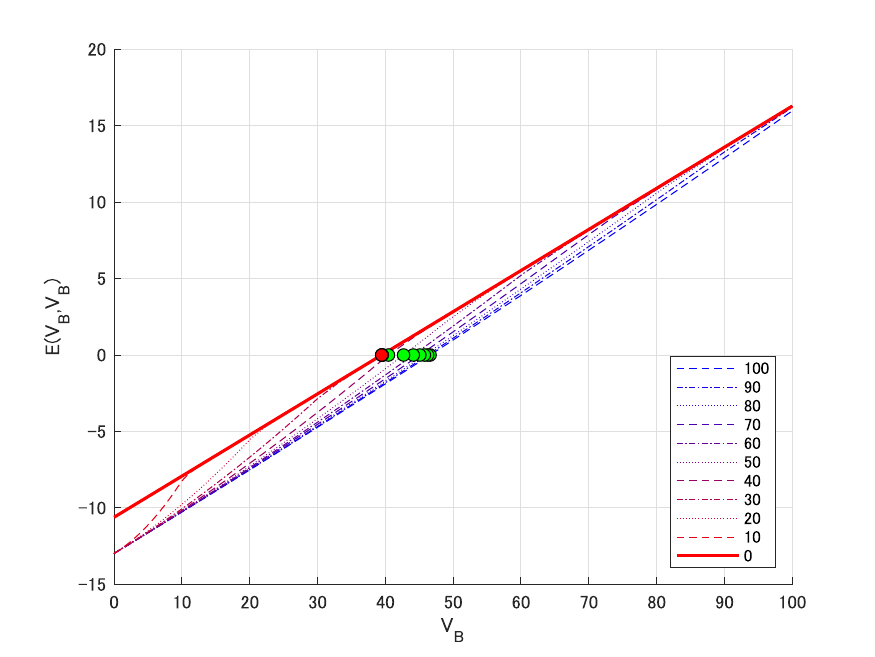} & \includegraphics[scale=0.5]{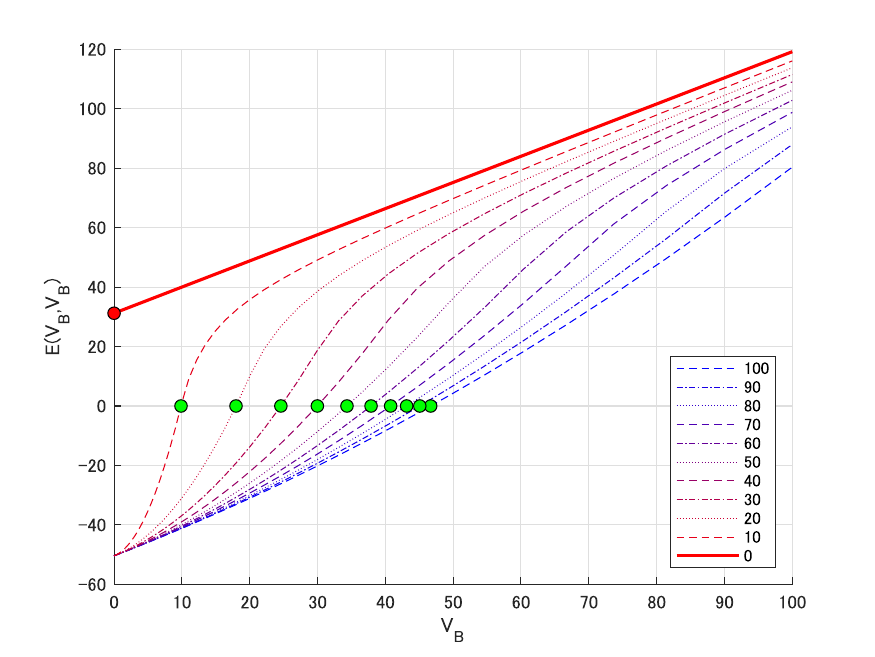}
\end{tabular}
\end{minipage}
\end{center}
\caption{Plots of $V_B \mapsto \mathcal{E}(V_B;V_B)$ for $V_T = 0,10,20,\ldots, 100$. Solid lines show for the case $V_T=0$
and dotted lines for the other cases. The points at $V_B^*$ are indicated by circles.   The left plot is based on the parameter set in \textbf{Case B} in Section \ref{section_numerics} (except $V_T$), and achieves $V_B^* > 0$ for all cases. The right plot is based on the same parameters except that we set $\kappa = 0.9999$, $m = 10$, $\rho = 0.2$ and $\lambda = 0.1$ to achieve $V_B^* = 0$ when $V_T = 0$.} \label{fig_monotonicity}
\end{figure}

This lemma leads to the following proposition. For numerical illustration, see Figure \ref{fig_monotonicity}.
\begin{proposition} \label{E_monotonicity} 

The mapping $V_B \mapsto \mathcal{E}(V_B;V_B)$ is strictly increasing on $(0, \infty)$ with the limit:
\begin{align*}
\lim_{V_B \downarrow 0 }\mathcal{E}(V_B;V_B)
&= \begin{cases} - \frac {P \rho+ p}{\lambda +r+m}  \frac{\Phi(r+m+\lambda)}{\Phi(r+m)} & \textrm{if } \;  V_T > 0, \\  \frac{P \kappa \rho}{\lambda +r}  \frac{\Phi(r+\lambda)}{\Phi(r)}-\frac {P \rho+ p}{\lambda +r+m}  \frac{\Phi(r+m+\lambda)}{\Phi(r+m)} & \textrm{if } \;  V_T = 0, \end{cases} \\
\lim_{V_B \uparrow \infty }\mathcal{E}(V_B;V_B) &= \infty.
\end{align*}
\end{proposition}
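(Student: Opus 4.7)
The plan is to exploit the explicit form \eqref{E_B_B} of $\mathcal{E}(V_B; V_B)$ by splitting it as
\[\mathcal{E}(V_B; V_B) = c_1 V_B + c_2\, \Lambda^{(r,\lambda)}(\log V_B, \log V_B) + c_3,\]
where $c_1 := 1 - \alpha J^{(r)}(0;1) - (1-\alpha) J^{(r+m)}(0;1)$, $c_2 := P\kappa\rho > 0$, and $c_3 := -\frac{P\rho+p}{\lambda+r+m}\frac{\Phi(r+m+\lambda)}{\Phi(r+m)}$ is constant in $V_B$. The monotonicity claim then reduces to showing that each of the first two terms is non-decreasing in $V_B$, with at least one strictly so.

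The crucial step is verifying $c_1 > 0$. I will invoke Remark \ref{remark_gamma_0}(2), which gives the strict bounds $J^{(r)}(0;1) < 1$ and $J^{(r+m)}(0;1) < 1$. Taking the convex combination with weights $\alpha$ and $1-\alpha$ immediately yields $\alpha J^{(r)}(0;1) + (1-\alpha) J^{(r+m)}(0;1) < 1$, hence $c_1 > 0$. Consequently, the linear term $c_1 V_B$ is strictly increasing on $(0,\infty)$. For the second term, Lemma \ref{lemma_Lambda_limit} guarantees that $z \mapsto \Lambda^{(r,\lambda)}(z,z)$ is non-decreasing on $\R$; composing with the strictly increasing map $V_B \mapsto \log V_B$ preserves this, and adding the constant $c_3$ preserves strict monotonicity of the sum.

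For the limits as $V_B \downarrow 0$, I would note that $c_1 V_B \to 0$ and $\log V_B \to -\infty$, so Lemma \ref{lemma_Lambda_limit} gives $\Lambda^{(r,\lambda)}(\log V_B, \log V_B) \to 0$ when $V_T > 0$ and $\to \frac{1}{\lambda+r}\frac{\Phi(r+\lambda)}{\Phi(r)}$ when $V_T = 0$; combining with $c_3$ produces the two expressions stated. For the limit as $V_B \uparrow \infty$, strict positivity of $c_1$ forces $c_1 V_B \to \infty$, while the $\Lambda$ contribution is uniformly bounded by $1/r$ directly from its definition \eqref{Lambda_def} (since $\Lambda^{(r,\lambda)}(y,z) \leq \int_0^\infty e^{-rt}\, \diff t = 1/r$), and $c_3$ is finite. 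Hence $\mathcal{E}(V_B; V_B) \to \infty$.

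The main obstacle, though mild, is establishing the strict positivity of $c_1$: this is precisely where the probabilistic content of Remark \ref{remark_gamma_0}(2) -- namely that $\tilde{T}_0^-$ stochastically dominates an independent exponential clock, making the associated discounted expectation strictly less than one -- enters essentially. Once that inequality is in hand, the remaining steps are routine and hinge entirely on Lemma \ref{lemma_Lambda_limit} together with the boundedness of $\Lambda^{(r,\lambda)}$.
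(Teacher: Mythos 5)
Your proposal is correct and follows essentially the same route as the paper: decompose $\mathcal{E}(V_B;V_B)$ via the second equality of \eqref{E_B_B}, invoke Remark \ref{remark_gamma_0}(2) to get $c_1>0$, and use Lemma \ref{lemma_Lambda_limit} together with boundedness of $\Lambda^{(r,\lambda)}$ for the monotonicity and both limits. You merely spell out a few steps (the $1/r$ bound, the convex-combination argument) that the paper leaves implicit.
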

\begin{proof}
From Remark \ref{remark_gamma_0}(2),
we have
$1 -  \alpha J^{(r, \lambda)}(0; 1) - (1-\alpha) J^{(r+m, \lambda)}(0; 1) > 0$.
By this, Lemma \ref{lemma_Lambda_limit} and because $z \mapsto \Lambda^{(r,\lambda)}(z,z)$ is non-decreasing and bounded, the claim is immediate in view of the second equality of \eqref{E_B_B}.
\qed \end{proof}

Now by Proposition \ref{E_monotonicity}, we define the candidate optimal threshold $V_B^*$ formally, as follows.
\begin{enumerate}
\item For the case $V_T > 0$ and the case $V_T= 0$ with $P \kappa \rho \frac{1}{\lambda +r}  \frac{\Phi(r+\lambda)}{\Phi(r)}-\frac {P \rho+ p}{\lambda +r+m}  \frac{\Phi(r+m+\lambda)}{\Phi(r+m)} < 0$, we set $V_B^*>0$ such that
 $\mathcal{E}(V_B^*; V_B^*) = 0$, whose existence and uniqueness hold by Proposition \ref{E_monotonicity}. 
 \item For the case $V_T= 0$ with 
\begin{equation}\label{newref}
  \frac{P \kappa \rho}{\lambda +r}  \frac{\Phi(r+\lambda)}{\Phi(r)}-\frac {P \rho+ p}{\lambda +r+m}  \frac{\Phi(r+m+\lambda)}{\Phi(r+m)} \geq 0,
 \end{equation} we set $V_B^* =0$.
\end{enumerate}
The debt/firm/equity values for the case $V_B^* >0$ can be computed by \eqref{value_debt_asset} and \eqref{equity_value}.
For the case $V_B^* =0$, where necessarily $V_T=0$, we have, for all $V > 0$,
\begin{align*}
	\mathcal{D} (V; 0)&=\E\left[ \int_0^{\infty}  e^{-(r+m)t} \left( P \rho+ p \right) \diff t\right]=\frac{P \rho+ p}{r+m}, \\
	\mathcal{V}(V;0) &= V + \E \left[ \int_0^{\infty } e^{-rt}  P \kappa \rho \diff t\right]=V+ \frac{P \kappa \rho}{r},
\end{align*}
and therefore
\begin{align}\label{equity_value_0}
	\mathcal{E}(V;0)=V+ \frac{P \kappa \rho}{r}-\frac{P \rho+ p}{r+m}.
\end{align}

\subsection{Optimality}
For the rest of this section, we show the following one of our main results.
\begin{theorem} \label{theorem_main}
The barrier $V_B^*$ 
is optimal for the problem of maximizing \eqref{equity} subject to \eqref{constraint}. 
\end{theorem}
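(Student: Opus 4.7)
The plan is to show optimality in two stages: first, verify that $V_B^*$ satisfies the limited-liability constraint; second, prove that $V_B^*$ dominates every other admissible barrier. The central tool will be the strong Markov property of the L\'evy process $V$ together with the independence of the Poisson observation times $\mathcal{T}$, which allows clean decompositions of the equity value around competing bankruptcy times.

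I would first establish that $\mathcal{E}(V;V_B^*) \geq 0$ on $[V_B^*,\infty)$. In the case $V_B^* > 0$, the defining equation gives $\mathcal{E}(V_B^*;V_B^*)=0$, so it suffices to show that $V \mapsto \mathcal{E}(V;V_B^*)$ is non-decreasing on $[V_B^*,\infty)$. This should follow from \eqref{equity_value} by a direct computation of $\partial \mathcal{E}/\partial V$, exploiting the monotonicity of the scale functions $W^{(q)}$ and the partial derivatives of $Z^{(q)}(\cdot;\theta)$, together with the Laplace-transform identity \eqref{scale_function_laplace}. In the case $V_B^*=0$, one uses the explicit form \eqref{equity_value_0} together with the defining inequality \eqref{newref}.

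For dominance, I would first rule out any $V_B \in (0,V_B^*)$ when $V_B^* > 0$: by the strict monotonicity in Proposition \ref{E_monotonicity}, we have $\mathcal{E}(V_B;V_B) < \mathcal{E}(V_B^*;V_B^*) = 0$, violating \eqref{constraint} at $V=V_B$. Then, for any admissible $V_B \geq V_B^*$ and any $V \geq V_B$, the strong Markov property applied at $\overline{T}_{V_B}^-$ yields
\[
\mathcal{E}(V;V_B^*) - \mathcal{E}(V;V_B) = \mathbb{E}\bigl[e^{-r\overline{T}_{V_B}^-}\,\mathcal{E}\bigl(V_{\overline{T}_{V_B}^-};V_B^*\bigr)\,1_{\{\overline{T}_{V_B}^- < \infty\}}\bigr],
\]
since the coupon, tax, and debt-rollover cash flows to equity coincide under the two strategies up to $\overline{T}_{V_B}^-$, and at that instant strategy $V_B$ triggers bankruptcy (leaving equity holders with nothing) while strategy $V_B^*$ continues. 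The right-hand side is non-negative by the first step, combined with the convention $\mathcal{E}(V;V_B^*) = 0$ on $\{V < V_B^*\}$ coming from \eqref{D_V_when_V_less_V_B} and \eqref{equity}.

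The hard part will be the non-negativity step. Verifying $\mathcal{E}(V;V_B^*) \geq 0$ directly from the multi-term scale-function formula \eqref{equity_value} requires tracking the signs of the contributions from $W^{(r)}$, $W^{(r+m)}$, the $Z^{(q)}(\cdot;\theta)$ pieces, and the resolvent expression from Proposition \ref{lambdaidentified}; the task reduces to showing that a specific combination of these is non-negative on $[V_B^*,\infty)$. A cleaner alternative would be to recognize $\mathcal{E}(\cdot;V_B^*)$ as the value function of an auxiliary stopping or control problem whose payoff is manifestly non-negative, but setting up such a representation in the Poisson-observation setting may itself require non-trivial analytic work.
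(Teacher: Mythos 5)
Your three-step framework (rule out $V_B<V_B^*$ via the monotonicity of $V_B\mapsto\mathcal{E}(V_B;V_B)$, prove feasibility of $V_B^*$, prove dominance over larger barriers) matches the paper's strategy, and the first step is exactly Proposition \ref{feasibility_B_less_star}. However, your dominance argument rests on a decomposition that is false in the Leland--Toft model. The proposed identity
\[
\mathcal{E}(V;V_B^*) - \mathcal{E}(V;V_B) \;=\; \mathbb{E}\bigl[e^{-r\overline{T}_{V_B}^-}\,\mathcal{E}\bigl(V_{\overline{T}_{V_B}^-};V_B^*\bigr)\,1_{\{\overline{T}_{V_B}^-<\infty\}}\bigr]
\]
would require every constituent of $\mathcal{E}$ to be discounted at the single rate $r$ through the stopping time $T_{V_B}^-$. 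But the debt terms in \eqref{value_debt_asset} are discounted at rate $r+m$, not $r$, because bonds mature continuously at rate $m$. Applying the strong Markov property at $T_{V_B}^-$ to the actual formula \eqref{equity_value} therefore produces a mixture of $e^{-rT_{V_B}^-}$ and $e^{-(r+m)T_{V_B}^-}$ factors that cannot be collected into the single prefactor $e^{-rT_{V_B}^-}$ your identity needs; a direct computation shows the discrepancy is $\mathbb{E}\bigl[(e^{-rT_{V_B}^-}-e^{-(r+m)T_{V_B}^-})1_{\{\cdot\}}\bigl(\mathcal{D}(V_{T_{V_B}^-};V_B^*)-(1-\alpha)V_{T_{V_B}^-}\bigr)\bigr]$, which vanishes only when $m=0$ (the perpetual-debt Leland '94 model). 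In other words, precisely the feature that distinguishes Leland--Toft from Leland breaks your cash-flow-matching argument, so dominance cannot be read off this way.

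The paper instead proves dominance analytically (Propositions \ref{prop_E_derivative_B} and \ref{prop_E_der_B_negative}) by computing $\partial\mathcal{E}/\partial V_B$ in closed form via the scale functions and showing it is strictly negative on $V_B\geq V_B^*$; the key estimate is $L(\log V,\log V_B)>\mathcal{E}(V_B;V_B)\geq 0$, driven by the inequality \eqref{ineq_H} coming from the strict concavity of $\Phi$. That same derivative sign is then reused in the feasibility step: Proposition \ref{prop_E_derivative_x} expresses $\partial\mathcal{E}/\partial V$ in terms of $\partial\mathcal{E}/\partial V_B$, $J^{(q)}$, and the resolvent density $R^{(r,\lambda)}\geq 0$, yielding the lower bound $\partial\mathcal{E}/\partial V\geq 1 - V_B^*/V\geq 0$ on $[V_B^*,\infty)$. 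So the part you rightly flag as "the hard part" (nonnegativity of $\mathcal{E}(\cdot;V_B^*)$) is not attacked by brute-force sign-tracking in the paper; it is deduced from the very derivative estimate that also replaces your probabilistic dominance step. You will need both of these derivative identities --- there is no shortcut via the naive dynamic-programming decomposition.
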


To prove the optimality, it is sufficient to show the following:
\begin{enumerate}
\item If $V_{B}^* > 0$, every threshold $V_B < V_{B}^*$ violates the limited liability constraint \eqref{constraint}.
\item $V_B^*$ attains a higher equity value than any $V_B > V_B^*$ does.
\item $V_B^*$ is feasible.
\end{enumerate}

\begin{proposition} \label{feasibility_B_less_star} 
Suppose $V_B^* > 0$. For $V_B < V_B^*$,  the limited liability constraint \eqref{constraint} is not satisfied.
\end{proposition}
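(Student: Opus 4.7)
The plan is to exploit the strict monotonicity of $V_B \mapsto \mathcal{E}(V_B; V_B)$ from Proposition \ref{E_monotonicity}, testing the limited liability constraint at the natural boundary choice $V = V_B$.

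For $V_B \in (0, V_B^*)$, I would combine the defining identity $\mathcal{E}(V_B^*; V_B^*) = 0$ from \eqref{optimalB*} with the strict increase of $V_B \mapsto \mathcal{E}(V_B; V_B)$ on $(0, \infty)$ to obtain
\[
\mathcal{E}(V_B; V_B) < \mathcal{E}(V_B^*; V_B^*) = 0.
\]
Since $V = V_B$ trivially satisfies $V \geq V_B$, the constraint \eqref{constraint} fails at this point. This disposes of the entire open interval $(0, V_B^*)$ in one stroke.

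The remaining edge case $V_B = 0$ (possible only when $V_B^* > 0$) requires examining $\mathcal{E}(V; 0)$ directly. Since the bankruptcy time is almost surely infinite for the positive asset process, one has $\mathcal{D}(V; 0) = (P\rho + p)/(r+m)$ and $\mathcal{V}(V; 0) = V + P\kappa\rho\, \e\bigl[\int_0^\infty e^{-rt} \I{V_t \geq V_T} dt\bigr]$. For $V_T > 0$, dominated convergence gives $\mathcal{E}(V; 0) \to -(P\rho + p)/(r+m) < 0$ as $V \downarrow 0$, so the constraint is violated for small $V$. For $V_T = 0$, \eqref{equity_value_0} reduces $\mathcal{E}(V; 0)$ to an affine function of $V$ whose infimum as $V \downarrow 0$ must be shown negative using the defining condition of Case 1(b).

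The hardest part will be the last subcase: linking the $V_B^*$-defining Case 1(b) inequality (in $\lambda$-adjusted $\Phi$-quantities) to the plain inequality $P\kappa\rho/r < (P\rho + p)/(r+m)$ needed for infeasibility at $V_B = 0$. The monotonicity of $q \mapsto q/\Phi(q)$ (which follows from convexity of $\psi$ together with $\psi(0)=0$) yields $\Phi(q+\lambda)/[(\lambda+q)\Phi(q)] \leq 1/q$, providing one-sided control; completing the comparison may require a more refined argument than the clean monotonicity step used in the main case.
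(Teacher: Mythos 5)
Your main argument --- invoking the strict monotonicity of $V_B \mapsto \mathcal{E}(V_B;V_B)$ from Proposition~\ref{E_monotonicity}, the anchoring identity $\mathcal{E}(V_B^*;V_B^*)=0$, and testing the constraint at $V=V_B$ --- is exactly the paper's proof, and it is correct for every $V_B\in(0,V_B^*)$.

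The edge case $V_B=0$ that you raise is not treated in the paper's proof, which tacitly restricts to $V_B>0$: the quantity $\mathcal{E}(V_B;V_B)$ used in the argument is defined via formula~\eqref{E_B_B}, valid only for $V_B>0$, and the paper's chain of propositions handles $V_B=0$ only in connection with the candidate $V_B^*=0$. Your handling of the subcase $V_T>0$, $V_B=0$ (via dominated convergence giving $\mathcal{E}(V;0)\to -(P\rho+p)/(r+m)<0$) is sound. However, your worry about the subcase $V_T=0$, $V_B=0$ under Case~1(b) is not merely a matter of finding a ``more refined argument'': the needed implication actually fails. Writing $a=\tfrac{r}{\lambda+r}\tfrac{\Phi(r+\lambda)}{\Phi(r)}$ and $b=\tfrac{r+m}{\lambda+r+m}\tfrac{\Phi(r+m+\lambda)}{\Phi(r+m)}$, the paper establishes $0<a<b$, and the Case~1(b) condition is $\tfrac{P\kappa\rho}{r}\,a<\tfrac{P\rho+p}{r+m}\,b$. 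Since $a<b$, this is compatible with $\tfrac{P\kappa\rho}{r}\ge\tfrac{P\rho+p}{r+m}$ (e.g.\ take the two sides equal), in which case $\mathcal{E}(V;0)=V+\tfrac{P\kappa\rho}{r}-\tfrac{P\rho+p}{r+m}\ge V>0$ for all $V>0$ and the constraint \emph{is} satisfied at $V_B=0$. So the one-sided bound from $q\mapsto q/\Phi(q)$ that you mention cannot close this case. The upshot is that your proposal is complete and matches the paper precisely on the range $(0,V_B^*)$ that the paper's proof actually covers; your extra analysis correctly exposes that the statement, read as including $V_B=0$, would require either a separate direct comparison $\mathcal{E}(V;V_B^*)\ge\mathcal{E}(V;0)$ (not supplied by the paper) or a restriction of the proposition to $V_B>0$.
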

\begin{proof}
	By the (strict) monotonicity as in Proposition \ref{E_monotonicity} and because  $\mathcal{E}(V_B^*; V_B^*) =0$ (given that $V_B^* > 0$), we have $\mathcal{E}(V_B ;V_B) < 0$ for $V_B < V_B^*$.  
\qed \end{proof}

The proof of the following is given in Appendix \ref{proof_prop_E_derivative_B}.
\begin{proposition} \label{prop_E_derivative_B}
For $V > V_B > 0$, we have
\begin{align} \label{eqn_E_derivative_B}
	&\frac{\partial}{\partial V_B} \mathcal{E}(V;V_B) =-(\Phi(r+m+\lambda)-\Phi(r+m))H^{(r+m)}\Big(\log \frac V {V_B}; \Phi(r+m+\lambda) \Big) \frac {L(\log V,\log V_B)}  {V_B}
\end{align}
where $H^{(r+m)}$ is as in \eqref{der_gamma_0_0} and, for $x,z \in \mathbb{R}$,
\begin{align*}
L(x,z) &:= \frac{H^{(r)}(x-z; \Phi(r+\lambda))}{H^{(r+m)}(x-z; \Phi(r+m+\lambda))} \frac{\Phi(r+\lambda)-\Phi(r)}{\Phi(r+m+\lambda)-\Phi(r+m)}\\ &\times \Bigg(\alpha(1-J^{(r, \lambda)}(0;1))e^{z}+P \kappa \rho \frac  {\Phi(r+\lambda)- \Phi(r)} {\lambda}\int_{-\infty}^{z-\log V_T} H^{(r+\lambda)}(y; \Phi(r))\diff y
 \Bigg)\\&+(1-\alpha)(1-J^{(r+m, \lambda)}(0;1)) e^{z}-
	\frac {P \rho+ p} {r+m}(1-J^{(r+m, \lambda)}(0;0)).
\end{align*}



\end{proposition}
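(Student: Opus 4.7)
The plan is a direct but careful computation. Set $x := \log V$, $z := \log V_B$, $y := x - z = \log(V/V_B)$, so that $V_B \partial_{V_B} = \partial_z$. Differentiating \eqref{equity_value} in $z$,
\begin{align*}
V_B\,\partial_{V_B}\mathcal{E}(V; V_B) &= P\kappa\rho\,\partial_z \Lambda^{(r,\lambda)}(x,z) - \alpha e^{z}\bigl[J^{(r)}(y;1) - J^{(r)}_y(y;1)\bigr] \\
&\quad - \tfrac{P\rho+p}{r+m}\, J^{(r+m)}_y(y;0) - (1-\alpha) e^{z}\bigl[J^{(r+m)}(y;1) - J^{(r+m)}_y(y;1)\bigr],
\end{align*}
where $J^{(q)}_y := \partial_y J^{(q)}$. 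The task is to simplify each contribution and match it against the claimed factorization.

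For the three $J$-terms, I differentiate \eqref{fun_gamma} in $y$ using $\partial_y Z^{(q)}(y;\theta) = \theta Z^{(q)}(y;\theta) + (q-\psi(\theta))W^{(q)}(y)$ together with $\psi(\Phi(q+\lambda)) = q+\lambda$. Substituting the decomposition $Z^{(q)}(y;\Phi(q+\lambda)) = H^{(q)}(y;\Phi(q+\lambda)) + \tfrac{\lambda}{\Phi(q+\lambda)-\Phi(q)} W^{(q)}(y)$, the $W^{(q)}$-contributions cancel entirely, yielding the compact identities
\begin{align*}
J^{(q)}(y;1) - J^{(q)}_y(y;1) &= \bigl(1-J^{(q)}(0;1)\bigr)\bigl(\Phi(q+\lambda)-\Phi(q)\bigr)\,H^{(q)}\bigl(y;\Phi(q+\lambda)\bigr), \\
-J^{(q)}_y(y;0) &= \bigl(1-J^{(q)}(0;0)\bigr)\bigl(\Phi(q+\lambda)-\Phi(q)\bigr)\,H^{(q)}\bigl(y;\Phi(q+\lambda)\bigr),
\end{align*}
where the scalar prefactors are read off from Remark \ref{remark_gamma_0}(1). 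Applied at $q = r$ and $q = r+m$, these produce precisely the non-$P\kappa\rho$ terms of $L(x,z)$ upon factoring out $-(\Phi(r+m+\lambda)-\Phi(r+m))\,H^{(r+m)}(y;\Phi(r+m+\lambda))/V_B$.

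It remains to establish the identity
\[
\partial_z \Lambda^{(r,\lambda)}(x,z) = -\tfrac{(\Phi(r+\lambda)-\Phi(r))^2}{\lambda}\,H^{(r)}(y;\Phi(r+\lambda)) \int_{-\infty}^{z - \log V_T} H^{(r+\lambda)}(y';\Phi(r))\,\diff y',
\]
which, via the integral identity $\int_0^u Z^{(q)}(w;\theta)\diff w = \theta^{-1}[Z^{(q)}(u;\theta) - 1 - (q-\psi(\theta))\overline{W}^{(q)}(u)]$, accounts exactly for the integral displayed in the definition of $L(x,z)$. I differentiate the expression for $\Lambda^{(r,\lambda)}$ from Proposition \ref{lambdaidentified} in $z$ term by term (focusing on $V_T > 0$; the degenerate case $V_T = 0$ follows immediately from $\Lambda^{(r,\lambda)}(y,z) = (1 - J^{(r)}(y-z;0))/r$ via the $-J^{(r)}_y(\cdot;0)$ identity above). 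The two terms $\overline{W}^{(r+\lambda)}(x-\log V_T)$ and $\overline{W}^{(r)}(x-\log V_T)$ are $z$-independent, and the jumps of the indicators $1_{\{z>\log V_T\}}$ and $1_{\{z\leq\log V_T\}}$ at $z = \log V_T$ cancel. After the substitution $v = u + z - \log V_T$, Leibniz's rule applied to the integral in the last line contributes $-\lambda W^{(r)}(y)\overline{W}^{(r+\lambda)}(z-\log V_T)$. The remaining product $Z^{(r)}(y;\Phi(r+\lambda))\cdot (\cdots)(z-\log V_T)$ is handled by the product rule, using $\partial_y Z^{(r)}(y;\Phi(r+\lambda)) = \Phi(r+\lambda)Z^{(r)}(y;\Phi(r+\lambda)) - \lambda W^{(r)}(y)$ and $\partial_u Z^{(r+\lambda)}(u;\Phi(r)) = \Phi(r)Z^{(r+\lambda)}(u;\Phi(r)) + \lambda W^{(r+\lambda)}(u)$ (where $\psi(\Phi(r)) = r$ is crucial). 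Collecting the coefficients of $Z^{(r)}(y;\Phi(r+\lambda))$ and $W^{(r)}(y)$ and substituting $H^{(r)}(y;\Phi(r+\lambda)) = Z^{(r)}(y;\Phi(r+\lambda)) - \tfrac{\lambda}{\Phi(r+\lambda)-\Phi(r)} W^{(r)}(y)$ yields the claim. The main obstacle is bookkeeping in this last step, but the cancellations driven by $\psi(\Phi(r)) = r$ and $\psi(\Phi(r+\lambda)) = r+\lambda$ render the computation mechanical once one commits to expressing everything in terms of the four ``atomic'' objects $Z^{(r)}(y;\Phi(r+\lambda))$, $W^{(r)}(y)$, $Z^{(r+\lambda)}(z-\log V_T;\Phi(r))$, and $\overline{W}^{(r+\lambda)}(z-\log V_T)$.
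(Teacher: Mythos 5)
Your proposal is correct and follows essentially the same route as the paper: you differentiate $\mathcal{E}$ term by term, establish the $J$-derivative identities (your compact forms are equivalent, via Remark \ref{remark_gamma_0}(1), to the paper's Lemmas \ref{cztery} and \ref{pochodne}), and verify $\partial_z \Lambda^{(r,\lambda)}(x,z) = -\tfrac{(\Phi(r+\lambda)-\Phi(r))^2}{\lambda}H^{(r)}(x-z;\Phi(r+\lambda))\mathcal{H}(z)$, matching Lemma \ref{der_Lambda}. The only cosmetic difference is that you differentiate the final expression from Proposition \ref{lambdaidentified}, whereas the paper differentiates the intermediate form $Z^{(r)}(x-z;\Phi(r+\lambda))\tfrac{\Phi(r+\lambda)-\Phi(r)}{\lambda}\mathcal{H}(z) - \mathcal{I}(x,z)$; the two are algebraically equivalent.
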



The proof of the following results are given in Appendices \ref{proof_prop_E_der_B_negative} and \ref{proof_prop_E_derivative_x}.
\begin{proposition} \label{prop_E_der_B_negative}
	Suppose $V_B > V_B^* \geq  0$. 
	We have
$\frac \partial {\partial V_B} \mathcal{E}(V;V_B) < 0$ for $V > V_B$.
Hence, $\mathcal{E}(V; V_B) < \mathcal{E}(V; V_B^*)$ for all $V > V_B$.
\end{proposition}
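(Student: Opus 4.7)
The plan is to start from the closed-form derivative in Proposition \ref{prop_E_derivative_B} and reduce the claim to the strict positivity of $L(\log V, \log V_B)$ on $\{V > V_B > V_B^*\}$. The prefactor $(\Phi(r+m+\lambda) - \Phi(r+m))\,H^{(r+m)}(\log(V/V_B); \Phi(r+m+\lambda))/V_B$ is strictly positive: $\Phi$ is strictly increasing on $[0,\infty)$ by the strict convexity of $\psi$, and $H^{(r+m)}(y; \Phi(r+m+\lambda)) > 0$ for $y > 0$ since it is a nonnegative Laplace-type expectation over $\{\tau_0^- < \infty\}$, which has positive $\p_y$-probability. Consequently $\frac{\partial}{\partial V_B}\mathcal{E}(V; V_B)$ and $L(\log V, \log V_B)$ have opposite signs, so it remains to prove the latter is positive on the stated region.

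I would accomplish this in two steps. First, for the diagonal estimate $L(\log V_B, \log V_B) > 0$ whenever $V_B > V_B^*$, I compare $L(z,z)$ with $\mathcal{E}(e^z; e^z)$ as given in \eqref{E_B_B}. Both quantities decompose as an affine combination of $e^z$, the integral $\int_{-\infty}^{z-\log V_T} H^{(r+\lambda)}(y; \Phi(r))\,dy$ (which enters $\Lambda^{(r,\lambda)}(z,z)$ via Proposition \ref{lambdaidentified}), and a constant depending only on the model parameters. Using the explicit forms of $J^{(q)}(0;\theta)$ from Remark \ref{remark_gamma_0}(1) to match coefficients, I would aim to derive an identity of the form $L(z,z) = c\,\mathcal{E}(e^z;e^z)$ for some strictly positive constant $c$---or, failing exact proportionality, an algebraic link ensuring that $L(z,z)$ and $\mathcal{E}(e^z;e^z)$ share the same zero locus and sign pattern on $(0,\infty)$. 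Combined with Proposition \ref{E_monotonicity}, which gives $\mathcal{E}(V_B; V_B) > 0$ for every $V_B > V_B^*$ in both the $V_B^* > 0$ case (equality at $V_B^*$ plus strict monotonicity) and the $V_B^* = 0$ case (the limit as $V_B \downarrow 0$ is nonnegative), this yields $L(\log V_B, \log V_B) > 0$.

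Second, the $x$-dependence in $L(x, z)$ enters only through the ratio $R(x-z) := H^{(r)}(x-z; \Phi(r+\lambda))/H^{(r+m)}(x-z; \Phi(r+m+\lambda))$ multiplied by the strictly positive weight $\frac{\Phi(r+\lambda)-\Phi(r)}{\Phi(r+m+\lambda)-\Phi(r+m)}\,A_1(z)$, where $A_1(z) := \alpha(1 - J^{(r)}(0;1))e^z + P\kappa\rho\,\frac{\Phi(r+\lambda)-\Phi(r)}{\lambda}\int_{-\infty}^{z-\log V_T} H^{(r+\lambda)}(y; \Phi(r))\,dy > 0$ by Remark \ref{remark_gamma_0}(2) and positivity of the integrand. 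It therefore suffices to show $u \mapsto R(u)$ is non-decreasing on $(0,\infty)$. I expect this to follow from direct scale-function manipulations using $H^{(q)}(y;\theta) = Z^{(q)}(y;\theta) - \frac{\psi(\theta) - q}{\theta - \Phi(q)} W^{(q)}(y)$ together with the standard comparison inequalities between the scale functions at killing rates $r$ and $r+m$, or via an exponential change of measure realizing $H^{(q)}(\cdot; \Phi(q+\lambda))$ as an expected discounted functional under a tilted measure. Combining both steps gives $L(\log V, \log V_B) \geq L(\log V_B, \log V_B) > 0$ strictly on $\{V > V_B > V_B^*\}$, so that $\frac{\partial}{\partial V_B}\mathcal{E}(V;V_B) < 0$ there; integrating in $V_B$ from $V_B^*$ up to the given $V_B$ (using continuity of $V_B \mapsto \mathcal{E}(V; V_B)$, which follows from continuity of $J^{(q)}$ and $\Lambda^{(r,\lambda)}$ in their arguments together with \eqref{equity_value_0} if $V_B^* = 0$) then delivers $\mathcal{E}(V; V_B) < \mathcal{E}(V; V_B^*)$. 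The main obstacle I foresee is the algebraic matching in the diagonal estimate: pinning down the precise relation between $L(z,z)$ and $\mathcal{E}(e^z; e^z)$ requires careful bookkeeping of the prefactor $R(0)$ and the boundary terms from the $\overline{W}$-expressions in Proposition \ref{lambdaidentified}; by contrast, the monotonicity of $R$ is technical but routine.
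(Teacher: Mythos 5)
Your reduction of the claim to the sign of $L(\log V,\log V_B)$ via Proposition \ref{prop_E_derivative_B}, and your final integration step, match the paper. But the two key intermediate steps each contain a genuine gap, and the paper's actual argument is both shorter and sidesteps them.

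First, the exact proportionality $L(z,z)=c\,\mathcal{E}(e^z;e^z)$ you are hoping for does not hold, precisely because of the factor $R(0)=H^{(r)}(0;\Phi(r+\lambda))/H^{(r+m)}(0;\Phi(r+m+\lambda))$ you flag as a nuisance: for bounded-variation paths, $W^{(q)}(0)\neq 0$, so $H^{(q)}(0;\Phi(q+\lambda))\neq 1$ and $R(0)$ is generally $>1$, not $=1$. Second, the monotonicity in $u$ of $R(u)=H^{(r)}(u;\Phi(r+\lambda))/H^{(r+m)}(u;\Phi(r+m+\lambda))$ is asserted as ``routine'' but never established, and it is not obviously true for a general spectrally negative L\'evy process; nothing in the paper establishes or needs it.

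The paper's proof avoids both issues by a uniform (rather than pointwise-at-the-diagonal) lower bound: from the probabilistic representation \eqref{der_gamma_0_0}, $q\mapsto H^{(q)}(u;\Phi(q+\lambda))=\E_u[e^{-q\tau_0^-+\Phi(q+\lambda)X_{\tau_0^-}}1_{\{\tau_0^-<\infty\}}]$ is non-increasing in $q$, so $R(u)\geq 1$ for every $u$ (not just $u=0$); moreover $\tfrac{\Phi(r+\lambda)-\Phi(r)}{\Phi(r+m+\lambda)-\Phi(r+m)}>1$ by strict concavity of $\Phi$. Hence the full multiplicative factor in $L$ strictly exceeds $1$. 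Since the bracket it multiplies, $A(z):=\alpha(1-J^{(r)}(0;1))e^z+P\kappa\rho\,\Lambda^{(r,\lambda)}(z,z)$, is strictly positive (Remark \ref{remark_gamma_0}(2) and \eqref{lambda_b_b}), one gets $L(\log V,\log V_B)>A(\log V_B)+B(\log V_B)=\mathcal{E}(V_B;V_B)\geq 0$ in one stroke, where $B$ collects the remaining two terms common to $L$ and \eqref{E_B_B}. This eliminates the need both for your diagonal identity and for monotonicity of $R$. If you want to salvage your two-step strategy, you would need (a) to replace the hoped-for proportionality with the inequality $L(z,z)>\mathcal{E}(e^z;e^z)$ (which is exactly the paper's bound specialised to $u=0$) and (b) to actually prove monotonicity of $R$ — but at that point you would have done more work than the direct global bound requires.
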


\begin{proposition} \label{prop_E_derivative_x}
For $V > V_B > 0$,
 we have 
\begin{align*}
\frac \partial {\partial V} \mathcal{E}(V;V_B)&=1- \frac {V_B}{V} \Big[ \frac \partial {\partial V_B} \mathcal{E}(V;V_B) +\alpha   J^{(r, \lambda)}\Big(\log \frac V {V_B};1 \Big) + (1-\alpha) J^{(r+m, \lambda)}\Big(\log \frac V {V_B};1\Big) \Big]  \\ &+ \frac{P \kappa \rho}{V} R^{(r,\lambda)}\Big(\log \frac V {V_B},\log \frac {V_T} {V_B} \Big),
\end{align*}
where $R^{(r,\lambda)}$ is the resolvent density given in \eqref{resol_dens}.
\end{proposition}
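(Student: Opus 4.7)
}

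The plan is to re-parametrize in logarithmic variables. Set $x = \log V$ and $z = \log V_B$ so that $V_B = e^z$, and write the equity value \eqref{equity_value} as a function of $(x,z)$:
\begin{align*}
	\mathcal{E}(e^x; e^z) &= e^x + P\kappa\rho\,\Lambda^{(r,\lambda)}(x,z) - \alpha e^z J^{(r)}(x-z;1) \\
	&\quad - \tfrac{P\rho+p}{r+m}(1 - J^{(r+m)}(x-z;0)) - (1-\alpha)e^z J^{(r+m)}(x-z;1).
\end{align*}
Since $\partial_V = V^{-1}\partial_x$ and $\partial_{V_B} = V_B^{-1}\partial_z$, the claim is equivalent to
\[
\partial_x \mathcal{E} + \partial_z \mathcal{E} = V + P\kappa\rho\, R^{(r,\lambda)}(x-z,\log V_T - z) - \alpha V_B J^{(r)}(x-z;1) - (1-\alpha) V_B J^{(r+m)}(x-z;1),
\]
so the task is to compute $\partial_x\mathcal{E} + \partial_z\mathcal{E}$.

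The key observation is that in every term other than the $\Lambda^{(r,\lambda)}$ term, the $J$-dependence enters only through $x-z$, so when we apply $\partial_x + \partial_z$ to $J^{(q)}(x-z;\cdot)$ the two derivatives cancel. First I would carry out this chain-rule bookkeeping: the $-\alpha e^z J^{(r)}(x-z;1)$ piece contributes $-\alpha e^z J^{(r)}(x-z;1)$ (from the $\partial_z e^z$ only), the $-(1-\alpha) e^z J^{(r+m)}(x-z;1)$ piece contributes $-(1-\alpha)e^z J^{(r+m)}(x-z;1)$, the $-\tfrac{P\rho+p}{r+m}(1 - J^{(r+m)}(x-z;0))$ piece contributes $0$, and the $e^x$ piece contributes $e^x = V$.

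The remaining piece is $P\kappa\rho\,[\partial_x + \partial_z]\Lambda^{(r,\lambda)}(x,z)$, and the main obstacle is identifying this with $P\kappa\rho\,R^{(r,\lambda)}(x-z,\log V_T - z)$. The plan is to use translation invariance of the L\'evy process. Define
\[
\bar\Lambda^{(r,\lambda)}(a,b) := \E_a\Big[\int_0^{\tilde{T}_0^-} e^{-rt}\,1_{\{X_t \geq b\}}\,\diff t\Big],
\]
so that shifting the process by $z$ gives $\Lambda^{(r,\lambda)}(y,z) = \bar\Lambda^{(r,\lambda)}(y-z,\log V_T - z)$. If $R^{(r,\lambda)}(a,\cdot)$ is the density of the $r$-resolvent measure of $X$ started at $a$ and killed at $\tilde T_0^-$ (as defined in the referenced equation \eqref{resol_dens}, cf.\ the proof of Proposition \ref{lambdaidentified}), then $\bar\Lambda^{(r,\lambda)}(a,b) = \int_b^\infty R^{(r,\lambda)}(a,u)\,\diff u$ and thus $\partial_b \bar\Lambda^{(r,\lambda)}(a,b) = -R^{(r,\lambda)}(a,b)$. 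The chain rule then gives
\[
\partial_y \Lambda^{(r,\lambda)}(y,z) = \partial_1 \bar\Lambda^{(r,\lambda)}(y-z,\log V_T - z), \qquad \partial_z \Lambda^{(r,\lambda)}(y,z) = -\partial_1\bar\Lambda^{(r,\lambda)} + R^{(r,\lambda)}(y-z,\log V_T - z),
\]
so the $\partial_1$ terms cancel and $[\partial_y + \partial_z]\Lambda^{(r,\lambda)}(y,z) = R^{(r,\lambda)}(y-z,\log V_T - z)$.

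Combining and dividing by $V$ yields the claimed identity. The routine chain-rule cancellation could hide a sign error, so I would double-check by noting that $\partial_z e^z = e^z$ is the source of the two ``surviving'' $J$ terms with coefficients $-\alpha V_B$ and $-(1-\alpha)V_B$, which upon division by $V$ precisely give the $-(V_B/V)$ prefactor in front of the $J^{(r)}$ and $J^{(r+m)}$ terms in the statement, together with the $\partial \mathcal{E}/\partial V_B$ term from re-expressing $V_B^{-1}\partial_z\mathcal{E}$.
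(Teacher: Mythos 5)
Your proposal is correct, and the final identity you prove, $[\partial_x+\partial_z]\Lambda^{(r,\lambda)}(x,z)=R^{(r,\lambda)}(x-z,\log V_T-z)$, is exactly the paper's equation \eqref{der_lambda_x_3}; likewise your cancellation for the $J$-terms is exactly the content of the paper's relation \eqref{der_A} between $\partial_x\mathcal{A}$ and $\partial_z\mathcal{A}$. The difference is purely in how the $\Lambda$-identity is obtained: the paper derives it by explicit scale-function manipulation (differentiating $\mathcal{I}$ via Lemma \ref{lemma_int_upsilon}, \eqref{Z_Phi_q_r_derivative} and \eqref{der_upsilon_B}, then combining with \eqref{Z_H_derivative}), whereas you short-circuit all of that with the probabilistic observation that $\Lambda^{(r,\lambda)}(y,z)$ depends on $(y,z)$ only through the pair $(y-z,\log V_T-z)$ by spatial homogeneity, and that differentiating in the second slot just evaluates the resolvent density. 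This is cleaner and less error-prone than the paper's route, and it makes manifest why the scale-function computations must collapse to $R^{(r,\lambda)}$; the one technical point you should add is the (mild) justification for differentiating $b\mapsto\int_b^\infty R^{(r,\lambda)}(a,u)\,\diff u$, which requires continuity of $R^{(r,\lambda)}(a,\cdot)$ at $b=\log V_T-z$ --- this is the reason the paper's computation implicitly restricts to $z_T\neq 0$, i.e.\ $V_B\neq V_T$, and your proof inherits the same restriction.
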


\begin{proposition} \label{prop_B_star_feasible} 
We have $\mathcal{E}(V;V_B^*)\geq0$ for all $V\geq V_B^*$ when $V_B^* > 0$ and for all $V > 0$ when $V_B^* = 0$. In other words, $V_B^*$ is feasible.
\end{proposition}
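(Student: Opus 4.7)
My plan is to split on the two branches in the construction of $V_B^*$. The case $V_B^* > 0$ rests entirely on already established facts: Proposition \ref{prop_E_der_B_negative} (strict decay of $\mathcal{E}(V;\cdot)$ past $V_B^*$) and Proposition \ref{E_monotonicity} (strict increase of the diagonal map $V_B \mapsto \mathcal{E}(V_B;V_B)$, which by the defining equation \eqref{optimalB*} vanishes at $V_B^*$). Fixing any $V > V_B^*$, I would pick a threshold $V_B \in (V_B^*, V)$, apply Proposition \ref{prop_E_der_B_negative} to obtain $\mathcal{E}(V;V_B) < \mathcal{E}(V;V_B^*)$, and then let $V_B \uparrow V$. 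The closed-form expression \eqref{equity_value}, built from the continuous functions $J^{(q)}$ of \eqref{fun_gamma} and $\Lambda^{(r,\lambda)}$ of Proposition \ref{lambdaidentified}, makes $V_B \mapsto \mathcal{E}(V;V_B)$ continuous, so taking the limit yields $\mathcal{E}(V;V) \leq \mathcal{E}(V;V_B^*)$. Since $V > V_B^*$, Proposition \ref{E_monotonicity} together with $\mathcal{E}(V_B^*;V_B^*) = 0$ forces $\mathcal{E}(V;V) > 0$, whence $\mathcal{E}(V;V_B^*) > 0$. At $V = V_B^*$, feasibility holds by the very definition of $V_B^*$.

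For the case $V_B^* = 0$, we have $V_T = 0$ and condition \eqref{newref}, and the explicit formula \eqref{equity_value_0} reduces the claim to showing $P\kappa\rho/r - (P\rho+p)/(r+m) \geq 0$. To extract this from \eqref{newref}, I would use the identity
\[
\frac{1}{\lambda+q}\frac{\Phi(q+\lambda)}{\Phi(q)} = \frac{1 - J^{(q)}(0;0)}{q},
\]
which is a direct rearrangement of the expression for $J^{(q)}(0;0)$ in Remark \ref{remark_gamma_0}(1), and apply it at $q = r$ and $q = r+m$. Condition \eqref{newref} then reads
\[
\frac{P\kappa\rho}{r}\bigl(1 - J^{(r)}(0;0)\bigr) \;\geq\; \frac{P\rho+p}{r+m}\bigl(1 - J^{(r+m)}(0;0)\bigr).
\]
Because $\tilde T_0^-$ is bounded below by the first Poisson epoch $\mathbf{e}_\lambda > 0$ and hence strictly positive almost surely, the map $q \mapsto J^{(q)}(0;0) = \E_0[e^{-q\tilde T_0^-}1_{\{\tilde T_0^- < \infty\}}]$ is non-increasing, and by Remark \ref{remark_gamma_0}(2) we have $J^{(r+m)}(0;0) \leq J^{(r)}(0;0) < 1$. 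Consequently $0 < 1 - J^{(r)}(0;0) \leq 1 - J^{(r+m)}(0;0)$, and dividing the displayed inequality by $1 - J^{(r)}(0;0)$ gives $P\kappa\rho/r \geq (P\rho+p)/(r+m)$, so $\mathcal{E}(V;0) \geq V > 0$.

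The main obstacle, as I see it, is the Case $V_B^* > 0$ step: Proposition \ref{prop_E_der_B_negative} only supplies a strict inequality on the open range $V_B \in (V_B^*, V)$, so pushing it to $V_B = V$ requires a continuity fact for $V_B \mapsto \mathcal{E}(V;V_B)$ that is not previously isolated --- though it is an immediate consequence of the representation \eqref{equity_value} together with Proposition \ref{lambdaidentified} and \eqref{fun_gamma}. An alternative, bypassing this limit, would be to show the monotonicity of $V \mapsto \mathcal{E}(V;V_B^*)$ on $[V_B^*, \infty)$ directly via Proposition \ref{prop_E_derivative_x}, but that would demand a careful sign analysis of the function $L$ at the optimum, so the continuity route above is the more economical plan.
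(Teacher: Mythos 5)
Your proof is correct. For the case $V_B^* = 0$ it is essentially identical to the paper's argument (same identity, same monotonicity of $q \mapsto J^{(q)}(0;0)$, same algebraic rearrangement of \eqref{newref}). For the case $V_B^* > 0$, however, you take a genuinely different route. The paper shows directly that $V \mapsto \mathcal{E}(V;V_B^*)$ is non-decreasing: it plugs the conclusion of Proposition \ref{prop_E_der_B_negative} ($\frac{\partial}{\partial V_B}\mathcal{E}(V;V_B^*) < 0$), together with Remark \ref{remark_gamma_0}(2) ($J^{(q)}(\cdot;1) < 1$) and the nonnegativity of the resolvent density $R^{(r,\lambda)}$, into the formula of Proposition \ref{prop_E_derivative_x}, obtaining $\frac{\partial}{\partial V}\mathcal{E}(V;V_B^*) \geq 1 - V_B^*/V \geq 0$; integrating from $V = V_B^*$ (where the equity vanishes) then gives nonnegativity. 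You instead fix $V$, apply Proposition \ref{prop_E_der_B_negative} with an intermediate barrier $V_B \in (V_B^*, V)$, let $V_B \uparrow V$ by continuity, and conclude from Proposition \ref{E_monotonicity} that $\mathcal{E}(V;V_B^*) \geq \mathcal{E}(V;V) > 0$. Both routes are valid; yours is arguably more economical in that it bypasses Proposition \ref{prop_E_derivative_x} entirely, at the modest cost of invoking a continuity fact (immediate from \eqref{equity_value}, as you note). One small correction to your closing remark: the paper's use of Proposition \ref{prop_E_derivative_x} does \emph{not} demand a fresh sign analysis of $L$ at the optimum --- the sign of $\frac{\partial}{\partial V_B}\mathcal{E}(V;V_B^*)$ (equivalently, of $L$) has already been settled in Proposition \ref{prop_E_der_B_negative}, so the derivative formula only needs to be combined with that established fact and with $J < 1$, $R^{(r,\lambda)} \geq 0$.
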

\begin{proof}
(i) Suppose $V_B^*>0$. 
Because $R^{(r,\lambda)}$ is the resolvent density, it is nonnegative.
By this
together with Propositions  \ref{prop_E_der_B_negative} and \ref{prop_E_derivative_x}, 
 for
 $V> V_B^*>0$, 
\begin{align*}
	\frac \partial {\partial V} \mathcal{E}(V;V_B^*)&=1- \frac {V_B^*}{V}  \frac \partial {\partial V_B} \mathcal{E}(V;V_B^*)-\alpha \frac {V_{B}^*} {V} J^{(r, \lambda)}\Big(\log  \frac V {V_B^*};1 \Big) \\ &-(1-\alpha)\frac {V_{B}^*} {V}J^{(r+m, \lambda)} \Big(\log \frac V {V_B^*};1 \Big)+ \frac{P \kappa \rho}{V} R^{(r,\lambda)} \Big(\log \frac V {V_B^*},\log  \frac {V_T} {V_B^*} \Big)\notag\\
&\geq 1- \frac {V_{B}^*} {V} \Big[ \alpha J^{(r, \lambda)}\Big(\log  \frac V {V_B^*};1 \Big) +(1-\alpha) J^{(r+m, \lambda)} \Big(\log \frac V {V_B^*};1 \Big) \Big] \notag\\
	&\geq 1 - \frac {V_{B}^*} {V} \geq 0,
\end{align*}
where the second inequality holds by Remark  \ref{remark_gamma_0}(2).
Applying this and the fact that $\mathcal{E}(V_B^*;V_B^*)= 0$ when $V_B^* > 0$, the claim is immediate.

(ii) For the case $V_B^*=0$ recall that necessarily $V_T=0$, and hence by \eqref{equity_value_0} we obtain that \begin{equation}\label{der_v_0}\frac \partial {\partial V} \mathcal{E}(V;0)=1>0.\end{equation}
Moreover, by Remark \ref{remark_gamma_0}(1) and because $q \mapsto J^{(q, \lambda)}(0;0)$ is non-increasing in view of its probabilistic expression, for $m>0$, 
\[
\frac{{r}}{\lambda +{r}}  \frac{\Phi({r}+\lambda)}{\Phi({r})}=1-J^{(r, \lambda)}(0;0) \leq 1- J^{(r+m, \lambda)}(0;0)=\frac{{r+m}}{\lambda +{r}+m}  \frac{\Phi({r}+\lambda+m)}{\Phi({r+m})}.
\]	
By this and recalling inequality (\ref{newref}),
we have that
\begin{align*}
\left(\frac{P \kappa \rho}{r}-\frac {P \rho+ p}{r+m}\right) \frac{{r+m}}{\lambda +{r}+m}  \frac{\Phi({r}+\lambda+m)}{\Phi({r+m})}\geq  \frac{P \kappa \rho}{\lambda +r}  \frac{\Phi(r+\lambda)}{\Phi(r)}-\frac {P \rho+ p}{\lambda +r+m}  \frac{\Phi(r+m+\lambda)}{\Phi(r+m)} \geq 0.
\end{align*}
Hence, $\frac{P \kappa \rho}{r}-\frac {P \rho+ p}{r+m} \geq 0$ and therefore
\begin{align}\label{val_e_v_0}
\lim_{V \downarrow 0}\mathcal{E}(V;0)=\frac{P \kappa \rho}{r}-\frac{ P \rho+ p}{r+m}\geq0.
\end{align}
Using \eqref{der_v_0} together with \eqref{val_e_v_0} completes the proof.
\qed \end{proof}
{\it Proof of Theorem \ref{theorem_main}}\\
Now, by Propositions \ref{feasibility_B_less_star}, \ref{prop_E_der_B_negative}
and  \ref{prop_B_star_feasible}, the proof of Theorem \ref{theorem_main} is complete.
\qed

\begin{remark} \label{remark_lambda_infty}
Intuitively, as $\lambda \rightarrow \infty$, the optimal barrier is expected to converge to that in the classical case as in \cite{Kyprianou}. 
In order to confirm this assertion, we provide the following result; its proof is deferred to Appendix \ref{appendix_limit_lambda_ot}.
\begin{lemma}\label{limit_lambda_ot}
Suppose $V_T > 0$ and let $V_B\geq0$ be fixed.  We have
\begin{align}
		\lim_{\lambda\to\infty}&\frac{\lambda+r+m }{\Phi(\lambda+r+m)}	\mathcal{E}(V_B;V_B) \notag\\
		&= V_B \left[\alpha\frac{\psi(1)-r}{1-\Phi(r)}+(1-\alpha)\frac{\psi(1)-(r+m)}{1-\Phi(r+m)}\right] +   \frac{P \kappa \rho}{\Phi(r)}\left[\left(\frac {V_B}  {V_T} \right)^{\Phi(r)}\wedge 1\right] - \frac {P \rho+ p}{\Phi(r+m)}.\label{eq_limit_lambda_ot}
\end{align}
\end{lemma}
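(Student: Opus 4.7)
The strategy is to multiply \eqref{E_B_B} by $(\lambda+r+m)/\Phi(\lambda+r+m)$ and take the limit term by term. The constant term $-(P\rho+p)\Phi(r+m+\lambda)/((\lambda+r+m)\Phi(r+m))$ thereby produces $-(P\rho+p)/\Phi(r+m)$ directly. For the $V_B$-bracket, I rewrite $1 - \alpha J^{(r)}(0;1) - (1-\alpha)J^{(r+m)}(0;1)$ as $\alpha(1-J^{(r)}(0;1)) + (1-\alpha)(1-J^{(r+m)}(0;1))$ and substitute the closed form $1 - J^{(q)}(0;1) = \frac{\psi(1)-q}{\lambda+q-\psi(1)}\frac{\Phi(q+\lambda)-1}{1-\Phi(q)}$ from Remark~\ref{remark_gamma_0}(1). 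The elementary limits $(\lambda+r+m)/(\lambda+q-\psi(1)) \to 1$ and $(\Phi(q+\lambda)-1)/\Phi(r+m+\lambda) \to 1$ for $q \in \{r,r+m\}$ then give the first bracket of \eqref{eq_limit_lambda_ot}. The latter holds because $\psi(\Phi(r+m+\lambda)) - \psi(\Phi(q+\lambda)) = r + m - q$ is bounded while $\Phi \to \infty$, so by the mean value theorem $\Phi(q+\lambda)/\Phi(r+m+\lambda)\to 1$.

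For the remaining summand $P\kappa\rho\,\Lambda^{(r,\lambda)}(\log V_B, \log V_B)$, I apply Proposition~\ref{lambdaidentified} with $y = z = \log V_B$, which makes the last integral vanish. Setting $\eta := \log(V_B/V_T)$,
\[
\Lambda^{(r,\lambda)}(\log V_B, \log V_B) = \tfrac{\Phi(r+\lambda)-\Phi(r)}{\lambda\Phi(r)}\bigl[Z^{(r+\lambda)}(\eta;\Phi(r)) - \lambda \overline{W}^{(r+\lambda)}(\eta)\bigr] - \overline{W}^{(r+\lambda)}(\eta) 1_{\{\eta > 0\}} - \overline{W}^{(r)}(\eta) 1_{\{\eta \leq 0\}}.
\]
When $V_B \leq V_T$ (so $\eta \leq 0$), the identity $W^{(q)} \equiv 0$ on $(-\infty, 0]$ makes $\overline{W}^{(r+\lambda)}(\eta) = \overline{W}^{(r)}(\eta) = 0$ and collapses the first bracket to $e^{\Phi(r)\eta}$ (since also $Z^{(r+\lambda)}(\eta;\Phi(r)) = e^{\Phi(r)\eta}$), leaving $(\Phi(r+\lambda) - \Phi(r)) e^{\Phi(r)\eta}/(\lambda\Phi(r))$. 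Multiplying by $P\kappa\rho(\lambda+r+m)/\Phi(\lambda+r+m)$ and passing to the limit yields $(P\kappa\rho/\Phi(r))(V_B/V_T)^{\Phi(r)}$, which matches $(P\kappa\rho/\Phi(r))[(V_B/V_T)^{\Phi(r)}\wedge 1]$ since $V_B \leq V_T$.

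When $V_B > V_T$ (so $\eta > 0$), the target contribution is $P\kappa\rho/\Phi(r)$ (the $\wedge 1$ selecting $1$). I use the probabilistic decomposition
\[
\Lambda^{(r,\lambda)}(\log V_B, \log V_B) = \tfrac{1-J^{(r)}(0;0)}{r} - \Delta_\lambda, \quad \Delta_\lambda := \E_{\log V_B}\!\left[\int_0^{\tilde{T}^-_{\log V_B}}\! e^{-rt} 1_{\{X_t < \log V_T\}}\, dt\right],
\]
obtained by writing $1_{\{X_t \geq \log V_T\}} = 1 - 1_{\{X_t < \log V_T\}}$ and using $\E_{\log V_B}[(1-e^{-r\tilde{T}^-_{\log V_B}})/r] = (1-J^{(r)}(0;0))/r$. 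By Remark~\ref{remark_gamma_0}(1), $(1-J^{(r)}(0;0))/r = \Phi(r+\lambda)/((\lambda+r)\Phi(r))$, so this main term contributes the desired $P\kappa\rho/\Phi(r)$ in the limit. The main technical obstacle is to show $\Delta_\lambda \cdot (\lambda+r+m)/\Phi(\lambda+r+m) \to 0$. For this, I would use the bound $1_{\{X_t < \log V_T\}} \leq 1_{\{t \geq \tau^-_{\log V_T}\}}$ with $\tau^-_{\log V_T} := \inf\{t: X_t < \log V_T\}$ and the strong Markov property at $\tau^-_{\log V_T}$ to obtain $\Delta_\lambda \leq \sup_{y \leq \log V_T} \E_y[\tilde{T}^-_{\log V_B}]$. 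The Wiener–Hopf identity $\mathbb{P}_y(\sup_{s \leq \mathbf{e}_\lambda} X_s \geq \log V_B) = e^{-\Phi(\lambda)(\log V_B - y)} \leq e^{-\Phi(\lambda)\eta}$, valid for $y \leq \log V_T$, combined with a standard renewal argument over successive Poisson observations, should yield $\sup_{y \leq \log V_T} \E_y[\tilde{T}^-_{\log V_B}] = O(1/\lambda)$. Since $(\lambda+r+m)/\Phi(\lambda+r+m) = O(\lambda/\Phi(\lambda))$, the product is $O(1/\Phi(\lambda)) \to 0$, completing the proof.
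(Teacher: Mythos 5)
Your argument matches the paper's for the first two summands of \eqref{E_B_B}: you use the closed form of $1 - J^{(q)}(0;1)$ from Remark~\ref{remark_gamma_0}(1) and elementary ratio limits exactly as the paper does, with a slightly different (but correct) justification of $\Phi(q+\lambda)/\Phi(r+m+\lambda)\to1$ via the mean value theorem rather than the paper's citation of Bertoin. Your treatment of the $\Lambda$-term when $V_B\leq V_T$ is also correct and is essentially the paper's computation (the paper uses \eqref{lambda_b_b}, which equals your expression after simplification).

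For $V_B > V_T$, however, you take a genuinely different route, and this is where the gap lies. Your probabilistic decomposition $\Lambda^{(r,\lambda)}(\log V_B,\log V_B) = (1-J^{(r)}(0;0))/r - \Delta_\lambda$ is correct, and the first term gives the right limit. But your bound $\Delta_\lambda\leq\sup_{y\leq\log V_T}\E_y[\tilde{T}^-_{\log V_B}]$ drops the crucial discount factor $e^{-rt}$, and the resulting undiscounted expectation is not in general finite: when $\psi'(0+)>0$ the process drifts to $+\infty$, so starting from $y<\log V_B$ there is positive probability that the process exceeds $\log V_B$ before the first Poisson epoch and never observes below $\log V_B$ again, making $\E_y[\tilde{T}^-_{\log V_B}]=\infty$. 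The subsequent "renewal argument" would inherit this problem, since after an unsuccessful observation the process can land arbitrarily high above $\log V_B$, where the expected return time is unbounded. The idea can be salvaged by keeping the discounting — e.g.\ bounding $\Delta_\lambda\leq r^{-1}\sup_{y\leq\log V_T}\E_y[1-e^{-r\tilde{T}^-_{\log V_B}}]$ and then using the one-step recursion $\E_y[1-e^{-r\tilde{T}^-_{\log V_B}}]\leq \frac{r}{\lambda+r}+\p_y(X_{T_1^\lambda}\geq\log V_B)\leq\frac{r}{\lambda+r}+e^{-\Phi(\lambda)\eta}$ — but this is not what you wrote. The paper instead works directly from \eqref{lambda_b_b}, writing
\begin{align*}
\frac{\lambda}{\Phi(r+\lambda)-\Phi(r)}\,\Lambda^{(r,\lambda)}(\log V_B,\log V_B)=\frac{1}{\Phi(r)}+\int_0^{\log(V_B/V_T)}H^{(r+\lambda)}(y;\Phi(r))\,\diff y,
\end{align*}
and kills the integral by dominated convergence, using the probabilistic form of $H^{(r+\lambda)}$ in \eqref{der_gamma_0_0}. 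That route is shorter and sidesteps the finiteness issue entirely.
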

This is consistent with identity (3.26) in \cite{Kyprianou}, where the optimal bankruptcy level is such that the right-hand side of \eqref{eq_limit_lambda_ot} vanishes.
\end{remark}		

\section{Two-stage problem} \label{section_two_stage}

We now obtain the optimal leverage by solving the \emph{two-stage problem} as studied by \cite{Chen_Kou_2009,Leland94, Leland96} where the final goal is to choose $P$ that maximizes the firm's value $\mathcal{V}$. For fixed $V > 0$, the problem is formulated as
\begin{align}
\max_{P} \mathcal{V}(V; V_B^*(P), P) \label{two-stage-problem}
\end{align}
where we emphasize the dependency of $\mathcal{V}$ and $V_B^*$ on $P$.   

In this two-stage problem, it is worth investigating the shape of  $\mathcal{V}(V; V_B^*(P), P)$ with respect to $P$ to confirm whether it has a unique maximizer.  Chen and Kou \cite{Chen_Kou_2009} verified the concavity in the continuous observation case with a double jump diffusion as the underlying model and the assumption that $V_T=0$.



In this section, we show, in the periodic observation setting, the concavity for the case when $V_T=0$ and the following assumption is satisfied.
\begin{assumption} \label{assump_completely_monotone}
The \lev measure $\overline{\Pi}$ of the dual process $-X$ has a completely monotone density, i.e.\ 
$\overline{\Pi}$ has a density $\pi$ whose $n^{th}$ derivative $\pi^{(n)}$ exists for all $n \geq 1$ and satisfies
\[
(-1)^{n} \pi^{(n)} (x) \geq 0, \quad x > 0.
\]
\end{assumption}
Important examples satisfying Assumption \ref{assump_completely_monotone} include (the spectrally negative versions of) hyperexponential jump diffusion (as a generalization of \cite{Chen_Kou_2009}), variance gamma  process \cite{Madan},  CGMY process \cite{CGMY}, as well as meromorphic \lev process \cite{Kuznetsov2}.

To show this claim, first we show the following property.
\begin{lemma}\label{fun_H}  
Under Assumption \ref{assump_completely_monotone}, the mapping $x\mapsto H^{(r)}(x;\Phi(r+\lambda))$ is decreasing. 
\end{lemma}
\begin{proof}
	For the completely monotone case, it is known as in Theorem 2 of \cite{Loeffen} that the scale function admits the form
	\[
	W^{(r)}(x) = \Phi'(r) e^{\Phi(r) x} - \int_0^\infty e^{-xt} \mu^{(r)} (\diff t), \quad x \geq 0,
	\]
for some finite measure $\mu^{(r)}$.  
Substituting this and using Fubini's theorem, 
	\begin{align*} 
		Z^{({r})}(x; \Phi(r+\lambda) ) 
		&= e^{\Phi(r+\lambda) x} \left( 1 -\lambda\int_0^{x} e^{-\Phi(r+\lambda)  z} \Big[\Phi'(r) e^{\Phi(r) z} - \int_0^\infty e^{- zt} \mu^{(r)} (\diff t)\Big] \diff z	\right) \\
		&= e^{\Phi(r+\lambda) x} \left( 1 -\lambda  \Big[\Phi'(r) \frac {1-e^{-(\Phi(r+\lambda)-\Phi(r)) x}  } {\Phi(r+\lambda)-\Phi(r)} - \int_0^\infty  \int_0^{x}  e^{- z(t+\Phi(r+\lambda))} \diff z \mu^{(r)}   (\diff t) \Big]	\right) \\
		&=  e^{\Phi(r+\lambda) x}  -\lambda  \Big[\Phi'(r) \frac { e^{\Phi(r+\lambda) x} -e^{\Phi(r) x}  } {\Phi(r+\lambda)-\Phi(r)} - \int_0^\infty  \frac { e^{\Phi(r+\lambda) x}  - e^{- tx}} {t + \Phi(r+\lambda)}  \mu^{(r)}   (\diff t) \Big]	. 
	\end{align*}
	Now, substituting the above expressions in \eqref{der_gamma_0_0}, we have
	\begin{align*}
		H^{(r)}(x;  \Phi(r+\lambda) ) &=Z^{({r})}(x; \Phi(r+\lambda) )-\frac{\lambda}{ \Phi(r+\lambda) -\Phi(r)}W^{({r})}(x) \\
		&= e^{ \Phi(r+\lambda)  x}  -\lambda \Big[\Phi'(r) \frac {e^{ \Phi(r+\lambda)  x}-e^{\Phi(r) x}   } { \Phi(r+\lambda) - \Phi(r) } - \int_0^\infty  \frac { e^{ \Phi(r+\lambda)  x}  - e^{- xt}} {t +  \Phi(r+\lambda) }  \mu^{(r)}  (\diff t) \Big]	  \\
		&- \frac{\lambda}{ \Phi(r+\lambda) -\Phi(r)} \Big[ \Phi'(r) e^{\Phi(r) x} - \int_0^\infty e^{-tx} \mu^{(r)} (\diff t) \Big] \\
		&=  e^{ \Phi(r+\lambda)  x}A + B(x)
	\end{align*}
	where 
	\begin{align*}
	A &:= 1 -  \frac {  \lambda \Phi'(r) } {\Phi(r+\lambda) -\Phi(r)} + \int_0^\infty  \frac {\lambda} {t +  \Phi(r+\lambda) }  \mu^{(r)}   (\diff t), \\
B(x) &:= \lambda\int_0^\infty    e^{- xt}\left[\frac{1}{ \Phi(r+\lambda) -\Phi(r)} - \frac {1} {t +  \Phi(r+\lambda) }  \right] \mu^{(r)} (\diff t).
	\end{align*}
	Because $\lim_{x \rightarrow \infty} B(x) = 0$ by monotone convergence and $\lim_{x\to\infty}H^{(r)}(x;  \Phi(r+\lambda) ) =0$ in view of the probabilistic expression \eqref{der_gamma_0_0}, we must have that $A = 0$. Hence, $H^{(r)}(x;  \Phi(r+\lambda) )  = B(x)$ and its derivative becomes
	\begin{align*}
		\frac \partial {\partial x}H^{(r)}(x;  \Phi(r+\lambda) )   = B'(x)
		&=-\lambda\int_0^\infty   t e^{- xt}\left[\frac{1}{ \Phi(r+\lambda) -\Phi(r)} - \frac {1} {t +  \Phi(r+\lambda) }  \right] \mu^{(r)} (\diff t)<0,
	\end{align*}
where the negativity holds because $\Phi(r+\lambda) > \Phi(r) > 0$ and hence the integrand is always positive. This shows the claim.
\end{proof}	
Now suppose $V_T=0$ so that 
\begin{equation*}
\mathcal{V}(V;V_B, P)=V+\E\left[\int_0^{T_{V_B}^-}e^{-rt}P\kappa\rho dt\right]-\alpha\E\left[e^{-rT_{V_B}^-}
V_{T_{V_B}^-}\mathbf{1}_{\{T_{V_B}^-<\infty\}}\right].
\end{equation*}
By Proposition \ref{lambdaidentified} and identity \eqref{E_B_B}, the optimal barrier $V_B^*(P)$ is given by the root of $\mathcal{E}(V_B;V_B, P) = 0$ for the case $V_B^*(P) > 0$ where
\begin{align} \label{E_B_B_zero_case}
	\begin{split}
\mathcal{E}(V_B;V_B, P) &= V_B + \frac{P \kappa \rho}{r} (1- J^{(r, \lambda)}(0;0) ) \\ &\qquad-  \alpha  V_B J^{(r, \lambda)}(0; 1) - \frac {P \rho+ p} {r+m}  (1- J^{(r+m, \lambda)}(0;0) ) - \left(1-\alpha  \right) V_B J^{(r+m, \lambda)}(0; 1).
	\end{split}
\end{align}
Recall that by \eqref{newref} and $p = mP$,
\begin{align*}
V_B^* (P)= 0 \Longleftrightarrow \lim_{V_B \downarrow 0}\mathcal{E}(V_B;V_B, P) \geq 0 \Longleftrightarrow   \frac{\kappa \rho}{\lambda +r}  \frac{\Phi(r+\lambda)}{\Phi(r)}-\frac {\rho+ m}{\lambda +r+m}  \frac{\Phi(r+m+\lambda)}{\Phi(r+m)} \geq 0,
 \end{align*} 
which does not depend on the value of $P$. Hence, the criterion for $V_B^*(P) = 0$ is irrelevant to the selection of $P$.

(1) First consider the case $\kappa \rho \frac{1}{\lambda +r}  \frac{\Phi(r+\lambda)}{\Phi(r)}-\frac {\rho+ m}{\lambda +r+m}  \frac{\Phi(r+m+\lambda)}{\Phi(r+m)} \geq 0$ so that $V_B^*(P) = 0$ for any choice of $P > 0$. In this case, 
\[
\mathcal{V}(V;V_B^*(P), P) = \mathcal{V}(V;0,P)  = V + \frac {P \kappa \rho} r,
\]
which is linear (and hence concave) in $P$.

(2) Suppose $\kappa \rho \frac{1}{\lambda +r}  \frac{\Phi(r+\lambda)}{\Phi(r)}-\frac {\rho+ m}{\lambda +r+m}  \frac{\Phi(r+m+\lambda)}{\Phi(r+m)} < 0$ so that $V_B^*(P) > 0$ is irrelevant to the selection of $P$. Because $p=Pm$, by solving $\mathcal{E}(V_B;V_B, P) = 0$ with \eqref{E_B_B_zero_case},
\begin{align*}
V_B^*(P) &=-\frac{\displaystyle\frac{P \kappa \rho}{r} (1- J^{(r, \lambda)}(0;0) )-\frac {P \rho+ p} {r+m}  (1- J^{(r+m, \lambda)}(0;0) )}{\displaystyle1-\alpha  J^{(r, \lambda)}(0; 1)-\left(1-\alpha  \right)  J^{(r+m, \lambda)}(0; 1)} 
=\varepsilon P,
\end{align*}
where
\begin{align*}
\varepsilon := -\frac{\displaystyle\frac{ \kappa \rho}{r} (1- J^{(r, \lambda)}(0;0) )-\frac { \rho+ m} {r+m}  (1- J^{(r+m, \lambda)}(0;0) )}{\displaystyle1-\alpha  J^{(r, \lambda)}(0; 1)-\left(1-\alpha  \right)  J^{(r+m, \lambda)}(0; 1)} > 0. 
\end{align*}

Now, as in  \eqref{value_debt_asset} and Proposition \ref{lambdaidentified}, the firm's value is given by
\begin{align*}
\mathcal{V}(V;V_B^*(P),P)&=V+\frac{P \kappa \rho}{r} \left(1- J^{(r, \lambda)}\left(\log \frac{V}{V_B^*(P)};0\right) \right)-\alpha  V_B^*(P) J^{(r, \lambda)}\left(\log \frac{V}{V_B^*(P)};1\right)\\
&=V+\frac{P \kappa \rho}{r} \left(1- J^{(r, \lambda)}\left(\log \frac{V}{\varepsilon P};0\right)\right)-\alpha  \varepsilon P J^{(r, \lambda)}\left(\log \frac{V}{\varepsilon P};1\right).
\end{align*}
Differentiating the above expression and using Lemmas \ref{cztery} and \ref{pochodne} (in the appendix), we have 
\begin{align}\label{der}
\begin{split}
\frac{\partial}{\partial P}\mathcal{V}(V;V_B^*(P),P)&=\frac{\kappa \rho}{r} \left(1- J^{(r, \lambda)}\left(\log \frac{V}{\varepsilon P};0\right)\right)
\\
&-\frac{\kappa \rho}{\lambda+r}\frac{\Phi(r+\lambda)-\Phi(r)}{\Phi(r)}\Phi(r+\lambda)H^{(r)}\left(\log \frac{V}{\varepsilon P};\Phi(r+\lambda)\right)\\
&-\alpha  \varepsilon\frac{\psi(1)-r}{\lambda+r-\psi(1)}\frac{\Phi(r+\lambda)-\Phi(r)}{1-\Phi(r)}(\Phi(r+\lambda)-1)H^{(r)}\left(\log \frac{V}{\varepsilon P};\Phi(r+\lambda)\right).
\end{split}
\end{align}
Here by the convexity of $\psi$ on $[0,\infty)$, the coefficient $\frac{\psi(1)-r}{\lambda+r-\psi(1)}\frac{\Phi(r+\lambda)-\Phi(r)}{1-\Phi(r)}(\Phi(r+\lambda)-1)$ is positive. 

First, the mapping $x \mapsto J^{(r, \lambda)}(x;0)=\E_x [ e^{-r\tilde{T}_0^-}\mathbf{1}_{\{\tilde{T}_0^-<\infty\}} ] = \E [ e^{-r\tilde{T}_{-x}^-}\mathbf{1}_{\{\tilde{T}_{-x}^-<\infty\}} ]$ is decreasing, because $\tilde{T}_{-x}^-$ is increasing in $x$.
On the other hand, Lemma \ref{fun_H} 
shows that the mapping 
$x\mapsto H^{(r)}(x;\Phi(r+\lambda))$ is decreasing as well. 

Using these facts together with \eqref{der} we can conclude that $\frac{\partial }{\partial P}\mathcal{V}(V;V_B^*(P),P)$ is decreasing in $P$, and therefore that the firm's value $\mathcal{V}(V;V_B^*(P),P)$ is a concave function of $P$. In summary, we have the following.
\begin{theorem} \label{theorem_concavity_P}
Suppose $V_T = 0$ and Assumption \ref{assump_completely_monotone} is satisfied.
\\
(1) If $\kappa \rho \frac{1}{\lambda +r}  \frac{\Phi(r+\lambda)}{\Phi(r)}-\frac {\rho+ m}{\lambda +r+m}  \frac{\Phi(r+m+\lambda)}{\Phi(r+m)} \geq 0$, then $V_B^*(P) = 0$ for all $P > 0$ and we have  $\mathcal{V}(V; V_B^*(P),P) = V + \frac {P \kappa \rho} r$.
\\
(2) Otherwise,  $V_B^*(P)=\varepsilon P > 0$ for all $P > 0$ and $\mathcal{V}(V; V_B^*(P),P)$ is concave in $P$ for any $V > 0$.
\end{theorem}

\section{Numerical Examples} \label{section_numerics}


In this section, we confirm the analytical results obtained in the previous sections through a sequence of numerical examples.  In addition, we study numerically the impact of the rate of observation $\lambda$ on the optimal solutions, obtain the optimal leverage by considering the two-stage problem considered in \eqref{two-stage-problem}, 
and analyze the behaviors of credit spreads.

Throughout this section, we use $r=7.5\%$, $\delta=7\%$, $\kappa = 35\%$, $\alpha=50\%$ for the parameters of the problem as used in \cite{Hilberink, Kyprianou, Leland94, Leland96}.  Additionally, unless stated otherwise, we set $\rho = 8.162\%$ and $m=0.2$, which were used in \cite{Chen_Kou_2009}, $P=50$, and $\lambda = 4$ (on average four times per year). For the tax threshold,  we set
\begin{align}
V_T = P\rho/\delta \label{V_T_P}
\end{align}
 as used in \cite{Kyprianou} and also suggested by \cite{Hilberink, Leland96}. By the choice \eqref{V_T_P}, necessarily $V_T > 0$ and hence $V_B^* > 0$ as discussed in Section \ref{section_existence}.

For the process $(X_t)_{t \geq 0}$, we use a mixture of Brownian motion and a compound Poisson process with i.i.d.\ hyperexponential jumps:
$X_t=\mu t +\sigma B_t -\sum_{i=1}^{N_t}U_i$, $t \geq 0$, where $(B_t)_{t \geq 0}$ is a standard Brownian motion, $(N_t)_{t \geq 0}$ is a Poisson process with intensity $\gamma$ and $(U_i)_{i\geq 1}$ takes an exponential random variable with rate $\beta_i > 0$ with probability $p_i$ for $1 \leq i \leq m$, such that $\sum_{i=1}^m p_i = 1$.
Note that this satisfies the completely monotone condition given in Assumption \ref{assump_completely_monotone}.
The corresponding Laplace exponent \eqref{laplace_exponent} then becomes
\begin{align*}
\psi(s) = \mu s + \frac 1 2 \sigma^2 s^2  + \gamma \sum_{i=1}^m p_i
\left( \frac {\beta_i} {\beta_i + s}  -1\right), \quad s \geq 0.
\end{align*}
This is a special case of the phase-type \lev process \cite{Asmussen} and its scale function  has an explicit expression written as a sum of exponential functions; see e.g.\ \cite{Egami_Yamazaki_2010_2, KKR}.
In particular, we consider the following two parameter sets:
\begin{description}
\item[\textbf{Case A} (without jumps):]  $\sigma = 0.2$, $\mu = -0.015$, $\gamma = 0$;
\item[\textbf{Case B} (with jumps):] $\sigma = 0.2$, $\mu = 0.055$, $\gamma = 0.5$, $(p_1, p_2) = (0.9,0.1)$, and $(\beta_1, \beta_2) = (9,1)$.
\end{description}
Here, $\mu$ is chosen so that the martingale property $\psi(1)=r-\delta = 0.005$ is satisfied.  In \textbf{Case B}, the jump size $U$ models both small and large jumps (with parameters $9$ and $1$) that occur with probabilities $0.9$  and $0.1$, respectively.


 \subsection{Optimality}\label{subsec:optimal} Under the parameter settings described above, we first confirm the optimality of the suggested barrier $V_B^*$ that satisfies $\mathcal{E}(V_B^*;V_B^*) = 0$.  Because the mapping $V_B \mapsto \mathcal{E}(V_B;V_B)$ (given in \eqref{E_B_B}) is monotonically increasing (see Proposition \ref{E_monotonicity}), the value of $V_B^*$ is computed by classical bisection methods.
 The corresponding capital structure is  then computed by \eqref{value_debt_asset} and \eqref{equity_value}.

At the top of Figure \ref{figure_value}, for \textbf{Cases A} and \textbf{B}, we plot  $V \mapsto \mathcal{E}(V; V_B^*)$ along with $V \mapsto \mathcal{E}(V; V_B)$ for $V_B \neq V_B^*$. Here, we confirm Theorem \ref{theorem_main}: the level $V_B^*$ satisfies the limited liability constraint \eqref{constraint}, and any level $V_B$ lower than $V_B^*$ violates \eqref{constraint}, while for $V_B$ larger than $V_B^*$, $\mathcal{E}(V; V_B)$ is dominated by $\mathcal{E}(V; V_B^*)$. The corresponding debt and firm values are also plotted in Figure \ref{figure_value}.

\begin{figure}[htbp]
\begin{center}
\begin{minipage}{1.0\textwidth}
\centering
\begin{tabular}{cc}
\includegraphics[scale=0.5]{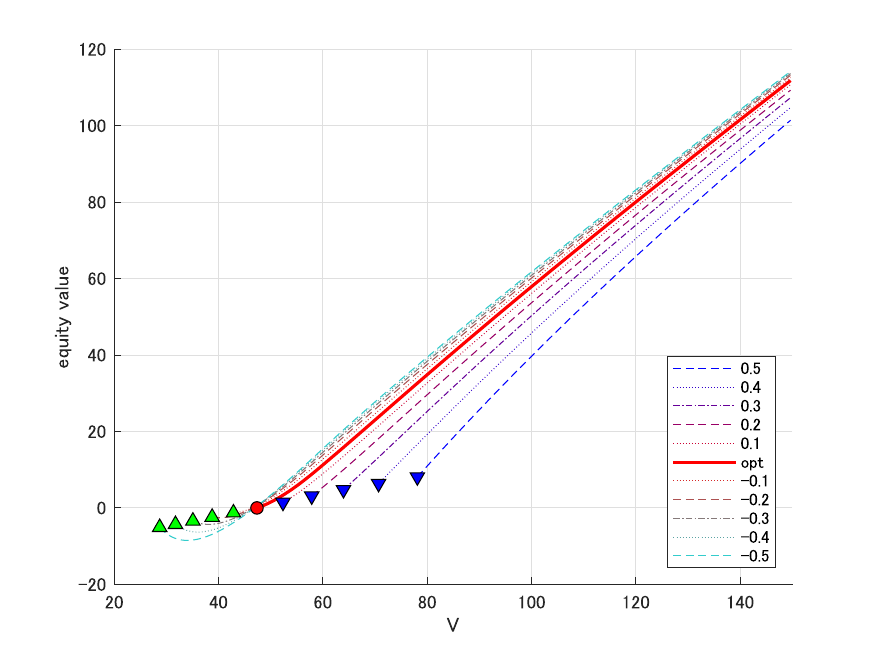} & \includegraphics[scale=0.5]{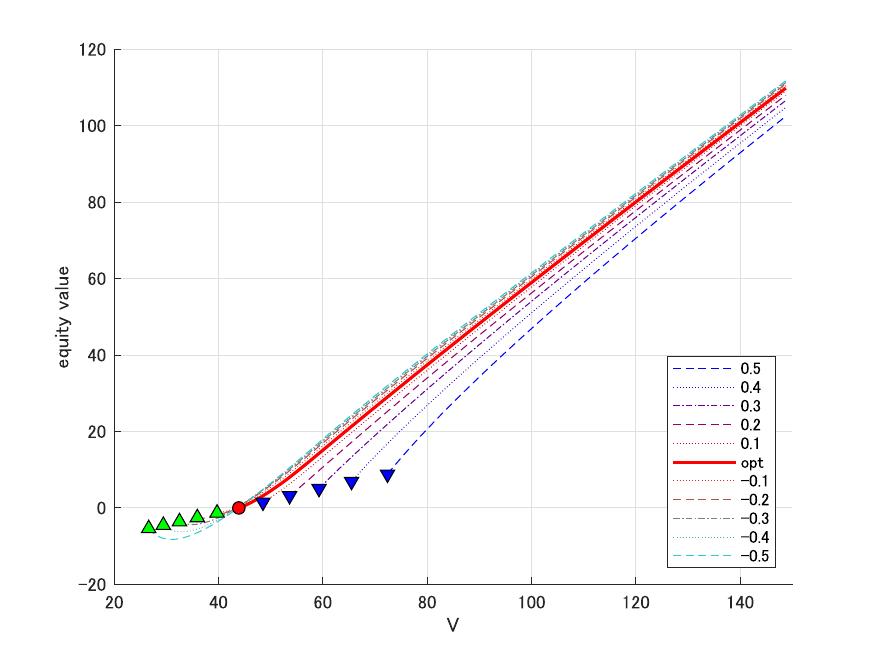} \\
\textbf{Case A}: equity value $V \mapsto \mathcal{E}(V; V_B)$ & \textbf{Case B}: equity value  $V \mapsto \mathcal{E}(V; V_B)$ \\
\includegraphics[scale=0.5]{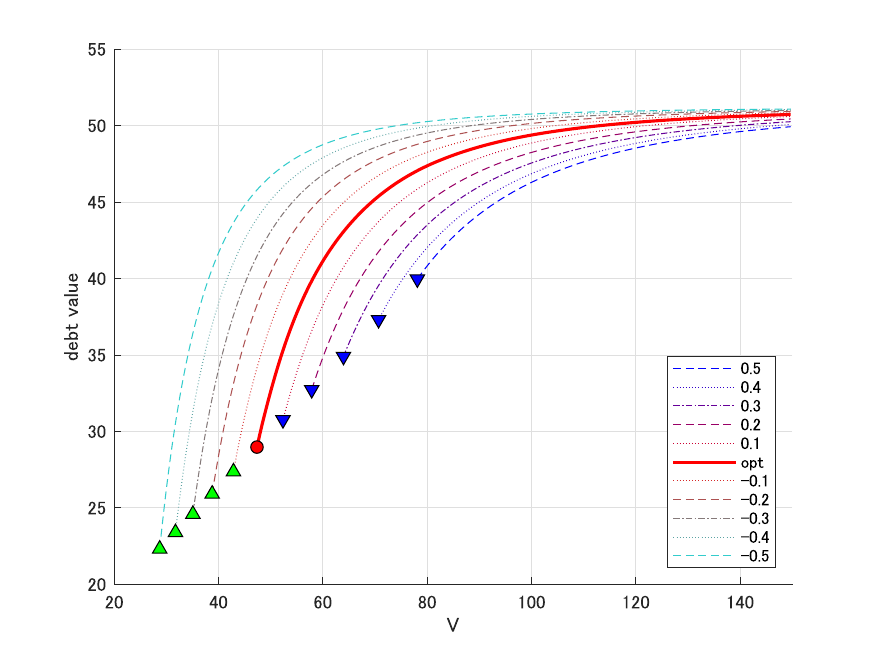} & \includegraphics[scale=0.5]{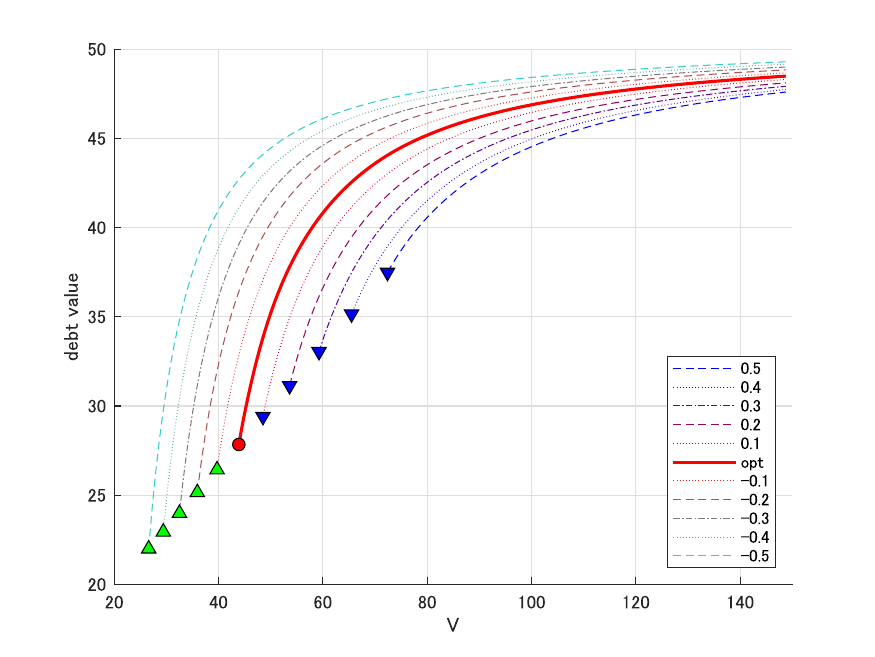} \\
\textbf{Case A}: debt value  $V \mapsto \mathcal{D}(V; V_B)$ & \textbf{Case B}: debt value  $V \mapsto \mathcal{D}(V; V_B)$ \\
\includegraphics[scale=0.5]{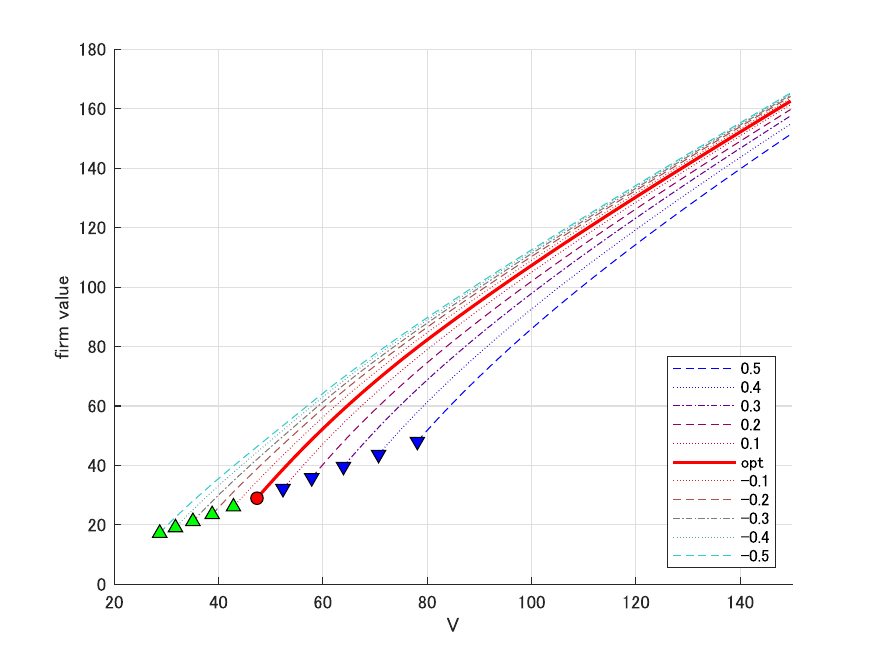} & \includegraphics[scale=0.5]{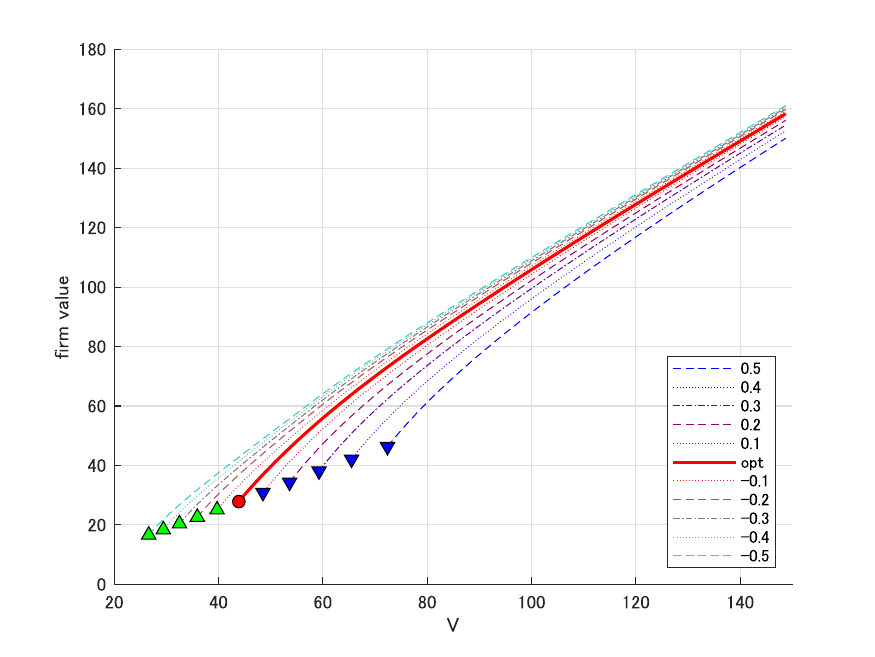} \\
\textbf{Case A}: firm  value  $V \mapsto \mathcal{V}(V; V_B)$ & \textbf{Case B}: firm value  $V \mapsto \mathcal{V}(V; V_B)$
\end{tabular}
\end{minipage}
\caption{\small{The equity/debt/firm values as functions of $V$  on $(V_B, \infty)$ for $V_B=V_B^*$ (solid) along with $V_B = V_B^*  \exp(\epsilon)$ (dotted) for $\epsilon = -0.5,-0.4, \ldots, -0.1, 0.1, 0.2, \ldots, 0.5$.} The values at $V = V_B$ are indicated by circles for $V_B = V_B^*$ whereas those for  $V_B < V_B^*$ (resp.\ $V_B > V_B^*$) are indicated by up (resp.\ down)-pointing triangles.} \label{figure_value}
\end{center}
\end{figure}


\subsection{Sensitivity with respect to $\lambda$ on the equity value} We now proceed to study the sensitivity of the optimal bankruptcy barrier and the equity value with respect to the rate of observation $\lambda$. On the left plot of Figure \ref{fig_r},  we show the equity value $\mathcal{E}(\cdot; V_B^*)$ for various values of $\lambda$ along with the classical (continuous-observation) case as obtained in \cite{Hilberink, Kyprianou}.  We see that the optimal barrier $V_B^*$ is decreasing in $\lambda$ and converges to the optimal barrier, say $\tilde{V}_B$, of the classical case. This confirms Remark \ref{remark_lambda_infty}.

We also confirm the convergence of $\mathcal{E}(V;V_B^*)$,  to the classical case, say $\tilde{\mathcal{E}}(V;\tilde{V}_B)$, for each starting value $V$. On the other hand, the monotonicity of $\mathcal{E}(V;V_B^*)$ with respect to $\lambda$ fails.  When $V$ is small, the equity value tends to be higher for small values of $\lambda$, but it is not necessarily so for higher values of $V$.  In order to investigate this, we show in the bottom plots of Figure \ref{fig_r}, the difference  $\mathcal{E}(V;V_B^*)- \tilde{\mathcal{E}}(V;\tilde{V}_B)$. We observe also the differences between \textbf{Cases A} and \textbf{B} -- in \textbf{Case A}, a lower value of $\lambda$ clearly achieves higher equity value when $V$ is large whereas this is not clear in \textbf{Case B}.

\subsection{Analysis of the bankruptcy time and the asset value at bankruptcy}


While it was confirmed that the barrier level $V_B^*$ is monotone in $\lambda$,
it is not clear how the distributions of $(T^-_{V_B^*}, V_{T^-_{V_B^*}})$ change in $\lambda$.  Here, by taking advantage of the joint Laplace transform $(q, \theta) \mapsto  J^{(q, \lambda)} (\cdot; \theta)$ as in \eqref{fun_gamma}, we compute numerically the density and distribution of the random variables $T^-_{V_B^*}$ and $V_{T^-_{V_B^*}}$ for each $\lambda$. We also obtain those in the classical case by inverting $(q, \theta) \mapsto H^{(q)} (\cdot; \theta)$ as in \eqref{der_gamma_0_0}.

For Laplace inversion, we adopt the Gaver-Stehfest algorithm, which was suggested to use in Kou and Wang \cite{Kou_Wang}  (see also Kuznetsov \cite{Kuznetsov} for its convergence results).
The algorithm is easy to implement and only requires real values. While a major challenge is to handle the cases involving large numbers, our case can be handled without difficulty  in the standard Matlab environment with double precision.

In our case, the scale function $W^{(q)}$ is written in terms of a linear sum of  $e^{\Phi(q)x}$ and $e^{-\xi_{i,q}x}$, $1 \leq i \leq n$ ($n=1$ in \textbf{Case A} and $n=3$ in \textbf{Case B}), where $\Phi(q)$ is as in \eqref{def_varphi} and $-\xi_{i,q}$ are the negative roots of $\psi(\cdot) = q$.
As in the proof of Lemma \ref{fun_H},
the terms for $e^{\Phi(q) x}$ all cancel out in the Laplace transforms $J^{(q, \lambda)}(\cdot; \theta)$ and $H^{(q)}(\cdot; \theta)$. Hence, the algorithm runs without the need of handling large numbers even for high values of $q$.  The same can be said about the parameter $\theta$.


For the initial value $V = 100$, we plot in Figure \ref{fig_bankruptcy_time_r} the density and distribution functions of $T^-_{V_B^*}$ and in Figure \ref{fig_bankruptcy_value_r} those for $V_{T^-_{V_B^*}}$ for the same parameter sets as used for
 Figure \ref{fig_r} (note that the value of $V_B^*$ depends on $\lambda$). For comparison, those in the classical case (computed by inverting $q, \theta \mapsto H^{(q)}(\log V; \theta)$) are also plotted. It is noted that in Figure \ref{fig_bankruptcy_value_r}, the distribution is not purely diffusive and instead the probability of the event $V_{T^-_{V_B^*}}=V_B^*$ is strictly positive. In particular, for \textbf{Case A}, $V_{T^-_{V_B^*}}=V_B^*$ a.s.
At least in our examples, the distribution functions for $T^-_{V_B^*}$  appear to be monotone in $\lambda$ while they are not for $V_{T^-_{V_B^*}}$.

\begin{figure}[htbp]
\begin{center}
\begin{minipage}{1.0\textwidth}
\centering
\begin{tabular}{cc}
\includegraphics[scale=0.5]{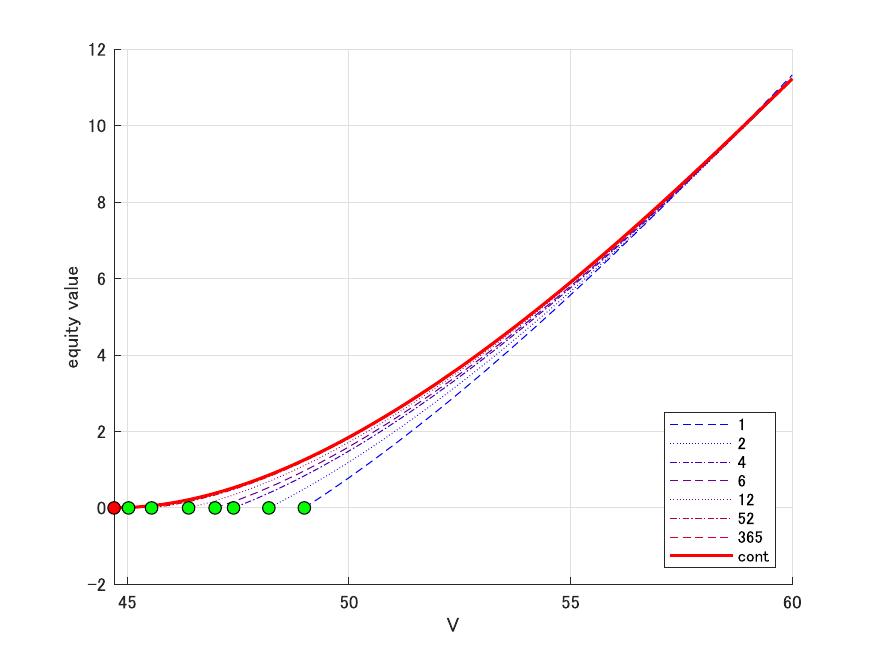} & \includegraphics[scale=0.5]{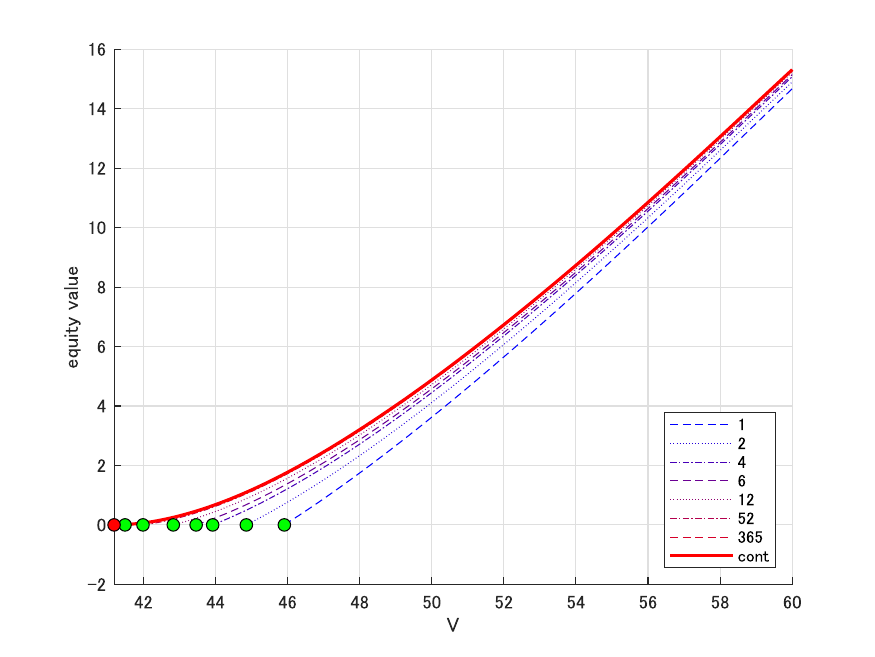} \\
\textbf{Case A: $V \mapsto \mathcal{E}(V; V_B^*)$} & \textbf{Case B: $V \mapsto \mathcal{E}(V; V_B^*)$} \\
\includegraphics[scale=0.5]{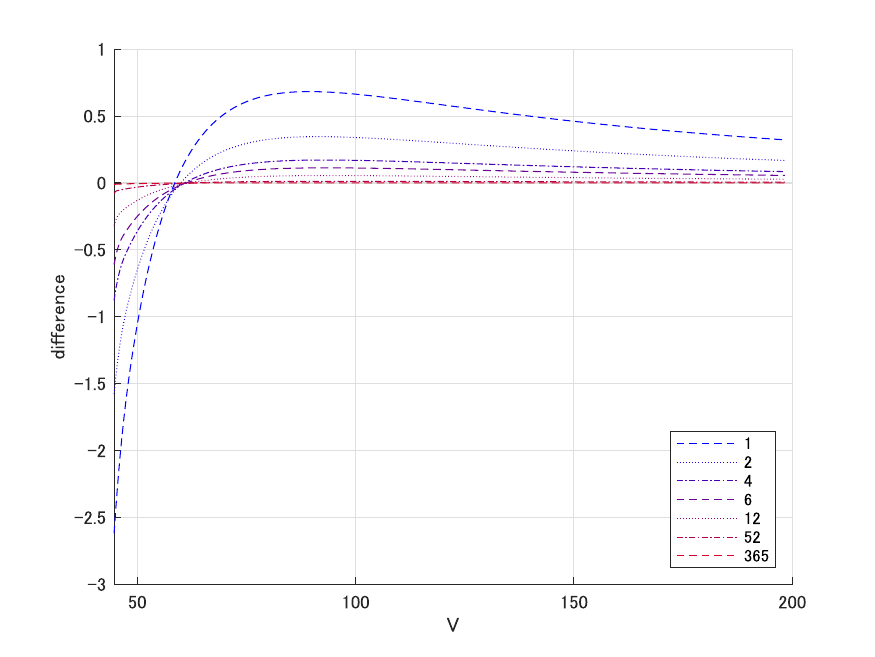}
& \includegraphics[scale=0.5]{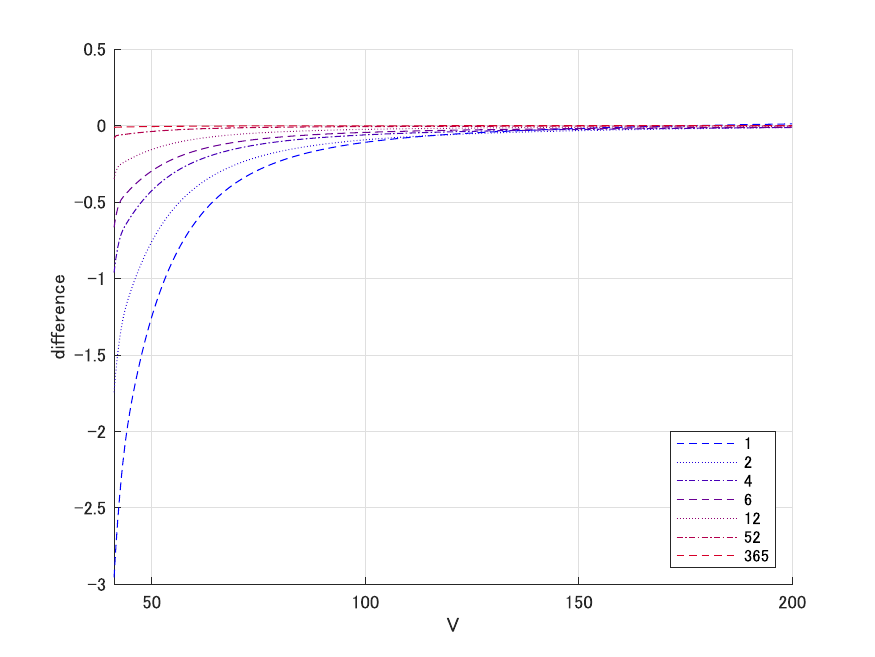} \\
\textbf{Case A: $V \mapsto \mathcal{E}(V; V_B^*) - \tilde{\mathcal{E}}(V; \tilde{V}_B)$} & \textbf{Case B: $V \mapsto \mathcal{E}(V; V_B^*) - \tilde{\mathcal{E}}(V; \tilde{V}_B)$}
\end{tabular}
\end{minipage}
\caption{\small{(Top) The equity values $\mathcal{E}(V; V_B^*)$ (dotted) for $\lambda = 1,2,4,6,12,52,365$ along with the classical case  $\tilde{\mathcal{E}}(V; \tilde{V}_B)$ (solid).  The corresponding values at $V = V_B^*$ are indicated by circles. (Bottom) The difference $\mathcal{E}(V; V_B^*) - \tilde{\mathcal{E}}(V; \tilde{V}_B)$ for the same set of $\lambda$.}} \label{fig_r}
\end{center}
\end{figure}




\subsection{Two-stage problem}

Now we consider the two-stage problem  \eqref{two-stage-problem}. Recall, as confirmed in Theorem \ref{theorem_concavity_P}, that the firm value $\mathcal{V}(V;V_B^*(P), P)$ is concave in $P$ for the case $V_T = 0$.  Here, in order to see if the concavity holds when $V_T > 0$, 
 we continue to use the tax cutoff level $V_T$ by \eqref{V_T_P} as a function of $P$.

For our numerical results, we set $V = 100$ and obtain $V_B^*$ for $P$ running from $0$ to $100$ (leverage $P/V$ running from $0$ to $1$).  The corresponding firm and debt values are computed for each $P$ and $V_B^* = V_B^*(P)$, and is shown in Figure \ref{figure_two_stage}.   For comparison, analogous results on the classical case are also plotted. 
Here, the concavity with respect to $P$ is confirmed in all considered cases.

Regarding the analysis with respect to $\lambda$, at least in these examples, we observe that the firm and debt values for each $P$ are monotone in $\lambda$ and converge to those in the classical case.
In addition, we see that the optimal face value $P^*$ decreases in $\lambda$ and converges to that in the classical case. 

\begin{figure}[htbp]
\begin{center}
\begin{minipage}{1.0\textwidth}
\centering
\begin{tabular}{cc}
\includegraphics[scale=0.5]{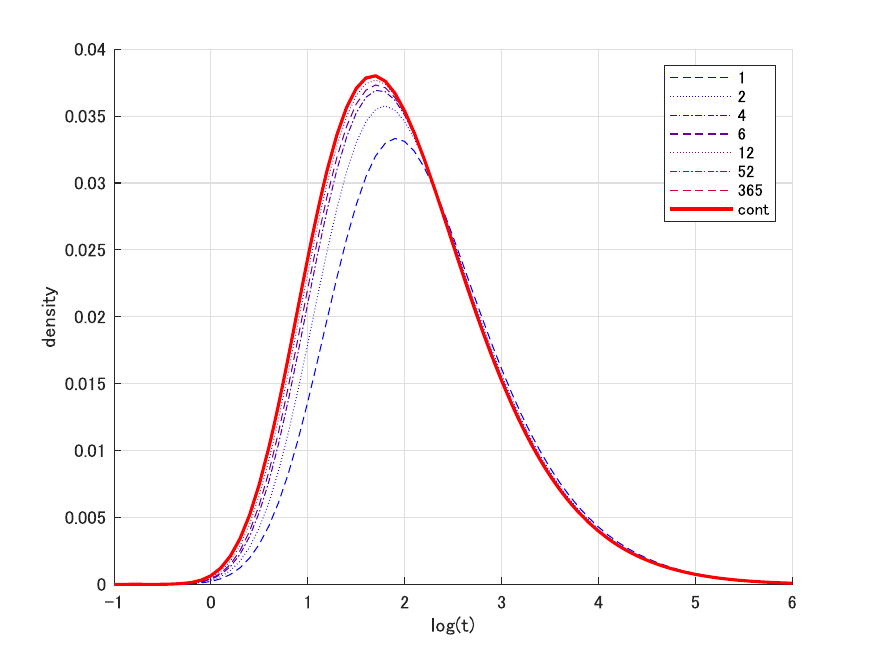} & \includegraphics[scale=0.5]{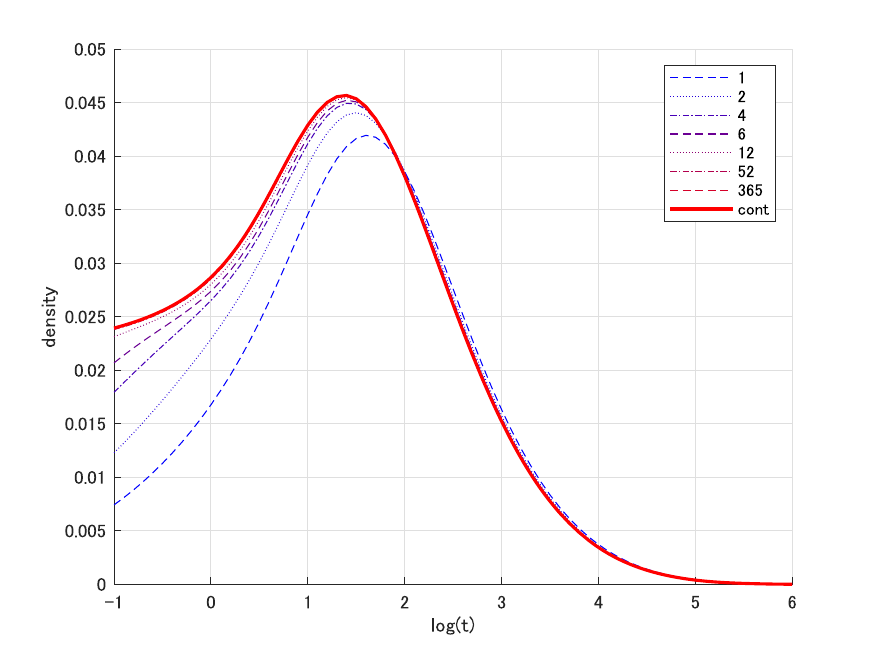} \\
\textbf{Case A: $\p (T_{V_B^*}^-  \in \diff t)/ \diff t$} & \textbf{Case B: $\p (T_{V_B^*}^-  \in \diff t)/ \diff t$}\\
\includegraphics[scale=0.5]{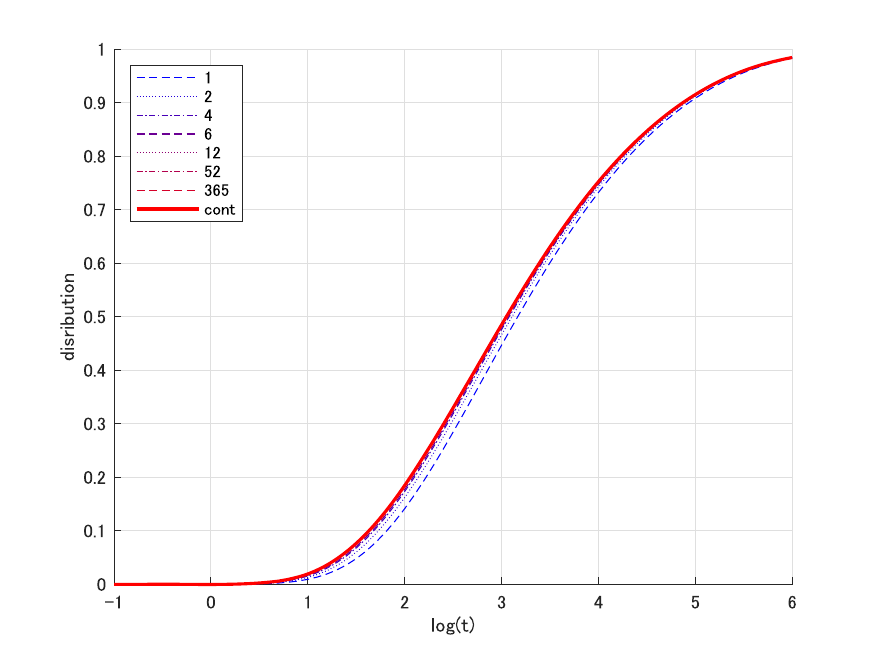}
& \includegraphics[scale=0.5]{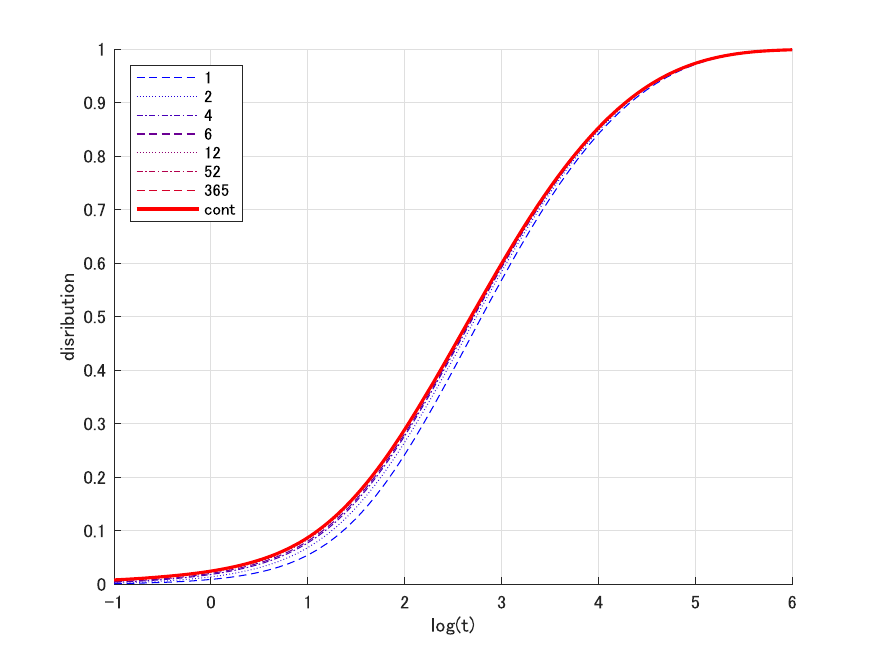} \\
\textbf{Case A: $\p (T_{V_B^*}^- \leq t)$} & \textbf{Case B: $\p (T_{V_B^*}^- \leq t)$}
\end{tabular}
\end{minipage}
\caption{
Density $\p (T_{V_B^*}^-  \in \diff t)/ \diff t$ and distribution $\p (T_{V_B^*}^- \leq t)$ (indicated by dotted lines) for $\lambda = 1,2,4,6,12,52,365$, the initial value $V = 100$, and  $V_B^*$ determined as in Figure \ref{fig_r}. The classical cases are also shown by solid lines. These values are plotted against the logarithm of time.} \label{fig_bankruptcy_time_r}
\end{center}
\end{figure}

\begin{figure}[htbp]
\begin{center}
\begin{minipage}{1.0\textwidth}
\centering
\begin{tabular}{cc}
\includegraphics[scale=0.5]{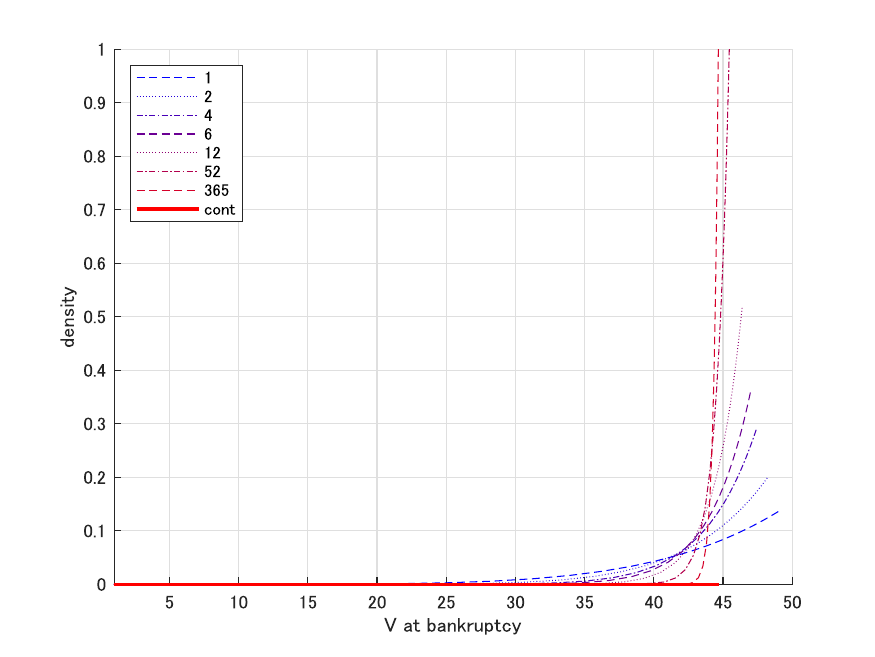} & \includegraphics[scale=0.5]{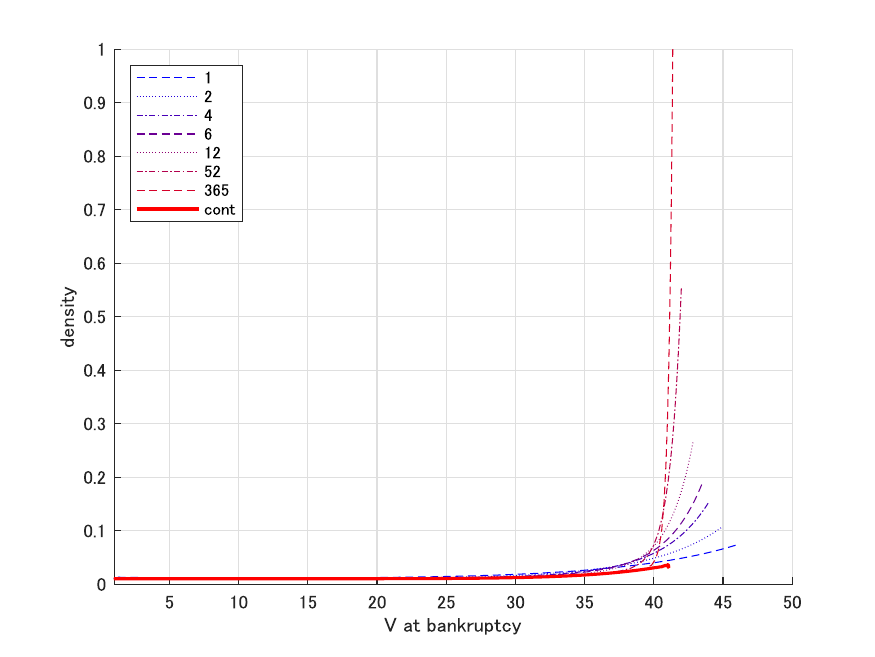} \\
\textbf{Case A:} $\p (V_{T_{V_B^*}^-}  \in \diff v)/ \diff v$ & \textbf{Case B:} $\p (V_{T_{V_B^*}^-}  \in \diff v)/ \diff v$ \\
\includegraphics[scale=0.5]{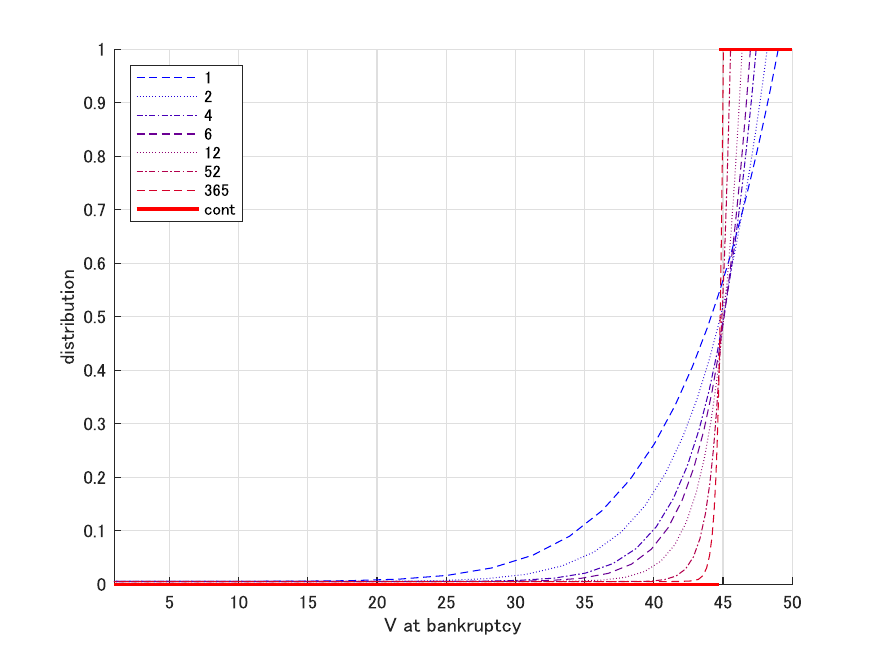}
& \includegraphics[scale=0.5]{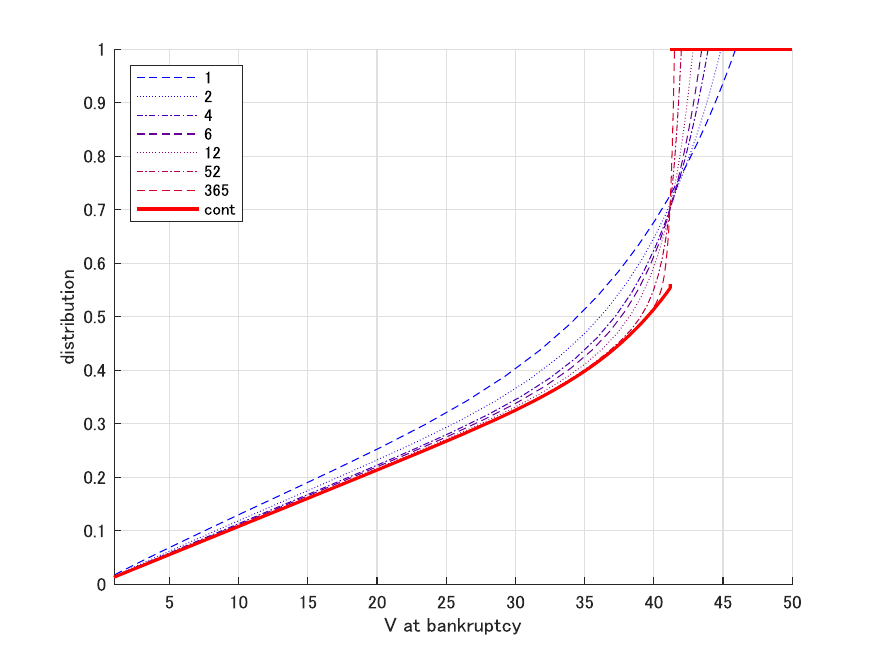} \\
\textbf{Case A:} $\p (V_{T_{V_B^*}^-} \leq v)$ & \textbf{Case B:} $\p (V_{T_{V_B^*}^-} \leq v)$
\end{tabular}
\end{minipage}
\caption{Density $\p (V_{T_{V_B^*}^-}  \in \diff v)/ \diff v$ and distribution $\p (V_{T_{V_B^*}^-} \leq v)$ (indicated by dotted lines) for $\lambda = 1,2,4,6,12,52,365$, the initial value $V = 100$, and  $V_B^*$ determined as in Figure \ref{fig_r}. The classical cases are also shown by solid lines, in which it has a positive mass at the bankruptcy level.} \label{fig_bankruptcy_value_r}
\end{center}
\end{figure}

\begin{figure}[htbp]
\begin{center}
\begin{minipage}{1.0\textwidth}
\centering
\begin{tabular}{cc}
\includegraphics[scale=0.5]{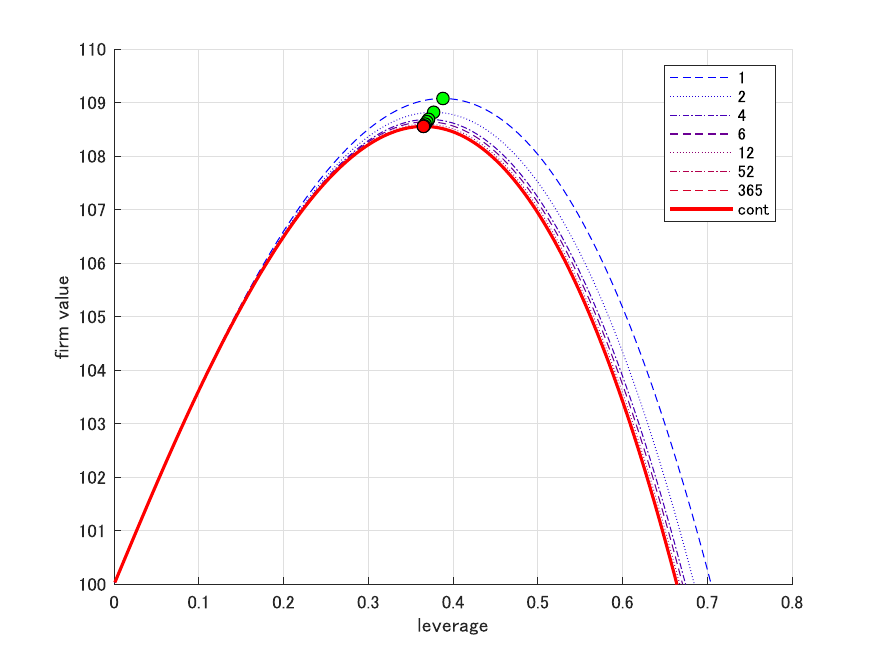}  & \includegraphics[scale=0.5]{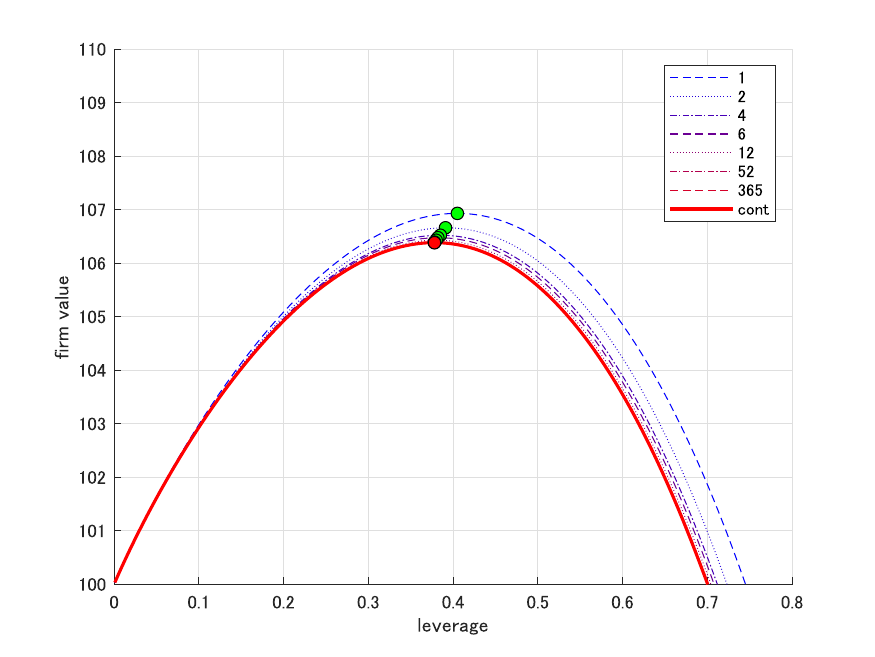}  \\
\textbf{Case A:} firm value  & \textbf{Case B:} firm value  \\
\includegraphics[scale=0.5]{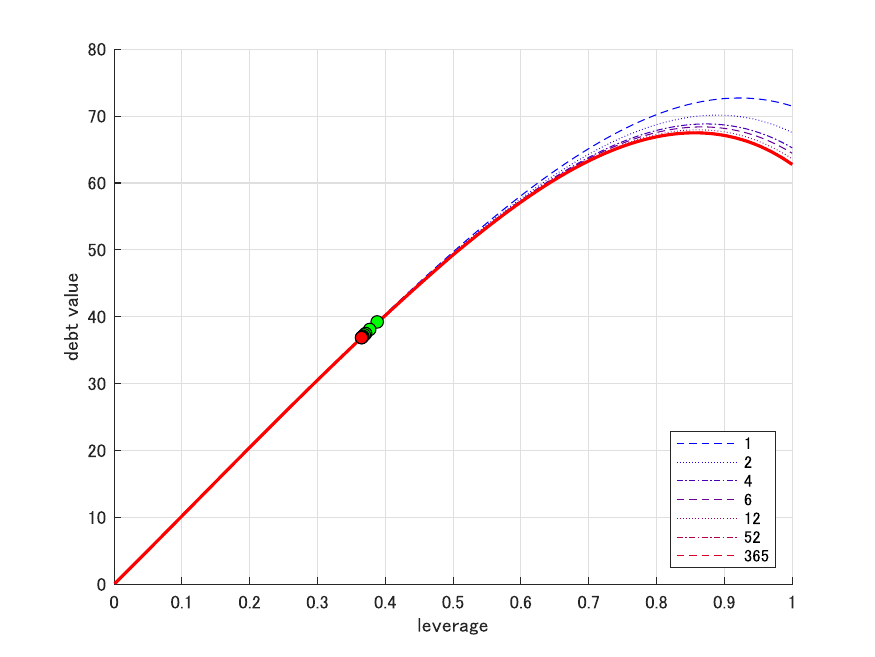}  & \includegraphics[scale=0.5]{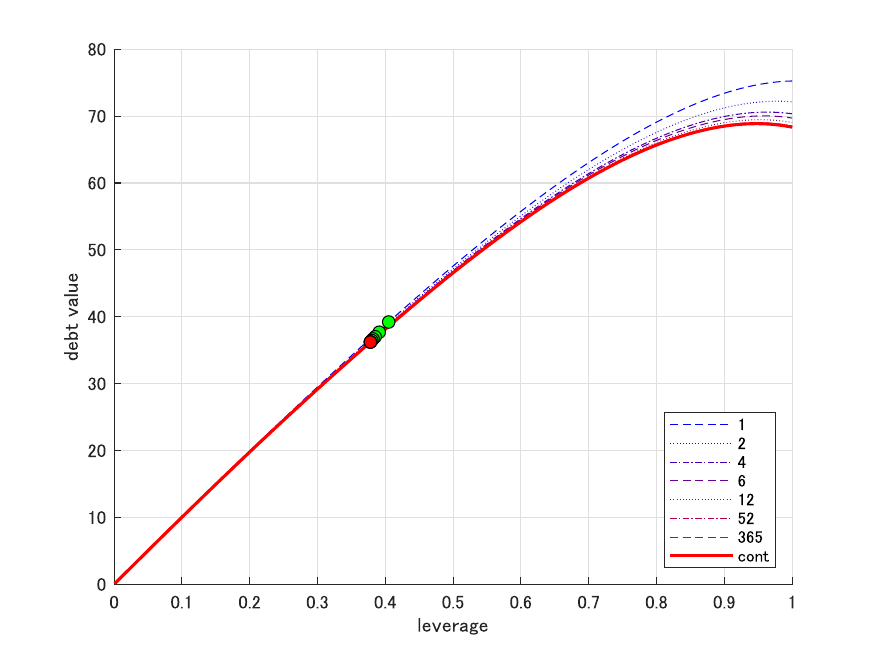}  \\
\textbf{Case A:} debt value  & \textbf{Case B:} debt value
\end{tabular}
\end{minipage}
\caption{\small{The firm values (top) and debt values (bottom) as functions of  the leverage $P/V$ for the two-stage problem for $V = 100$. The periodic cases with $\lambda = 1,2,4,6,12,52,365$ (dotted) are indicated by dotted lines and the classical case corresponds to the solid lines. The points at $P^*/V$ are indicated by the circles.}} \label{figure_two_stage}
\end{center}
\end{figure}


\subsection{The term structure of credit spreads}\label{sec:creditspreads}


We now move onto the analysis of the credit spread.
Let $V_B > 0$ be a fixed bankruptcy level. 
The credit spread is defined as the excess of the amount of coupon over the risk-free interest rate, required to induce the investor to lend one dollar to the firm until maturity time $t$.
To be more precise, by finding the coupon rate $\rho^*$ that makes the value of the debt $d(V;V_B,t)$ defined in (\ref{debt_constant_unit}) of unit face value equal to one, the credit spread $\rho^*-r$ is given after some rearrangement of (\ref{debt_constant_unit}) by
%
\begin{align}\label{eq:spreads}
\mathrm{CS}_{\lambda}(t)=\frac{r}{P}\frac{\mathbb{E}\Big[\big[P-(1-\alpha)V_{T_{V_B^-}} \big]e^{-r T_{V_B}^-}\mathbf{1}_{\{T_{V_B}^- \leq t\}} \Big]}{\mathbb{E} \big[1- e^{-r(t\wedge T_{V_B}^-)}\big]}.
\end{align} 

Before showing numerical results, we prove the following analytical limits.
The proofs are deferred to Appendices \ref{appendix_CS_limit_T_} and \ref{proof_conv_CS_lambda}.
\begin{proposition}\label{CS_limit_T_0}
For $V\not=V_B$, we have
		$\lim_{t \downarrow 0} \mathrm{CS}_{\lambda}(t)=\frac{\lambda}{P}\big[P-(1-\alpha)V\big] \mathbf{1}_{\{V<V_B\}}$. 
\end{proposition}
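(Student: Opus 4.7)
The plan is to expand both the numerator and the denominator of \eqref{eq:spreads} to leading order in $t$ as $t \downarrow 0$, handling the cases $V > V_B$ and $V < V_B$ in parallel. The denominator is universal: writing
\[
\mathbb{E}\big[1 - e^{-r(t \wedge T_{V_B}^-)}\big] = \int_0^t r e^{-rs} \mathbb{P}(T_{V_B}^- > s)\, \diff s,
\]
together with $T_{V_B}^- \geq T_1^\lambda > 0$ a.s.\ (so $\mathbb{P}(T_{V_B}^- > s) \to 1$ as $s \downarrow 0$), I conclude that the denominator is asymptotic to $r t$ in both cases, independently of the location of $V$.

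For the numerator $N(t) := \mathbb{E}[(P - (1-\alpha)V_{T_{V_B}^-})e^{-rT_{V_B}^-}\mathbf{1}_{\{T_{V_B}^- \leq t\}}]$, I decompose on whether the first Poisson arrival $T_1^\lambda$ already triggers bankruptcy. Using independence of the Poisson process and $(V_t)_{t\ge 0}$, the main contribution is
\[
I(t) := \int_0^t \lambda e^{-(\lambda+r)s} \mathbb{E}\big[(P - (1-\alpha)V_s)\mathbf{1}_{\{V_s < V_B\}}\big]\, \diff s,
\]
and the residual $R(t) := N(t) - I(t)$ is supported on the event that $V_{T_1^\lambda} \geq V_B$ while some later Poisson arrival in $[0,t]$ sees $V$ below $V_B$. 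Since $V_{T_{V_B}^-} < V_B$ on the relevant event, $|R(t)| \leq (P + (1-\alpha)V_B)\int_0^t \lambda e^{-\lambda s} \mathbb{P}(V_s \geq V_B)\,\diff s$, and a Campbell-type first-moment bound yields $\mathbb{P}(T_{V_B}^- \leq t) \leq \lambda \int_0^t \mathbb{P}(V_s < V_B)\,\diff s$ as a useful companion estimate.

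The right-continuity of $(V_s)_{s\ge 0}$ at $0$ with $V_0 = V$ now delivers both cases. When $V > V_B$, $\mathbb{P}(V_s < V_B) \to 0$ as $s \downarrow 0$, so both $I(t)$ (directly) and $|R(t)|$ (via $P \cdot \mathbb{P}(T_{V_B}^- \leq t)$ and the Campbell bound) are $o(t)$; combined with the denominator $\sim rt$, this gives $\mathrm{CS}_\lambda(t) \to 0$, matching the claim on $\{V > V_B\}$. When $V < V_B$, $V_s \to V$ a.s.\ forces $\mathbf{1}_{\{V_s < V_B\}} \to 1$ a.s., and bounded convergence gives $\mathbb{E}[(P - (1-\alpha)V_s)\mathbf{1}_{\{V_s < V_B\}}] \to P - (1-\alpha)V$ as $s \downarrow 0$. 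Lebesgue's differentiation theorem then yields $I(t)/t \to \lambda(P - (1-\alpha)V)$, while simultaneously $\mathbb{P}(V_s \geq V_B) \to 0$ makes $R(t) = o(t)$. Dividing by the denominator asymptotic $rt$ produces the limit $\tfrac{\lambda}{P}(P - (1-\alpha)V)$, which together with the $V > V_B$ case is summarised by the indicator $\mathbf{1}_{\{V < V_B\}}$ in the statement.

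The main obstacle is controlling the residual $R(t) = o(t)$ and verifying the precise linear scaling of $I(t)$; both rely on combining right-continuity of the càdlàg exponential Lévy path at $0$ with the Campbell first-moment formula for the independent Poisson observation process. Once those bounds are in place the result follows from bounded convergence and Lebesgue differentiation, with no need for the scale-function identities used elsewhere in the paper.
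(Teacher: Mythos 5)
Your proof is correct and uses the same underlying idea as the paper's---expand numerator and denominator to first order in $t$ and identify the leading contribution with the first Poisson observation---but the bookkeeping differs in a way worth noting. The paper conditions on the event $E=\{N_t^\lambda=1\}$ of \emph{exactly one} Poisson arrival in $[0,t]$, so that on $E\cap\{T_{V_B}^-\leq t\}$ the bankruptcy time coincides with $T_1^\lambda$; the complement contributes at most $\mathbb{P}(N_t^\lambda\geq 2)=O(t^2)$, giving a residual bound that is \emph{uniform} in $V$ and $V_B$ and treats both cases $V<V_B$ and $V>V_B$ at once. You instead split the numerator according to whether $T_1^\lambda$ itself triggers bankruptcy, which forces you into two separate residual estimates: for $V<V_B$ you bound $R(t)$ by $\mathbb{P}(T_1^\lambda\leq t,\,V_{T_1^\lambda}\geq V_B)=o(t)$, and for $V>V_B$ you abandon the decomposition entirely and bound the whole numerator by the Campbell estimate $\mathbb{P}(T_{V_B}^-\leq t)\leq\lambda\int_0^t\mathbb{P}(V_s<V_B)\,\diff s=o(t)$. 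Both routes are valid; yours is a touch more case-dependent where the paper's $O(t^2)$ bound is cleaner. On the other hand, your treatment of the denominator via the Fubini identity $\mathbb{E}[1-e^{-r(t\wedge T_{V_B}^-)}]=\int_0^t re^{-rs}\mathbb{P}(T_{V_B}^->s)\,\diff s$, combined with $\mathbb{P}(T_{V_B}^->s)\to 1$, is more direct than the paper's, which repeats the $E$/$E^c$ split a second time for the denominator. Your use of $V\neq V_B$ (to ensure $\mathbf{1}_{\{V_s<V_B\}}\to\mathbf{1}_{\{V<V_B\}}$ a.s.\ under right-continuity) is exactly where the hypothesis of the proposition enters, and you apply it correctly.
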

Let $CS(t)$ denote the credit spread in the classical case
as described in Hilberink and Rogers \cite{Hilberink}.
\begin{proposition}\label{conv_CS_lambda}
For  $V_B > 0$, $V\not=V_B$, and $t > 0$,
we have $\lim_{\lambda\to\infty}CS_{\lambda}(t)=CS(t)$. 
\end{proposition}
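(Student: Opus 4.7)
The plan is to reduce the claim to a pathwise convergence $T_{V_B}^-(\lambda) \downarrow \tau_{V_B}^- := \inf\{t > 0: V_t < V_B\}$ as $\lambda \to \infty$ under a nested coupling of the observation processes, and then apply bounded convergence to the numerator and denominator of $\mathrm{CS}_\lambda(t)$ in \eqref{eq:spreads}. I assume throughout the non-degenerate case $V > V_B$.

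To realize all the observation families jointly, I would use a single Poisson point process $\Pi$ on $(0,\infty)^2$ with intensity $ds\,du$, independent of $V$, and set $\mathcal{T}_\lambda := \{s : (s,u) \in \Pi,\ u \leq \lambda\}$. Then each $\mathcal{T}_\lambda$ is a rate-$\lambda$ Poisson process, the nesting $\mathcal{T}_\lambda \subseteq \mathcal{T}_{\lambda'}$ holds whenever $\lambda \leq \lambda'$, and consequently $T_{V_B}^-(\lambda) := \inf\{s \in \mathcal{T}_\lambda: V_s < V_B\}$ is non-increasing in $\lambda$ and bounded below by $\tau_{V_B}^-$, so $T_{V_B}^-(\lambda) \downarrow \tau^* \geq \tau_{V_B}^-$ almost surely.

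I would then identify $\tau^* = \tau_{V_B}^-$ as follows. On $\{\tau_{V_B}^- < \infty\}$, for each $\epsilon > 0$ the set $A_\epsilon := \{s \in [\tau_{V_B}^-, \tau_{V_B}^- + \epsilon] : V_s < V_B\}$ has $|A_\epsilon| > 0$ a.s.: in the overshoot case $V_{\tau_{V_B}^-} < V_B$ this is immediate from right-continuity of paths; in the creeping case $V_{\tau_{V_B}^-} = V_B$, which for spectrally negative $X$ forces a non-degenerate Gaussian component, the strong Markov property reduces it to the fact that a spectrally negative L\'evy process with $\sigma > 0$ starting from $0$ spends positive Lebesgue time in $(-\infty, 0)$ on every right-neighborhood of $0$, a consequence of regularity of $0$ for $(-\infty,0)$ together with Blumenthal's zero-one law. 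Conditionally on $V$, the number of atoms of $\Pi$ in $A_\epsilon \times [0,\lambda]$ is Poisson with mean $\lambda |A_\epsilon| \to \infty$, so $\mathbb{P}(T_{V_B}^-(\lambda) \leq \tau_{V_B}^- + \epsilon) \to 1$; letting $\epsilon \downarrow 0$ yields $\tau^* = \tau_{V_B}^-$. The complementary event $\{\tau_{V_B}^- = \infty\}$ is trivial, and right-continuity of $V$ then gives $V_{T_{V_B}^-(\lambda)} \to V_{\tau_{V_B}^-}$ a.s.

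For the concluding step, the denominator integrand $1 - e^{-r(t \wedge T_{V_B}^-(\lambda))}$ is uniformly bounded by $1$ and continuous in $T_{V_B}^-(\lambda)$, so bounded convergence gives convergence to $\mathbb{E}[1 - e^{-r(t \wedge \tau_{V_B}^-)}]$, which is strictly positive because $\tau_{V_B}^- > 0$ a.s. when $V > V_B$. The numerator integrand is dominated by $P$ and is continuous in $(T_{V_B}^-, V_{T_{V_B}^-})$ except on $\{T_{V_B}^- = t\}$; the distribution of $\tau_{V_B}^-$ on $(0,\infty)$ is atomless for any non-trivial spectrally negative L\'evy process, so $\mathbb{P}(\tau_{V_B}^- = t) = 0$ and bounded convergence again applies. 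Taking the ratio yields $\mathrm{CS}_\lambda(t) \to \mathrm{CS}(t)$. The main technical obstacle is the creeping case of the identification step, which relies on the fine regularity theory of spectrally negative L\'evy processes with Gaussian component to ensure $|A_\epsilon|>0$ a.s.
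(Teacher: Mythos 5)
Your proof is correct, but it takes a genuinely different route from the paper's. The paper argues through Laplace transforms: combining \eqref{der_gamma_0_0} and \eqref{fun_gamma}, it writes $J^{(q)}(y;\theta)$ as a linear combination of $H^{(q)}(y;\theta)$ and $H^{(q)}(y;\Phi(q+\lambda))$, notes that $H^{(q)}(y;\Phi(q+\lambda))\to 0$ as $\lambda\to\infty$ by the probabilistic representation \eqref{der_gamma_0_0}, and concludes $J^{(q)}(y;\theta)\to H^{(q)}(y;\theta)$ pointwise; L\'evy's continuity theorem then yields convergence in distribution of $(\tilde{T}_0^-(\lambda),X_{\tilde{T}_0^-(\lambda)})$ to $(\tau_0^-,X_{\tau_0^-})$, and Skorohod's representation theorem plus dominated convergence give the convergence of numerator and denominator in \eqref{eq:spreads}. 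You bypass the transform machinery entirely with a monotone coupling: realizing all observation grids as $\lambda$-slices of a single Poisson point process forces $T_{V_B}^-(\lambda)$ to decrease to a limit $\tau^*\geq\tau_{V_B}^-$, which you identify as $\tau_{V_B}^-$ by showing the asset path spends positive Lebesgue time below $V_B$ in every right-neighborhood of $\tau_{V_B}^-$ (right-continuity in the overshoot case; regularity of $0$ for $(-\infty,0)$, forced by $\sigma>0$, in the creeping case). Your route is more elementary and self-contained --- it needs neither L\'evy continuity nor Skorohod, and delivers almost-sure convergence directly on the original space --- while the paper's route re-uses the explicit identities already in hand and avoids any fine-path analysis. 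One simplification for your creeping step: once $0$ is regular for $(-\infty,0)$, right-continuity alone gives positive Lebesgue occupation of $(-\infty,0)$ on every right-neighborhood (take $s_n\downarrow 0$ with $X_{s_n}<0$; each $s_n$ opens a right-interval on which $X<0$), so Blumenthal's zero-one law is not actually needed. Both routes implicitly use the same atomlessness of the first-passage time to handle the discontinuity of the indicator $\mathbf{1}_{\{\cdot\le t\}}$; you make it explicit, the paper leaves it tacit in the dominated-convergence step.
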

\begin{remark}
While theoretically the credit spread vanishes in the limit as in Proposition \ref{CS_limit_T_0}, we will see below that the rate of convergence can be controlled by the selection of $X$ and $\lambda$ and can be made very slow as shown in Figure \ref{plot_credit_spread}.
\end{remark}

 \begin{figure}[htbp]
\begin{center}
\begin{minipage}{1.0\textwidth}
\centering
\begin{tabular}{cc}
  \includegraphics[scale=0.5]{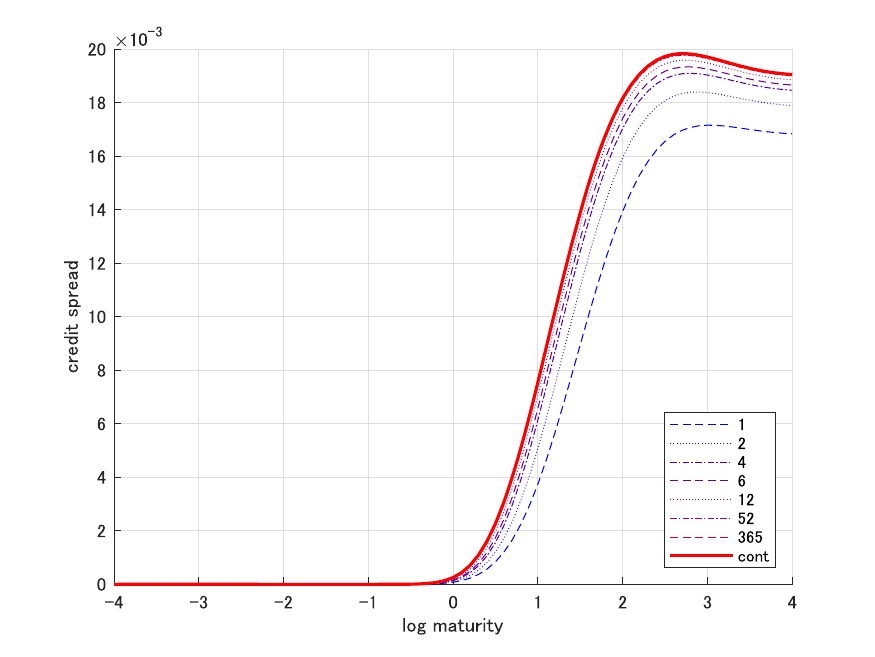} & \includegraphics[scale=0.5]{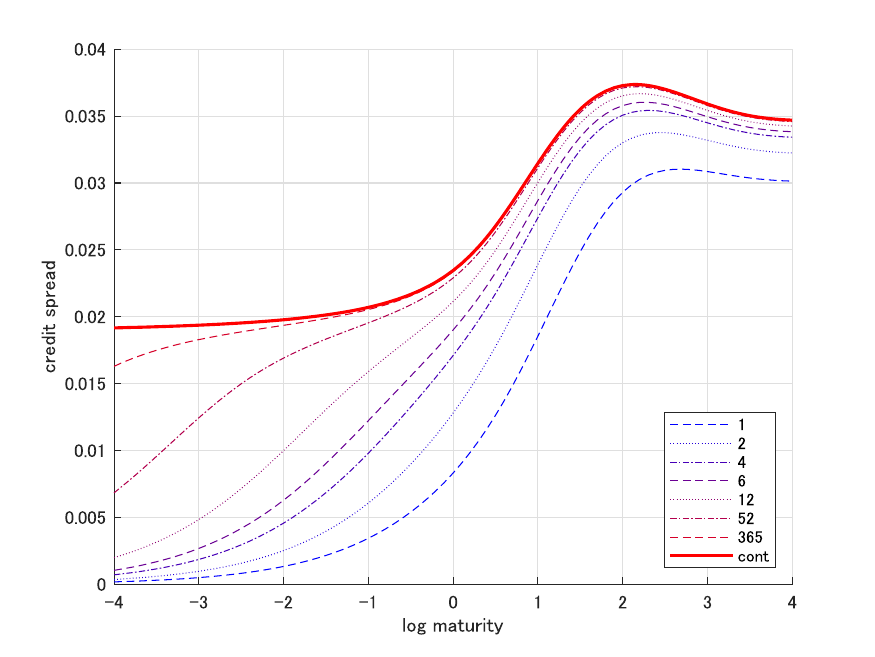} \\
   \textbf{Case A} with $L=50$ & \textbf{Case B} with $L=50$ \\
\includegraphics[scale=0.5]{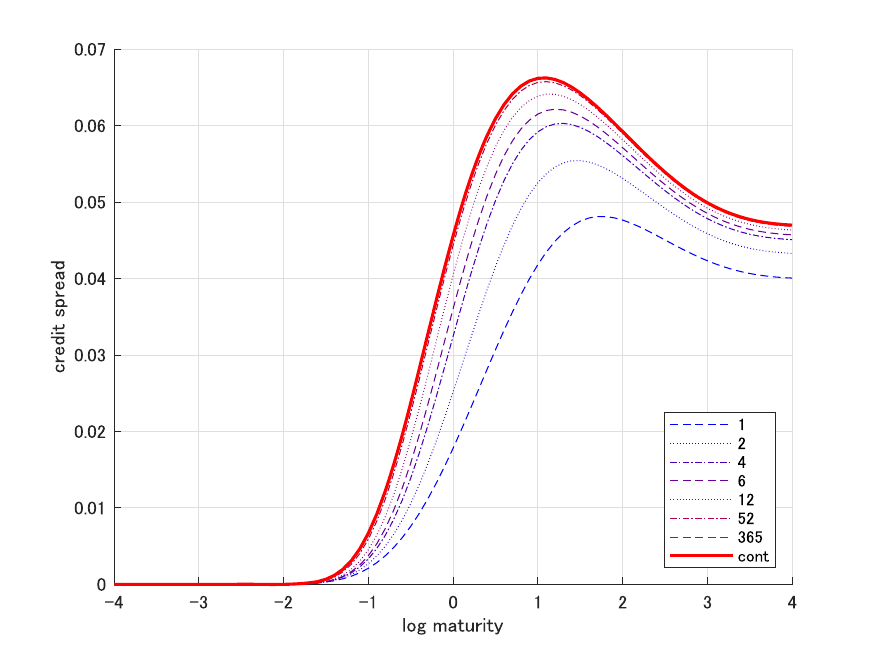} &
 \includegraphics[scale=0.5]{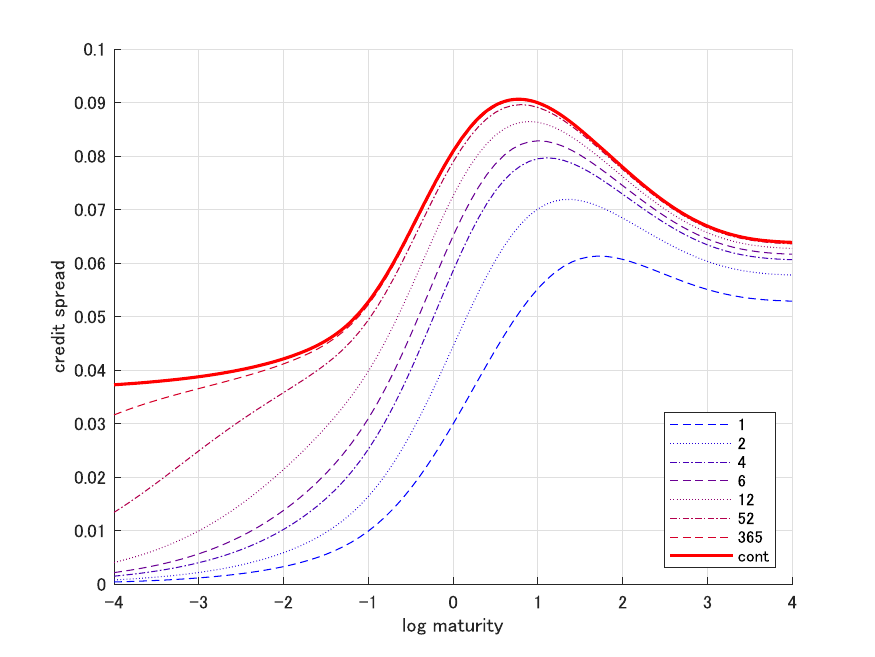}  \\
  \textbf{Case A} with $L=75$ & \textbf{Case B} with $L=75$
\end{tabular}
\caption{Term structure of credit spreads with respect to the logarithm of maturity for $V = 100$. The periodic cases with $\lambda = 1,2,4,6,12,52,365$ (dotted) are indicated by dotted lines and the classical case corresponds to the solid lines.
}  \label{plot_credit_spread}
\end{minipage}
\end{center}
\end{figure}

To compute credit spreads, we follow the procedures for Figure 6 (given in Appendix B) of \cite{Hilberink}.

Fix $V$ and $m$.  The first step is to choose, for a selected leverage $0 \leq L \leq 1$, the face value of debt $\hat{P} \equiv \hat{P}(L)$ and $\hat{\rho} = \hat{\rho}(L)$ satisfying $\mathcal{D}(V; \hat{V}_B^*) \equiv \mathcal{D}(V; \hat{V}_B^*; \hat{P}, \hat{\rho}) = \hat{P}$ and $L = \hat{P}/\mathcal{V}(V; \hat{V}_B) \equiv  \hat{P}/\mathcal{V}(V; \hat{V}_B; \hat{P}, \hat{\rho})$ where  $\hat{V}_B^*$ is the optimal bankruptcy level when $\rho = \hat{\rho}$ and $P = \hat{P}$.  For this computation, at least in our numerical experiments, the mapping $P \mapsto P/\mathcal{V}(V; \hat{V}_B; P, \rho)$, for fixed $\rho$,  is monotonically increasing and hence the root $\hat{P}(\rho)$ solving $L = \hat{P}(\rho)/\mathcal{V}(V; \hat{V}_B; \hat{P}(\rho), \rho)$ was obtained by classical bisection. In addition, $\rho \mapsto \mathcal{D}(V; \hat{V}_B^*; \hat{P}(\rho), \rho) - \hat{P}(\rho)$ was also monotone and hence the desired $\hat{P}$ and $\hat{\rho}$ were obtained by (nested) bisection methods.

For each leverage $L$, after $\hat{P}$ and $\hat{\rho}$ are computed, the second step is to obtain, for each maturity $t > 0$, the root $\rho^* = \rho^*(t)$ such that $1 = d(V; \hat{V}_B^*, t) \equiv d(V; \hat{V}_B^*, t; \rho^*)$ where
\begin{align*}
	d (V; \hat{V}_B^*, t; \rho) := \E \left[ \int_0^{t \wedge T_{\hat{V}_B^*}^-  } e^{-rs} \rho \diff s \right] + \E\left[ e^{-rt} \mathbf{1}_{\{t < T_{\hat{V}_B^*}^-  \}}\right] + \frac 1 {\hat{P}} \E \left[ e^{-r T_{\hat{V}_B^*}^-}V_{T_{\hat{V}_B^*}^-} \left(1-\alpha  \right) \mathbf{1}_{\{T_{\hat{V}_B^*}^-  < t \}} \right].
\end{align*}
The spread is given by $\rho^* - r$ (for each maturity $t$). The expectations on the right hand side can be computed again by the Gaver-Stehfest algorithm, by inverting
$q \mapsto J^{(q, \lambda)}(\cdot; \theta)$ as in \eqref{fun_gamma} for $\theta = 0, 1$. Those for the classical case can be computed by inverting $q \mapsto H^{(q)}(\cdot; \theta)$.

\begin{table}
\begin{center}
\begin{tabular}{l|lllllll|l}
$\lambda$  & 1 &2 & 4 & 6 & 12 & 52 & 365 & $\infty$ \\
 \hline
$\hat{P}$ &53.5721&53.2700 &53.1036&53.0457&52.9877&52.9419&52.9312&52.9297 \\
$\hat{\rho}$ &0.08643&0.08799&0.08892&0.08926&0.08960&0.08987&0.08994&0.08996 \\
$\hat{V}_B$ &53.6339&52.8191&51.9905&51.5509&50.9127&50.0097&49.4447&49.0871
\end{tabular} \\
\textbf{Case A} with  $L = 50$ \\
\begin{tabular}{l|lllllll|l}
$\lambda$ & 1 &2 & 4 & 6 & 12 & 52 & 365 & $\infty$ \\
 \hline
$\hat{P}$ &68.3632&66.8541&66.0011&65.7013&65.3961&65.1581&65.0978&65.0879 \\
$\hat{\rho}$ &0.11814&0.12462&0.1286&0.13006&0.13159&0.13281&0.13312&0.13318 \\
$\hat{V}_B$ &77.6117&76.3951&75.2&74.5702&73.656&72.3608&71.5453&71.0280
\end{tabular} \\
\textbf{Case A} with $L = 75$ \\
\begin{tabular}{l|lllllll|l}
$\lambda$ & 1 &2 & 4 & 6 & 12 & 52 & 365 & $\infty$ \\
 \hline
$\hat{P}$ &53.0411&52.7344&52.5543&52.4887&52.4216&52.3682&52.3529&52.3499 \\
$\hat{\rho}$ &0.10075&0.10459&0.10697&0.10785&0.10878&0.10953&0.10974&0.10977 \\
$\hat{V}_B$ &52.6127&51.8489&51.0405&50.6053&49.9712&49.0748&48.5135&48.1608
\end{tabular} \\
\textbf{Case B} with $L = 50$ \\
\begin{tabular}{l|lllllll|l}
$\lambda$ & 1 &2 & 4 & 6 & 12 & 52 & 365 & $\infty$ \\
 \hline
$\hat{P}$ &69.3832&67.8467&66.9418&66.6138&66.2712&65.9958&65.9225&65.9103 \\
$\hat{\rho}$ &0.1311&0.14061&0.14677&0.14911&0.15163&0.15372&0.15428&0.15438 \\
$\hat{V}_B$ &76.6621&75.5312&74.3924&73.7837&72.8906&71.6139&70.8058&70.2938
\end{tabular} \\
\textbf{Case B} with  $L = 75$
\end{center}
\caption{Values of $\hat{P}$, $\hat{\rho}$ and $\hat{V}_B$ satisfying $\mathcal{D}(V; \hat{V}_B^*) \equiv \mathcal{D}(V; \hat{V}_B^*; \hat{P}, \hat{\rho}) = \hat{P}$ and $L = \hat{P}/\mathcal{V}(V; \hat{V}_B) \equiv  \hat{P}/\mathcal{V}(V; \hat{V}_B; \hat{P}, \hat{\rho})$ for $L = 50,75$ for each $\lambda$ ($\lambda = \infty$ corresponds to the classical case).} \label{table_P_rho_V_B}
\end{table}

Here, we consider leverages $L = 50,75$ again for \textbf{Cases A} and \textbf{B}.  In Table \ref{table_P_rho_V_B}, the computed values of $\hat{P}$, $\hat{\rho}$ and $\hat{V}_B$ are listed for each $\lambda = 1,2,4,6,12,52,365$ along with those for the classical case.  In Figure \ref{plot_credit_spread}, we plot  the credit spread with respect to the log maturity for each $\lambda$. For comparison, we also plot those in the classical case.  The spread appears  to be monotone in $\lambda$ and converges to those in the classical case for each maturity.

Regarding the credit spread limit, while the convergence to zero has been confirmed in Proposition \ref{CS_limit_T_0} for the periodic case, the rate of convergence depends significantly on the selection of $\lambda$ and the underlying asset price process. In \textbf{Case A} (without negative jumps), it is clear that it vanishes quickly as in the classical case.  On the other hand in \textbf{Case B} (where the credit spread limit in the classical case does not vanish), for large values of $\lambda$ the convergence is very slow. In view of these observations, with a selection of asset values with negative jumps and the observation rate $\lambda$, it is capable of achieving realistic short-maturity credit spread behaviors.

\section{Concluding remarks} \label{section_conlusion} 

We studied an extension of the Leland-Toft optimal capital structure model where the information on the asset value is updated only at the jump times of an independent Poisson process. In settings where the asset value follows an exponential \lev process with negative jumps, we obtained explicitly an optimal bankruptcy strategy and the corresponding equity/debt/firm values. These analytical results enabled efficient conduct of  numerical experiments and further analysis of  the impact of the observation rate on the optimal leverages and credit spreads.

There are various venues for future research. First, it is a natural direction of research to consider the case in which the asset value process contains both positive and negative jumps.  Because positive jumps do not have direct influence on the model of the default,  similar results are expected and, for example, the optimal barrier is likely to be given by $V_B$ such that $\mathcal{E}(V_B;V_B) = 0$. 
While the techniques using the scale function employed in this paper cannot be directly applied to the two-sided jump cases, there are several potential alternative approaches.  
One approach would be to add phase-type upward jumps to the spectrally negative \lev process  via fluid embedding and construct a \lev process with two-sided jumps in terms of a Markov additive process. To do this the phase-type jumps of the L\'evy process can be substituted by linear stretches of unit slope. 
This procedure requires though adding a supplementary background  Markov chain; see e.g. 
\cite{Ivphd}  for details.
Another approach would be to focus on the \lev process with two-sided phase-type distributed jumps and use them to approximate for a general case. This may be possible by combining the results of Asmussen et al. \cite{Asmussen} and  Albrecher et al. \cite{Albrecher}.



Second, it is important to consider the constant grace period case  described in (1) of Section \ref{section_model_bankruptcy}.  As discussed, this paper's results, featuring exponential grace periods, may be used to approximate the constant  case when the grace period is short. However, an alternative approach is required when it is long.  One potential approach would be to use Carr's randomization method \cite{Carr} to approximate the constant period in terms of an Erlang random variable, or the sum of i.i.d.\ exponential random variables.  As conducted in \cite{Leung}, a recursive algorithm may be constructed to compute the required fluctuation identities.

\section*{acknowledgements}
The authors thank the anonymous referees and co-editor for careful reading of the paper and constructive comments and suggestions.
They also thank Nan Chen, Sebastian Gryglewicz, and Tak-Yuen Wong for helpful comments and discussions. K. Yamazaki is supported by MEXT KAKENHI grant no. 17K05377.
This paper was supported by the National Science Centre under the grant 2016/23/B/HS4/00566
(2017-2020).
Part of the work was completed while Z. Palmowski was visiting Kansai University and Kyoto University at the invitation of K. Yamazaki. Z. Palmowski is very grateful
for hospitality provided by Kazutoshi Yamazaki, Kouji Yano and Takashi Kumagai.

\appendix

\section{Relation between the bakruptcy model \eqref{our_default} and Parisian ruin.}\label{bankruptcy_parisianruin}
Let $G$ denote the set of the starting
points of the negative excursions of the shifted process $(V_t-V_B)_{t \geq 0}$, and consider a set of mutually independent exponential random variables $\{\mathbf{e}_\lambda^g:g\in G\}$, independent of $(V_t)_{t \geq 0}$ as well, and  $g_t :=\sup\{s\leq t:V_s\geq V_B \}$ be the last time before $t$ the asset value was at or above $V_B$ (i.e., the starting point of the excursion). Then the Parisian ruin with exponential grace periods is defined as
\begin{align}
\inf\{t >0: V_t<V_B\ \text{and} \ t>g_t+\mathbf{e}_\lambda^{g_t}\}. \label{ruin_Parisian_exp}
\end{align}

The equivalence to \eqref{our_default} can be easily verified.  
In each negative excursion with the starting time $g$ for the shifted process $(V_t-V_B)_{t\geq0}$ between two Poissonian observation times, say $T_{i(g)}$ and $T_{i(g)+1}$ for some $i(g) \geq 0$, we consider the waiting time until the next observation $T_{i(g)+1}-g$. Due to the lack of memory property of the exponential distribution and the strong Markov property, 
these waiting times are equal in distribution to a
set of mutually independent exponentially distributed random variables. Consequently,  \eqref{our_default} can be written as \eqref{ruin_Parisian_exp} with $\mathbf{e}_\lambda^{g_t}$ replaced by these independent exponential random variables. 
In fact, it has been shown in Remark 1.1 in \cite{BPPR} that the joint distribution of bankruptcy time \eqref{our_default} and the corresponding position of $X$ is the same as that of \eqref{ruin_Parisian_exp} and the corresponding position of $X$  
(refer to \cite{Pardo_Perez_Rivero, Avram_Perez_Yamazaki}  for related literature).

It is worth investing the impact of the randomness of the grace period. To this end, in Table \ref{table_comparison}, we compare  the expected discounted asset values at bankruptcy  for the cases the grace periods are constant and exponentially distributed (with the common mean $\lambda^{-1}$). When $\lambda$ is low, the random (exponential) case tends to overestimate the asset value, but  as $\lambda$ becomes larger (i.e.\ observation is more frequent), the differences become smaller.  This implies that when the observation is frequent, our model can approximate the constant grace period case reasonably well.
	
\begin{table}[htb]
	\begin{center}
		\begin{tabular}{cc}
			\begin{tabular}{|c|c|c|}
				\hline 
				$\lambda$ & constant & exponential \\
				\hline
				$1$&$4.710(4.674,4.747)$&$6.219(6.176,6.261)$ \\
				$2$&$5.795(5.753,5.838)$&$7.014(6.964,7.064)$ \\
				$4$&$6.717(6.674,6.760)$&$7.639(7.593,7.685)$ \\
				$6$&$7.125(7.070,7.179)$&$7.929(7.872,7.985)$ \\
				$12$&$7.727(7.668,7.785)$&$8.289(8.229,8.349)$ \\
				$52$&$8.543(8.492,8.595)$&$8.819(8.766,8.871)$ \\
				$365$&$8.886(8.824,8.947)$&$9.025(8.964,9.087)$  \\ 
				\hline
			\end{tabular} &
			\begin{tabular}{|c|c|c|}
				\hline 
				$\lambda$ & constant & exponential \\
				\hline
				$1$&$6.238(6.192,6.283)$&$7.749(7.692,7.807)$ \\
				$2$&$7.434(7.388,7.480)$&$8.589(8.537,8.642)$ \\
				$4$&$8.472(8.417,8.528)$&$9.395(9.338,9.451)$ \\
				$6$&$8.825(8.770,8.880)$&$9.584(9.530,9.638)$ \\
				$12$&$9.436(9.376,9.496)$&$9.976(9.914,10.037)$ \\
				$52$&$10.184(10.125,10.242)$&$10.444(10.385,10.503)$ \\
				$365$&$10.726(10.673,10.778)$&$10.820(10.766,10.873)$ \\
				\hline
			\end{tabular} \\
			\textbf{Case A} & \textbf{Case B}
		\end{tabular}
	\end{center}
	\caption{The discounted asset values at bankruptcy $\E[e^{-r \tau_{V_B}^-}V_{\tau_{V_B}^-} \mathbf{1}_{\{\tau_{V_B}^-  < \infty \}} ]$ when $\tau_{V_B}^-$ is the bankruptcy time with  constant and exponential grace periods with mean $\lambda^{-1}$.  The approximated values via Monte Carlo simulation are displayed together with their 95\% confidence intervals. We set  $r = 7.5 \%$ 
		and use  the \lev processes given in Cases A (without jumps) and B (with negative jumps) specified in Section \ref{section_numerics} so that  $(e^{-(r-\delta) t} V_t)_{t \geq 0}$ is a martingale for $\delta = 7\%$. The initial value of the process is $100$ and the bankruptcy level $V_B$ is $40$.} 
	\label{table_comparison}
\end{table}
\section{Proof of Proposition \ref{lambdaidentified}}
\label{proof_prop_lambdaidentified}

	For brevity, throughout the Appendix, we will use the notation
	\begin{align}
	z_T := z - \log V_T, \quad z \in \R. \label{def_z_T}
	\end{align}


We first obtain the $q$-resolvent measure of the spectrally negative \lev process $(X_t)_{t \geq 0}$ killed at the stopping time \eqref{default_time}
 in terms of the function $H^{(q+\lambda)}(\cdot; \theta)$ as in \eqref{der_gamma_0_0}, and
	\begin{multline}\label{Udef}
		I^{(q, \lambda)}(x,y) 
		:= W^{(q+\lambda)}(x+y)-\lambda\int_0^{x}W^{(q)}(x-z)W^{(q+\lambda)}(z+y) \diff z \\ -Z^{(q)}(x;\Phi(q+\lambda))W^{(q+\lambda)}(y), \quad q > 0, \; x, y \in \mathbb{R}.
\end{multline}
The proof of the following is given in Appendix \ref{proof_prop_resolvent}.
\begin{theorem}\label{prop_resolvent} 
	For any bounded measurable function $h:\R \to \R$ with compact support, 
	\begin{align*}
		\E_x\left[ \int_0^{\tilde{T}_z^-  } e^{-qt} h(X_t)  \diff t \right]= \int_{\R}
		h(y+z)R^{(q,\lambda)}(x-z,y) \diff y, \quad x, z \in \R,
	\end{align*}
	where
	\begin{align}\label{resol_dens}
		R^{(q,\lambda)}(x,y):=Z^{(q)}(x;\Phi(q+\lambda))\frac  {\Phi(q+\lambda)- \Phi(q)} {\lambda}
		H^{(q+\lambda)}(-y; \Phi(q))- I^{(q, \lambda)}(x,-y).
	\end{align}
\end{theorem}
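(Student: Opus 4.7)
By the spatial homogeneity of $(X_t)_{t\ge 0}$, it suffices to treat the case $z=0$: replacing $X$ by $(X_t-z)_{t\ge 0}$ and $h$ by $h(\cdot+z)$ reduces the statement to identifying the $q$-resolvent density $\rho(x,y)$ of $X$ killed at $\tilde T_0^-$ as $R^{(q,\lambda)}(x,y)$. To derive an equation for $\rho$, I would condition on the first Poisson epoch $T_1^\lambda\sim\mathrm{Exp}(\lambda)$, which is independent of $X$. Combining the strong Markov property at $T_1^\lambda$, the memoryless property of the exponential, and the identity $\tilde T_0^-=T_1^\lambda$ on $\{X_{T_1^\lambda}<0\}$, the function $R(x):=\E_x[\int_0^{\tilde T_0^-}e^{-qt}h(X_t)\,dt]$ satisfies
\[
R(x)=U^{(q+\lambda)}h(x)+\lambda\,U^{(q+\lambda)}\bigl(1_{[0,\infty)}R\bigr)(x),\qquad x\in\R,
\]
where $U^{(q+\lambda)}$ is the $(q+\lambda)$-resolvent of the unkilled process $X$, whose density $u^{(q+\lambda)}(x,y)$ has a standard spectrally negative \lev representation through $W^{(q+\lambda)}$ and $\Phi(q+\lambda)$. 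Equivalently in density form,
\[
\rho(x,y)=u^{(q+\lambda)}(x,y)+\lambda\int_0^\infty u^{(q+\lambda)}(x,w)\,\rho(w,y)\,dw,
\]
and the contraction estimate $\lambda\,U^{(q+\lambda)}1_{[0,\infty)}(x)\le\lambda/(q+\lambda)<1$ guarantees a unique bounded solution, so it suffices to exhibit one.

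Next, I would verify by direct substitution that the candidate $\rho(x,y)=Z^{(q)}(x;\Phi(q+\lambda))\,F(y)-I^{(q,\lambda)}(x,-y)$, with $F(y):=\tfrac{\Phi(q+\lambda)-\Phi(q)}{\lambda}H^{(q+\lambda)}(-y;\Phi(q))$, solves the integral equation. The scalar factor $Z^{(q)}(x;\Phi(q+\lambda))$ is dictated by the fact that, up to a constant, it is the eigenfunction of the operator $g\mapsto g-\lambda U^{(q+\lambda)}(1_{[0,\infty)}g)$ on $[0,\infty)$, a property that follows from the Laplace transform identity \eqref{scale_function_laplace} combined with the factorization $\psi(\Phi(q+\lambda))=q+\lambda$. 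The $y$-profile $F(y)$ is pinned down by matching the boundary behavior of $\rho$ at $\tilde T_0^-$, which is governed by the joint Laplace transform identity \eqref{fun_gamma}; the local correction $-I^{(q,\lambda)}(x,-y)$ absorbs the unkilled resolvent density $u^{(q+\lambda)}(x,y)$ together with its $[0,\infty)$-convolution against the rank-one piece $Z^{(q)}(\cdot;\Phi(q+\lambda))F(y)$.

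The main obstacle will be the algebraic bookkeeping required to evaluate the relevant convolutions of scale functions. Specifically, the integrals $\int_0^\infty u^{(q+\lambda)}(x,w)W^{(q+\lambda)}(w+y)\,dw$ and $\int_0^x W^{(q)}(x-u)W^{(q+\lambda)}(u+y)\,du$, together with their analogues with $Z^{(q)}(\cdot;\Phi(q+\lambda))$ in place of $W^{(q+\lambda)}$, must be evaluated by inverting \eqref{scale_function_laplace}. After collecting and matching the coefficients of $W^{(q+\lambda)}(-y)$, $Z^{(q+\lambda)}(-y;\Phi(q))$, and the various $x$-dependent scale-function terms on both sides, the renewal equation collapses to an elementary identity. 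Finally, undoing the reduction from $z=0$ and applying Fubini's theorem to $\int h(y+z)\rho(x-z,y-z)\,dy$ yields the stated integral expression.
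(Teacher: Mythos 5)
Your route is genuinely different from the paper's. The paper proceeds constructively: it first applies the strong Markov property at the \emph{classical} first passage time $\tau_0^-$ to write $g(x)=\E_x[\int_0^{\tau_0^-}e^{-qt}h(X_t)\,\diff t]+\E_x[e^{-q\tau_0^-}g(X_{\tau_0^-})]$, then analyzes $g$ on $(-\infty,0]$ via a second Markov decomposition at $\tau_0^+\wedge T_1^\lambda$ to express $g(x)$ for all $x$ in terms of the single unknown $g(0)$ (this is where the references to \cite{APP2007} and the proof of Theorem~4.1 in \cite{BPY} do the heavy lifting), and only at the last step invokes the renewal decomposition at $T_1^\lambda$ --- but only at the point $x=0$, and only to pin down the scalar $g(0)$. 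You instead propose the renewal equation $R=U^{(q+\lambda)}h+\lambda U^{(q+\lambda)}(1_{[0,\infty)}R)$ for \emph{all} $x$, back it with the contraction bound $\lambda U^{(q+\lambda)}1_{[0,\infty)}\le\lambda/(q+\lambda)<1$ to get uniqueness among bounded solutions, and then verify the candidate by substitution. The renewal equation itself is correct, and the uniqueness argument is clean and attractive: it replaces the paper's two-layer Markov decomposition with a single fixed-point principle.

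However, the proposal as written has a genuine gap: the verification step, which is where essentially all the content lives, is acknowledged as ``the main obstacle'' but never carried out. In particular, the assertion that $Z^{(q)}(\cdot;\Phi(q+\lambda))$ is ``up to a constant, the eigenfunction'' of $g\mapsto g-\lambda U^{(q+\lambda)}(1_{[0,\infty)}g)$ is stated but not proved; this is precisely the nontrivial convolution identity
\[
\lambda\int_0^\infty u^{(q+\lambda)}(x,w)\,Z^{(q)}(w;\Phi(q+\lambda))\,\diff w
= Z^{(q)}(x;\Phi(q+\lambda))-\tfrac{\lambda}{\psi'(\Phi(q+\lambda))(\Phi(q+\lambda)-\Phi(q))}e^{\Phi(q+\lambda)x},
\]
which combines \eqref{scale_function_laplace}, the unkilled resolvent density, and the integral in \eqref{Z_Phi_integral}, and which still has to be paired with the analogous computation against the local term $I^{(q,\lambda)}(\cdot,-y)$. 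These are exactly the scale-function convolutions that the paper imports from the proof of Theorem~4.1 in \cite{BPY}; without carrying them out (or citing an equivalent source) the ``exhibit one solution'' step of your argument is not closed. Two smaller points: you also need to check that the candidate density, once integrated against a bounded compactly supported $h$, yields a \emph{bounded} function of $x$ (required for your uniqueness to apply; the individual terms $Z^{(q)}(x;\Phi(q+\lambda))F(y)$ and $I^{(q,\lambda)}(x,-y)$ each grow in $x$ and only their difference is bounded); and in the final unshifting step the argument of $\rho$ should be $\rho(x-z,y)$, not $\rho(x-z,y-z)$, to match the claimed identity $\int h(y+z)R^{(q,\lambda)}(x-z,y)\,\diff y$.
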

Using Theorem \ref{prop_resolvent}, we show Proposition \ref{lambdaidentified}.  The case $V_T = 0$ is trivial and hence we assume $V_T > 0$ for the rest.
By integrating the density in Theorem \ref{prop_resolvent} and using \eqref{def_z_T}, we can write \eqref{Lambda_def} as
\begin{align}\label{def_lambda}
\Lambda^{(r,\lambda)}(x,z)&= Z^{(r)}(x-z;\Phi(r+\lambda))\frac  {\Phi(r+\lambda)- \Phi(r)} {\lambda} \mathcal{H}(z) - \mathcal{I}(x,z),
\end{align}
where we define
\begin{align}
\mathcal{H}(z) :=
\int_{-\infty}^{z_T} H^{(r+\lambda)}(y; \Phi(r))\diff y
 \quad \textrm{and} \quad \mathcal{I}(x,z) := \int_{-\infty}^{z_T}  I^{(r, \lambda)}(x-z,y)\diff y, \label{def_H_I_integral}
\end{align}
which are shown to be finite immediately below. 
The rest of the proof of Proposition \ref{lambdaidentified} is devoted to the simplification of the integrals $\mathcal{H}$ and $\mathcal{I}$.


\begin{lemma} \label{lemma_relation_scale_function_bar} For all $y \in \mathbb{R}$, we have
$\overline{W}^{(r+\lambda)}(y) -\lambda\int_0^{y}W^{(r)}(y-z)  \overline{W}^{(r+\lambda)}(z) \diff z= \overline{W}^{(r)}(y)$. 
\end{lemma}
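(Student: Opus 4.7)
The plan is to reduce the claimed identity to the standard first resolvent relation between scale functions of different parameters, namely
\begin{equation*}
W^{(r+\lambda)}(x) - W^{(r)}(x) = \lambda \int_0^x W^{(r)}(x-z)\, W^{(r+\lambda)}(z) \diff z, \quad x \geq 0,
\end{equation*}
which is a classical identity for spectrally negative \lev processes (immediate from the Laplace transform \eqref{scale_function_laplace} and the convolution theorem, see, e.g., \cite{KKR}). For $y \leq 0$ the claim is trivial since $W^{(q)} \equiv 0$ on $(-\infty,0)$, so the content is in the case $y > 0$.

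First, I would integrate the displayed identity in $x$ from $0$ to $y$, which yields
\begin{equation*}
\overline{W}^{(r+\lambda)}(y) - \overline{W}^{(r)}(y) = \lambda \int_0^y \int_0^x W^{(r)}(x-z)\, W^{(r+\lambda)}(z) \diff z \diff x.
\end{equation*}
Then I would apply Fubini to the double integral and carry out the inner $x$-integration (via $u = x-z$) to express the right-hand side as
\begin{equation*}
\lambda \int_0^y W^{(r+\lambda)}(z)\, \overline{W}^{(r)}(y-z) \diff z.
\end{equation*}

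The last step, which is the only mildly nontrivial point, is to verify that this convolution equals the target form $\int_0^y W^{(r)}(y-z)\, \overline{W}^{(r+\lambda)}(z) \diff z$. This would be done by integration by parts: differentiating $z \mapsto \overline{W}^{(r+\lambda)}(z)\, \overline{W}^{(r)}(y-z)$ and integrating over $[0,y]$, the boundary terms vanish since $\overline{W}^{(q)}(0) = 0$ for any $q$, leaving the equality of the two convolution integrals. Combining with the previous step delivers the lemma. I expect no real obstacle here; the only step that requires a moment of care is the convolution-symmetry identity just described, which is a one-line integration by parts.
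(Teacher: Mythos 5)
Your argument is correct, and it uses the same key ingredient as the paper -- the resolvent-type identity
\begin{equation*}
W^{(r+\lambda)}(x) - W^{(r)}(x) = \lambda \int_0^x W^{(r)}(x-z)\,W^{(r+\lambda)}(z)\,\diff z,
\end{equation*}
which is precisely identity (6) of \cite{LRZ} cited in the paper -- but the direction of the reduction is reversed. The paper differentiates the claimed identity in $y$ (after a change of variable to move $y$ out of $W^{(r)}$), observes that the boundary term $W^{(r)}(y)\,\overline{W}^{(r+\lambda)}(0)$ vanishes, matches the derivative with the known identity, and then integrates back using $\overline{W}^{(r)}(0)=\overline{W}^{(r+\lambda)}(0)=0$. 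You instead integrate the known identity up to $y$, apply Fubini plus the inner substitution to produce $\lambda\int_0^y W^{(r+\lambda)}(z)\,\overline{W}^{(r)}(y-z)\,\diff z$, and then need one extra integration by parts (of $z\mapsto\overline{W}^{(r+\lambda)}(z)\,\overline{W}^{(r)}(y-z)$, whose boundary terms vanish) to pass to the asymmetric form $\lambda\int_0^y W^{(r)}(y-z)\,\overline{W}^{(r+\lambda)}(z)\,\diff z$ appearing in the statement. Both are valid; the paper's differentiation route is marginally shorter because it avoids that last swap of which factor carries the antiderivative, whereas yours is perhaps more transparent about why the two convolution forms coincide. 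A third option, not used by either of you, would be to take Laplace transforms of the claimed identity directly and verify equality using \eqref{scale_function_laplace} and $\widehat{\overline{W}^{(q)}}(\theta)=\frac{1}{\theta(\psi(\theta)-q)}$.
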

\begin{proof}
We have
\begin{multline*}
\frac \partial {\partial y} \Big(\overline{W}^{(r+\lambda)}(y) -\lambda\int_0^{y}W^{(r)}(y-z)  \overline{W}^{(r+\lambda)}(z) \diff z \Big)
= \frac \partial {\partial y} \Big(\overline{W}^{(r+\lambda)}(y) -\lambda\int_0^{y}W^{(r)}(z)  \overline{W}^{(r+\lambda)}(y-z) \diff z \Big) \\
=W^{(r+\lambda)}(y) -\lambda\int_0^{y}W^{(r)}(z)  W^{(r+\lambda)}(y-z) \diff z = W^{(r)}(y),
\end{multline*}
where the last equality holds by identity (6) of \cite{LRZ}.
Integrating this and because $\overline{W}^{(r+\lambda)}(0) = \overline{W}^{(r)}(0) = 0$, the proof is complete.
\end{proof}

\begin{lemma}\label{lemma_int_upsilon} 
	We have, for $x,z \in \R$,
\begin{align*}
\mathcal{I}(x,z)&=\overline{W}^{(r+\lambda)}(x- \log V_T) \mathbf{1}_{\{z_T>0\}} +  \overline{W}^{(r)}(x-\log V_T)  \mathbf{1}_{\{z_T\leq 0\}} \\ &-\lambda\int_0^{x-z}W^{(r)}(x-z-u)   \overline{W}^{(r+\lambda)}(u+z_T) \diff u \mathbf{1}_{\{z_T>0\}} -Z^{(r)}(x-z;\Phi(r+\lambda))  \overline{W}^{(r+\lambda)}(z_T).
\end{align*}
\end{lemma}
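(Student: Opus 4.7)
The plan is to split the integrand $I^{(r,\lambda)}(x-z,y)$ into its three defining pieces and integrate each over $y\in(-\infty,z_T)$, keeping careful track of where the underlying scale function vanishes. Throughout, I will use that $W^{(q)}\equiv 0$ on $(-\infty,0)$, which truncates all the $y$-integrals.

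First I treat the easiest piece, the $Z$-term: since $Z^{(r)}(x-z;\Phi(r+\lambda))$ is independent of $y$, pulling it out and recognising $\int_{-\infty}^{z_T} W^{(r+\lambda)}(y)\diff y=\overline{W}^{(r+\lambda)}(z_T)$ (which equals $0$ when $z_T\le 0$ by the vanishing convention) produces exactly the last term $-Z^{(r)}(x-z;\Phi(r+\lambda))\overline{W}^{(r+\lambda)}(z_T)$ in the claim.

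Next, for the leading term $\int_{-\infty}^{z_T}W^{(r+\lambda)}(x-z+y)\diff y$, the substitution $v=x-z+y$ shows this equals $\overline{W}^{(r+\lambda)}(x-\log V_T)$ whenever $x>\log V_T$, and $0$ otherwise. For the convolution term $-\lambda\int_{-\infty}^{z_T}\int_0^{x-z}W^{(r)}(x-z-u)W^{(r+\lambda)}(u+y)\diff u\diff y$, I swap the order of integration (justified by Tonelli, as the integrand is nonnegative) and use $W^{(r+\lambda)}\equiv 0$ on negatives to reduce the inner integral to $\overline{W}^{(r+\lambda)}((u+z_T)\vee 0)$. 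At this point a case split on the sign of $z_T$ is natural: when $z_T>0$, the inner integrand is $\overline{W}^{(r+\lambda)}(u+z_T)$ on the whole range, producing the third summand of the claim; when $z_T\le 0$, the inner integral vanishes on $[0,-z_T]$, and the substitution $s=u+z_T$ rewrites the surviving piece as $-\lambda\int_0^{x-\log V_T}W^{(r)}(x-\log V_T-s)\overline{W}^{(r+\lambda)}(s)\diff s$.

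The main step is then the case $z_T\le 0$: combining the leading term and this reduced convolution gives
\[
\overline{W}^{(r+\lambda)}(x-\log V_T)-\lambda\int_0^{x-\log V_T}W^{(r)}(x-\log V_T-s)\overline{W}^{(r+\lambda)}(s)\diff s,
\]
which by Lemma~\ref{lemma_relation_scale_function_bar} collapses to $\overline{W}^{(r)}(x-\log V_T)$, i.e.\ the second summand of the claim. Together with the $Z$-term already identified (vanishing when $z_T\le 0$), this settles both cases and yields the stated formula. The main obstacle is really just the bookkeeping: swapping the order of integration cleanly, handling the indicator cases $z_T\gtrless 0$ (and the subordinate case $x\le\log V_T$, where every term vanishes by convention), and recognising that the convolution identity in Lemma~\ref{lemma_relation_scale_function_bar} is exactly what reconciles Case B with the single $\overline{W}^{(r)}$ appearing in the statement.
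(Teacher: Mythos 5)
Your proof is correct, but it takes a genuinely different route from the paper's. The paper splits the $y$-integral by range: on $[0,z_T]$ it expands $I^{(r,\lambda)}$ and uses Lemma~\ref{lemma_relation_scale_function_bar} to cancel the lower-limit terms, while on $(-\infty,0\wedge z_T]$ it invokes the cited identity $I^{(r,\lambda)}(x,y)=W^{(r)}(x+y)$ for $y\le 0$ (Remark 4.3 of \cite{BPY}) and integrates directly. You instead decompose $I^{(r,\lambda)}$ termwise, integrate each of the three pieces over the whole range $(-\infty,z_T)$, and let the vanishing of $W^{(q)}$ on $(-\infty,0)$ do the truncation for you. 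The upshot is that in the case $z_T>0$ you need no cancellation lemma at all, while for $z_T\le 0$ you deploy Lemma~\ref{lemma_relation_scale_function_bar} precisely where the paper would have invoked the \cite{BPY} identity; in effect you re-derive that cited fact in integrated form rather than quoting it. Your version is arguably more self-contained (no external appeal beyond the scale-function convolution already proved in the paper); the paper's version is shorter because it cites the $y\le0$ identity wholesale. Both are valid, and the Tonelli swap you flag is indeed the only real analytic step needing justification.
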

\begin{proof} 
For $z_T> 0$, we have
\begin{align*} 
\begin{split}
 \int_0^{z_T}  I^{(r, \lambda)}(x-z,y)\diff y &=\int_0^{z_T} W^{(r+\lambda)}(x-z+y) \diff y -\lambda\int_0^{x-z}W^{(r)}(x-z-u)  \int_0^{z_T} W^{(r+\lambda)}(u+y) \diff y \diff u \\
&\qquad -Z^{(r)}(x-z;\Phi(r+\lambda))  \int_0^{z_T} W^{(r+\lambda)}(y) \diff y \\
&=\overline{W}^{(r+\lambda)}(x- \log V_T)-\overline{W}^{(r+\lambda)}(x-z) \\ &\qquad-
\lambda\int_0^{x-z}W^{(r)}(x-z-u)  (\overline{W}^{(r+\lambda)}(u+z_T )-\overline{W}^{(r+\lambda)}(u)  )\diff u \\
	&\qquad
	-Z^{(r)}(x-z;\Phi(r+\lambda))  \overline{W}^{(r+\lambda)}(z_T ) \\
	&=\overline{W}^{(r+\lambda)}(x- \log V_T)-\overline{W}^{(r)}(x-z)   \\ &-\lambda\int_0^{x-z}W^{(r)}(x-z-u)   \overline{W}^{(r+\lambda)}(u+z_T) \diff u -Z^{(r)}(x-z;\Phi(r+\lambda))  \overline{W}^{(r+\lambda)}(z_T),
	\end{split}
\end{align*}
where we used $x-z+z_T = x - \log V_T$ (see \eqref{def_z_T}) in the second equality and Lemma \ref{lemma_relation_scale_function_bar} in the last equality.

On the other hand, because, as in  Remark 4.3(ii) in \cite{BPY}, 
\begin{align}
I^{(r, \lambda)}(x,y)=W^{(r)}(x+y), \quad y < 0, \label{I_below_zero}
\end{align}
we have
\begin{align*}
\int_{-\infty}^{0\wedge z_T}  I^{(r, \lambda)}(x-z,y)\diff y=	\int_{-\infty}^{0\wedge z_T} W^{(r)}(x-z+y)\diff y=\overline{W}^{(r)}(x-z+(0\wedge z_T)). 
\end{align*}
Now the result is immediate by summing up the two integrals and using (again see \eqref{def_z_T})
\begin{align*}
\overline{W}^{(r)}(x-z+(0\wedge z_T)) = \left\{ \begin{array}{ll} \overline{W}^{(r)}(x-z) & \textrm{if} \, z_T > 0, \\ \overline{W}^{(r)}(x-\log V_T) & \textrm{if} \,  z_T \leq 0. \end{array}\right.
\end{align*}
\qed
\end{proof}
We note that \eqref{def_lambda} together with Lemma \ref{lemma_int_upsilon} imply that 
\begin{equation}\label{lambda_b_b}
\Lambda^{(r,\lambda)}(z,z)=\frac  {\Phi(r+\lambda)- \Phi(r)} {\lambda} \int_{-\infty}^{z_T} H^{(r+\lambda)}(y; \Phi(r))\diff y, \quad z \in \mathbb{R}.
\end{equation}
\begin{lemma}\label{nowareferencja}
For $z \in \mathbb{R}$, we have
\begin{align}\label{int_H}
	\mathcal{H}(z)  = \frac 1 {\Phi(r)} \Big(Z^{(r+\lambda)}(z_T;\Phi(r))-\lambda\frac{\Phi(r+\lambda)}{\Phi(r+\lambda)- \Phi(r)}\overline{W}^{(r+\lambda)}(z_T) \Big).
  \end{align}
  \end{lemma}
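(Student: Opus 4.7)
The plan is to compute $\mathcal{H}(z)=\int_{-\infty}^{z_T} H^{(r+\lambda)}(y;\Phi(r))\,dy$ directly, by first simplifying the integrand using the defining formula \eqref{der_gamma_0_0} and then integrating in closed form via a primitive for the second scale function.

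First, I would specialize \eqref{der_gamma_0_0} to $q=r+\lambda$ and $\theta=\Phi(r)$. Since $\psi(\Phi(r))=r$, the prefactor in front of $W^{(r+\lambda)}$ becomes
\[
\frac{\psi(\Phi(r))-(r+\lambda)}{\Phi(r)-\Phi(r+\lambda)}=\frac{-\lambda}{\Phi(r)-\Phi(r+\lambda)}=\frac{\lambda}{\Phi(r+\lambda)-\Phi(r)},
\]
so that
\[
H^{(r+\lambda)}(y;\Phi(r))=Z^{(r+\lambda)}(y;\Phi(r))-\frac{\lambda}{\Phi(r+\lambda)-\Phi(r)}W^{(r+\lambda)}(y).
\]

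Next I would split the integral at $0$. On $(-\infty,0]$, $W^{(r+\lambda)}$ vanishes and $Z^{(r+\lambda)}(y;\Phi(r))=e^{\Phi(r)y}$ by the definition of $Z$, yielding $\int_{-\infty}^0 H^{(r+\lambda)}(y;\Phi(r))\,dy=1/\Phi(r)$. This handles the case $z_T\leq 0$ (in which $Z^{(r+\lambda)}(z_T;\Phi(r))=e^{\Phi(r)z_T}$ and $\overline{W}^{(r+\lambda)}(z_T)=0$), and I would verify that it matches the stated formula.

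For the case $z_T>0$, the main computation is $\int_0^{z_T}Z^{(r+\lambda)}(y;\Phi(r))\,dy$. Here the key identity (a direct consequence of differentiating the definition of $Z^{(q)}(\cdot;\theta)$) is
\[
\frac{d}{dy}Z^{(q)}(y;\theta)=\theta\,Z^{(q)}(y;\theta)+(q-\psi(\theta))W^{(q)}(y),
\]
which, applied with $q=r+\lambda$ and $\theta=\Phi(r)$ (so that $q-\psi(\theta)=\lambda$) and integrated from $0$ to $z_T$, gives
\[
\int_0^{z_T}Z^{(r+\lambda)}(y;\Phi(r))\,dy=\frac{1}{\Phi(r)}\Bigl[Z^{(r+\lambda)}(z_T;\Phi(r))-1-\lambda\,\overline{W}^{(r+\lambda)}(z_T)\Bigr].
\]
Adding this to the $1/\Phi(r)$ from the negative half-line, subtracting $\tfrac{\lambda}{\Phi(r+\lambda)-\Phi(r)}\overline{W}^{(r+\lambda)}(z_T)$ (the integral of the $W^{(r+\lambda)}$ piece), and collecting the $\overline{W}^{(r+\lambda)}(z_T)$ terms via
\[
\frac{1}{\Phi(r)}+\frac{1}{\Phi(r+\lambda)-\Phi(r)}=\frac{\Phi(r+\lambda)}{\Phi(r)\bigl(\Phi(r+\lambda)-\Phi(r)\bigr)},
\]
produces precisely the right-hand side of \eqref{int_H}. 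The only real step of any substance is the primitive of $Z^{(r+\lambda)}(\cdot;\Phi(r))$; the rest is bookkeeping, so I expect no serious obstacle.
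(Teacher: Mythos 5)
Your proof is correct and follows essentially the same route as the paper: decompose $H^{(r+\lambda)}(\cdot;\Phi(r))$ into $Z^{(r+\lambda)}(\cdot;\Phi(r))$ minus the $W^{(r+\lambda)}$ piece, split the range of integration at $0$, and compute the primitive of $Z^{(r+\lambda)}(\cdot;\Phi(r))$ on $[0,z_T]$. The only (cosmetic) difference is that you obtain that primitive by integrating the derivative identity $\tfrac{d}{dy}Z^{(q)}(y;\theta)=\theta Z^{(q)}(y;\theta)+(q-\psi(\theta))W^{(q)}(y)$ --- the same identity the paper invokes as \eqref{Z_Phi_q_r_derivative} elsewhere --- whereas the paper's proof of this lemma expands $Z$ via its definition and swaps the order of integration by Fubini; both yield the same expression $\frac{1}{\Phi(r)}\bigl[Z^{(r+\lambda)}(z_T;\Phi(r))-1-\lambda\overline{W}^{(r+\lambda)}(z_T)\bigr]$.
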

\begin{proof}
First, by \eqref{der_gamma_0_0}, we have
\begin{align*}
H^{(r+\lambda)}(y;\Phi(r)) = e^{\Phi(r)y}\left(1+\lambda\int_0^{y}e^{-\Phi(r)u}W^{(r+\lambda)}(u)\diff u\right)-\frac{\lambda}{\Phi(r+\lambda)-\Phi(r)}W^{(r+\lambda)}(y), \quad y \in \R,
\end{align*}
where, in particular, $H^{(r+\lambda)}(y;\Phi(r))  = e^{\Phi(r)y}$ for $y < 0$. 
For $z_T>0$, 
\begin{multline*}
\int^{z_T}_0 e^{\Phi(r)y}\int_0^{y}e^{-\Phi(r)u}W^{(r+\lambda)}(u)\diff u\diff y = \int^{z_T}_0 \int_u^{z_T} e^{\Phi(r)y}e^{-\Phi(r)u}W^{(r+\lambda)}(u)\diff y \diff u \\
\quad= \int^{z_T}_0 \frac {e^{\Phi(r) z_T} - e^{\Phi(r) u}} {\Phi(r)} e^{-\Phi(r)u}W^{(r+\lambda)}(u) \diff u = \frac 1 {\Phi(r)} \Big[ \int^{z_T}_0 e^{\Phi(r) (z_T-u)} W^{(r+\lambda)}(u) \diff u -   \overline{W}^{(r+\lambda)}(z_T) \Big],
\end{multline*}
and hence
\begin{align*} 
	\begin{split}
		&
		\int^{z_T}_0H^{(r+\lambda)}(y;\Phi(r))\diff y
		=
		\int^{z_T}_0\left[e^{\Phi(r)y}\left(1+\lambda\int_0^{y}e^{-\Phi(r)u}W^{(r+\lambda)}(u)\diff u\right)-\frac{\lambda}{\Phi(r+\lambda)-\Phi(r)}W^{(r+\lambda)}(y)\right]\diff y \\
		&\quad=\frac 1 {\Phi(r)} \Big[ e^{\Phi(r)z_T}-1+\lambda\int_0^{z_T}e^{\Phi(r)(z_T-u)}W^{(r+\lambda)}(u)\diff u-\lambda\frac{\Phi(r+\lambda)}{\Phi(r+\lambda)- \Phi(r)}\overline{W}^{(r+\lambda)}(z_T) \Big] \\
		&\quad =\frac 1 {\Phi(r)} \Big[ Z^{(r+\lambda)}(z_T;\Phi(r))-1-\lambda\frac{\Phi(r+\lambda)}{\Phi(r+\lambda)- \Phi(r)}\overline{W}^{(r+\lambda)}(z_T) \Big].
	\end{split}
\end{align*}
On the other hand, for $z_T\in\R$,
 $\int^{z_T \wedge 0}_{-\infty} H^{(r+\lambda)}(y; \Phi(r))\diff y
	=
	\int^{z_T \wedge 0}_{-\infty}  e^{\Phi(r)y}\diff y
	=
	e^{\Phi(r)(z_T \wedge 0)
	} / \Phi(r)$.
By summing up the integrals, we obtain \eqref{int_H}.
\qed \end{proof}

Now applying Lemmas \ref{lemma_int_upsilon}  and \ref{nowareferencja} in \eqref{def_lambda}, we get Proposition \ref{lambdaidentified}.


\section{Other proofs}

 \subsection{Proof of Lemma \ref{lemma_Lambda_limit}}\label{proof_lemma_Lambda_limit}

For the case $V_T > 0$,
	\[\Lambda^{(r,\lambda)}(z,z)=\E_{z} \big[ \int_0^{\tilde{T}^-_z} e^{-rt}\mathbf{1}_{\{X_t\geq\log V_T\}}  \diff t \big]=\E_0 \big[ \int_0^{\tilde{T}^-_0} e^{-rt}\mathbf{1}_{\{X_t\geq\log V_T-z\}}  \diff t \big]\]
is clearly non-decreasing in $z$, and, by bounded convergence, $\lim_{z \downarrow -\infty}\Lambda^{(r,\lambda)}(z,z)
=0.$

On the other hand, if $V_T=0$, then, by Proposition \ref{lambdaidentified} and Remark \ref{remark_gamma_0}(1),
	$\Lambda^{(r,\lambda)}(z,z)
	=\frac{1}{r}(1-J^{(r, \lambda)}(0;0))= \frac{1}{\lambda +r}  \frac{\Phi(r+\lambda)}{\Phi(r)}$.
 \subsection{Proof of Proposition \ref{prop_E_derivative_B}} \label{proof_prop_E_derivative_B}
We start from several key introductory identities.
Fix $q > 0$. Because
\begin{align*}
e^{\theta z}Z^{({q})}(x-z;\theta) &=  e^{\theta x} \left( 1 + ({q}- \psi(\theta )) \int_0^{x-z} e^{-\theta  u} W^{({q})}(u) \diff u	\right),
\end{align*}
we have, for $x \neq z$,
\begin{align*}\frac \partial {\partial z} [e^{\theta z}Z^{({q})}(x-z;\theta) ] &=  - e^{\theta z}  ({q}- \psi(\theta ))  W^{({q})}(x-z), \\
\frac \partial {\partial x}Z^{(q)}(x-z;\theta) &= \frac \partial {\partial x} \Big[ e^{\theta (x-z)} \left( 1 + (q- \psi(\theta )) \int_0^{x-z} e^{-\theta  u} W^{(q)}(u) \diff u	\right) \Big] \\ &= \theta Z^{(q)}(x-z;\theta)  +  (q- \psi(\theta ))  W^{(q)}(x-z).
\end{align*}
In particular, 
\begin{align}\label{Z_Phi_q_r_derivative}
	\frac \partial {\partial x}Z^{(q)}(x-z;\Phi(r+\lambda)) 
	&= \Phi(r+\lambda) Z^{(q)}(x-z;\Phi(r+\lambda))  -\lambda W^{(q)}(x-z).
\end{align}
Moreover, we have, for $x \neq z$,
\begin{align}\label{aux_appendix}
&
\frac \partial {\partial x} J^{(q, \lambda)}(x-z; \theta) \notag\\
&=\frac{\lambda}{\lambda +{q}-\psi(\theta)} \frac \partial {\partial x} Z^{({q})}(x-z;\theta)- \frac{\psi(\theta)-{q}}{\lambda +{q}-\psi(\theta)}\frac{\Phi({q}+\lambda)-\Phi({q})}{\theta-\Phi({q})} \frac \partial {\partial x} Z^{({q})}(x-z;\Phi({q}+\lambda)) \notag\\
&= \frac{\lambda}{\lambda +{q}-\psi(\theta)} \theta Z^{({q})}(x-z;\theta)   - \frac{\psi(\theta)-{q}}{\lambda +{q}-\psi(\theta)}\frac{\Phi({q}+\lambda)-\Phi({q})}{\theta-\Phi({q})} \Phi(q+\lambda) Z^{({q})}(x-z;\Phi(q+\lambda))
\notag\\ &\qquad + \frac{\psi(\theta)-{q}}{\lambda +{q}-\psi(\theta)}\frac{\Phi({q}+\lambda)-\theta}{\theta-\Phi({q})} \lambda  W^{({q})}(x-z).
\end{align}

By setting $\theta = 0$, we obtain the following.
\begin{lemma}\label{cztery}
We have, for $x \neq z$ and $q > 0$,
\begin{align}\label{der_gamma_0}
\begin{split}
\frac \partial {\partial z} J^{(q, \lambda)}(x-z; 0)&=-\frac\partial {\partial x} J^{(q, \lambda)}(x-z; 0)
=\frac{{q}}{\lambda +{q}}\frac{\Phi({q}+\lambda)-\Phi({q})}{\Phi({q})} \Phi(q+\lambda)H^{(q)}(x-z; \Phi(q+\lambda)).
\end{split}
\end{align}
\end{lemma}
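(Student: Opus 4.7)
The first equality in \eqref{der_gamma_0} is immediate: since $J^{(q)}(x-z;0)$ depends on $x$ and $z$ only through the single argument $y=x-z$, the chain rule yields $\partial_z J^{(q)}(x-z;0) = -\partial_x J^{(q)}(x-z;0)$.

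For the substantive equality, my plan is to specialize the already-derived identity \eqref{aux_appendix} to $\theta = 0$. With $\psi(0)=0$, one gets: the first term $\tfrac{\lambda}{\lambda+q-\psi(\theta)}\,\theta\, Z^{(q)}(x-z;\theta)$ vanishes because of the factor $\theta$; the prefactor $\tfrac{\psi(\theta)-q}{\lambda+q-\psi(\theta)}$ reduces to $-\tfrac{q}{\lambda+q}$; and $\tfrac{1}{\theta-\Phi(q)}$ reduces to $-\tfrac{1}{\Phi(q)}$. Collecting the two surviving terms, one obtains
\begin{align*}
\frac{\partial}{\partial x}J^{(q)}(x-z;0)
&= -\,\frac{q\,(\Phi(q+\lambda)-\Phi(q))\,\Phi(q+\lambda)}{(\lambda+q)\,\Phi(q)}\,Z^{(q)}(x-z;\Phi(q+\lambda)) \\
&\qquad + \frac{q\,\lambda\,\Phi(q+\lambda)}{(\lambda+q)\,\Phi(q)}\,W^{(q)}(x-z).
\end{align*}

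The final step is to recognize this linear combination as $H^{(q)}(x-z;\Phi(q+\lambda))$ multiplied by the stated coefficient. Using $\psi(\Phi(q+\lambda))-q = \lambda$ in the definition \eqref{der_gamma_0_0} gives
\[
H^{(q)}(x-z;\Phi(q+\lambda)) = Z^{(q)}(x-z;\Phi(q+\lambda)) - \frac{\lambda}{\Phi(q+\lambda)-\Phi(q)}\,W^{(q)}(x-z),
\]
so multiplying by $-\tfrac{q(\Phi(q+\lambda)-\Phi(q))\Phi(q+\lambda)}{(\lambda+q)\Phi(q)}$ reproduces exactly the two-term expression above. Combined with the first equality, this yields \eqref{der_gamma_0}.

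There is no real obstacle here; the lemma is essentially an algebraic corollary of the earlier computation \eqref{aux_appendix}. The only subtlety to be careful about is the cancellation in $\tfrac{\psi(\theta)-q}{\theta-\Phi(q)}$ at $\theta=0$, which is well-defined (no indeterminate form) precisely because $\Phi(q) > 0$ when $q>0$, so each ratio is evaluated without singularity and the computation above is valid for all $x \neq z$.
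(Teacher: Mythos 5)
Your proof is correct and matches the paper's approach exactly: the paper derives the lemma precisely by setting $\theta = 0$ in \eqref{aux_appendix} and then using $\psi(\Phi(q+\lambda)) - q = \lambda$ to recognize the resulting linear combination as a multiple of $H^{(q)}(\cdot;\Phi(q+\lambda))$.
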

Noting that $\frac \partial {\partial z} [e^{z} J^{(q, \lambda)}(x-z;1) ] = e^{z} J^{(q, \lambda)}(x-z;1) - e^{z} \frac {\partial} {\partial x} J^{(q, \lambda)}(x-z;1)$, and using \eqref{aux_appendix} with $\theta=1$, we have the following result.
\begin{lemma}\label{pochodne}
We have, for $x \neq z$ and $q > 0$,
\begin{equation}\label{der_gamma_1}
\frac \partial {\partial z} [e^{z} J^{(q, \lambda)}(x-z;1) ] \\
=\frac{\psi(1)-{q}}{\lambda +{q}-\psi(1)}\frac{\Phi({q}+\lambda)-\Phi({q})}{1-\Phi({q})} (\Phi(q+\lambda) -1) e^{z} H^{(q)}(x-z; \Phi(q+\lambda)). \notag
\end{equation}
\end{lemma}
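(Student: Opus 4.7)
The plan is to follow exactly the hint given just above the lemma: apply the product rule to the map $z\mapsto e^{z}J^{(q)}(x-z;1)$, producing
$$\frac{\partial}{\partial z}[e^{z}J^{(q)}(x-z;1)] = e^{z}\Big(J^{(q)}(x-z;1) - \frac{\partial}{\partial x}J^{(q)}(x-z;1)\Big),$$
and then simplify the bracket using the explicit formula \eqref{fun_gamma} for $J^{(q)}(\cdot;1)$ and the identity \eqref{aux_appendix} for $\partial J^{(q)}/\partial x$, both specialized at $\theta=1$.

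First I would write $J^{(q)}(x-z;1)$ as a linear combination of $Z^{(q)}(x-z;1)$ and $Z^{(q)}(x-z;\Phi(q+\lambda))$ according to the first formula in \eqref{fun_gamma}, and similarly expand $\partial J^{(q)}(x-z;1)/\partial x$ using \eqref{aux_appendix} as a linear combination of $Z^{(q)}(x-z;1)$, $Z^{(q)}(x-z;\Phi(q+\lambda))$, and $W^{(q)}(x-z)$. Subtracting, the coefficients of $Z^{(q)}(x-z;1)$ are both $\lambda/(\lambda+q-\psi(1))$ and therefore cancel, leaving only contributions proportional to $Z^{(q)}(x-z;\Phi(q+\lambda))$ and $W^{(q)}(x-z)$. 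A straightforward algebraic reorganization gives
$$J^{(q)}(x-z;1)-\frac{\partial}{\partial x}J^{(q)}(x-z;1) = \frac{\psi(1)-q}{\lambda+q-\psi(1)}\cdot\frac{\Phi(q+\lambda)-1}{1-\Phi(q)}\Big[(\Phi(q+\lambda)-\Phi(q))Z^{(q)}(x-z;\Phi(q+\lambda)) - \lambda W^{(q)}(x-z)\Big].$$

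The final step is to recognize the expression in square brackets as $(\Phi(q+\lambda)-\Phi(q))\,H^{(q)}(x-z;\Phi(q+\lambda))$. This follows from the definition \eqref{der_gamma_0_0} of $H^{(q)}(\cdot;\theta)$ evaluated at $\theta=\Phi(q+\lambda)$, for which $\psi(\theta)-q=\lambda$, so that $(\psi(\theta)-q)/(\theta-\Phi(q))=\lambda/(\Phi(q+\lambda)-\Phi(q))$. Substituting back and multiplying by $e^{z}$ yields the asserted formula.

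There is no real obstacle beyond careful bookkeeping of signs; the only nontrivial observation is the recombination into $H^{(q)}$, but this is immediate from the definition of the second scale function at $\theta=\Phi(q+\lambda)$.
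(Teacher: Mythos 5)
Your proof is correct and follows essentially the same route as the paper: the paper's proof (a one-line remark above the lemma) also starts from the product rule identity $\frac{\partial}{\partial z}[e^{z} J^{(q)}(x-z;1)] = e^{z} J^{(q)}(x-z;1) - e^{z}\frac{\partial}{\partial x}J^{(q)}(x-z;1)$ and then invokes \eqref{aux_appendix} with $\theta=1$. You merely carried out the algebraic cancellation and the recombination into $H^{(q)}(\cdot;\Phi(q+\lambda))$ explicitly, which is exactly the intended computation.
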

We will also need the following observation.
\begin{lemma}\label{der_Lambda}
	We have,  for $z_T\not=0$ and $x > z$,
\begin{align} \label{der_lambda_3}
&\frac{\partial}{\partial z} \Lambda^{(r,\lambda)}(x,z) =-\frac  {(\Phi(r+\lambda)- \Phi(r))^2} {\lambda}H^{(r)}(x-z;\Phi(r+\lambda))
\mathcal{H}(z).
\end{align}
\end{lemma}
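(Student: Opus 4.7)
My proof plan for Lemma \ref{der_Lambda} is to differentiate the decomposition \eqref{def_lambda} of $\Lambda^{(r,\lambda)}(x,z) = Z^{(r)}(x-z;\Phi(r+\lambda)) \tfrac{\Phi(r+\lambda)-\Phi(r)}{\lambda} \mathcal{H}(z) - \mathcal{I}(x,z)$ term by term, and then recognize the resulting expression as the scale-function representation of $H^{(r)}(x-z;\Phi(r+\lambda))\mathcal{H}(z)$ up to the stated prefactor. For the first factor, \eqref{Z_Phi_q_r_derivative} (with a sign flip for $\partial/\partial z$) gives $\partial_z Z^{(r)}(x-z;\Phi(r+\lambda)) = -\Phi(r+\lambda) Z^{(r)}(x-z;\Phi(r+\lambda)) + \lambda W^{(r)}(x-z)$, while the fundamental theorem of calculus applied to the definition of $\mathcal{H}$ in \eqref{def_H_I_integral} yields $\mathcal{H}'(z) = H^{(r+\lambda)}(z_T;\Phi(r))$.

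The only delicate piece is $\partial_z \mathcal{I}(x,z)$, which I will handle via a case split at $z_T=0$. For $z_T<0$, Lemma \ref{lemma_int_upsilon} shows $\mathcal{I}$ is independent of $z$ (the convolution term carries the factor $1_{\{z_T>0\}}$, and $\overline W^{(r+\lambda)}(z_T)=0$), so $\partial_z\mathcal{I}=0$; using $\mathcal{H}(z)=\tfrac{1}{\Phi(r)}e^{\Phi(r)z_T}$ from Lemma \ref{nowareferencja}, a short algebraic reorganization collapses the remaining terms into the claimed $-\tfrac{(\Phi(r+\lambda)-\Phi(r))^2}{\lambda}H^{(r)}(x-z;\Phi(r+\lambda))\mathcal{H}(z)$. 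For $z_T>0$, I rewrite the convolution $A(z):=\int_0^{x-z}W^{(r)}(x-z-u)\overline W^{(r+\lambda)}(u+z_T)\diff u$ by the substitution $w=u+z$ as $A(z)=\int_z^x W^{(r)}(x-w)\overline W^{(r+\lambda)}(w-\log V_T)\diff w$, so $\partial_z A=-W^{(r)}(x-z)\overline W^{(r+\lambda)}(z_T)$; combining with the product-rule derivative of $Z^{(r)}(x-z;\Phi(r+\lambda))\overline W^{(r+\lambda)}(z_T)$ produces cancellations and leaves $\partial_z\mathcal{I}(x,z)=Z^{(r)}(x-z;\Phi(r+\lambda))[\Phi(r+\lambda)\overline W^{(r+\lambda)}(z_T) - W^{(r+\lambda)}(z_T)]$.

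To finish, I will insert the explicit formulas $H^{(r)}(x-z;\Phi(r+\lambda))=Z^{(r)}(x-z;\Phi(r+\lambda))-\tfrac{\lambda}{\Phi(r+\lambda)-\Phi(r)}W^{(r)}(x-z)$ and $H^{(r+\lambda)}(z_T;\Phi(r))=Z^{(r+\lambda)}(z_T;\Phi(r))-\tfrac{\lambda}{\Phi(r+\lambda)-\Phi(r)}W^{(r+\lambda)}(z_T)$ (both coming from \eqref{der_gamma_0_0} together with $\psi(\Phi(r))=r$), match terms in $W^{(r)}(x-z)$ on both sides, and use Lemma \ref{nowareferencja} to check that the residual identity reduces to $\tfrac{\Phi(r+\lambda)-\Phi(r)}{\lambda}[H^{(r+\lambda)}(z_T;\Phi(r))-\Phi(r)\mathcal{H}(z)] = \Phi(r+\lambda)\overline W^{(r+\lambda)}(z_T)-W^{(r+\lambda)}(z_T)$, which holds by direct substitution.

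The main obstacle is bookkeeping: the formulas for $\mathcal{H}$ and $\mathcal{I}$ mix several scale functions at the two levels $r$ and $r+\lambda$, the product rule generates many terms, and the case split at $z_T=0$ must be tracked carefully (which is precisely why the lemma is stated for $z_T\neq 0$). Once the correct sign conventions in \eqref{Z_Phi_q_r_derivative} and \eqref{der_gamma_0_0} are fixed, all non-$H^{(r)}(x-z;\Phi(r+\lambda))$ contributions cancel and the lemma follows.
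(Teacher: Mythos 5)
Your proposal is correct and follows essentially the same path as the paper's proof: both differentiate the decomposition \eqref{def_lambda} term by term, compute $\partial_z\mathcal{I}(x,z)=Z^{(r)}(x-z;\Phi(r+\lambda))\bigl[\Phi(r+\lambda)\overline W^{(r+\lambda)}(z_T)-W^{(r+\lambda)}(z_T)\bigr]$ from Lemma \ref{lemma_int_upsilon}, and exploit the identity $\tfrac{\Phi(r+\lambda)-\Phi(r)}{\lambda}\bigl[H^{(r+\lambda)}(z_T;\Phi(r))-\Phi(r)\mathcal{H}(z)\bigr]=\Phi(r+\lambda)\overline W^{(r+\lambda)}(z_T)-W^{(r+\lambda)}(z_T)$ to achieve the cancellation, the only difference being that you match coefficients of $W^{(r)}(x-z)$ and $Z^{(r)}(x-z;\Phi(r+\lambda))$ while the paper factors through the relation \eqref{Z_prime_H_relation} expressing $\partial_x Z^{(r)}$ via $H^{(r)}$.
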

\begin{proof}
By differentiating the identity in  Lemma \ref{lemma_int_upsilon}, for $z_T\not=0$,  by \eqref{Z_Phi_q_r_derivative},
\begin{align}\label{der_upsilon_B}
	\frac \partial {\partial z} \mathcal{I}(x,z)&
	= -\lambda \frac \partial {\partial z}\int_0^{x-z}W^{(r)}(w)   \overline{W}^{(r+\lambda)}(x-w- \log V_T) \diff w \mathbf{1}_{\{ z_T>0\}} \notag\\
	&\qquad+\frac \partial {\partial x}Z^{(r)}(x-z;\Phi(r+\lambda)) \overline{W}^{(r+\lambda)}(z_T ) -Z^{(r)}(x-z;\Phi(r+\lambda))  \frac \partial {\partial z}\overline{W}^{(r+\lambda)}(z_T ) \notag\\
	&= \lambda W^{(r)}(x-z)   \overline{W}^{(r+\lambda)}(z_T) \notag\\
	&\qquad+[\Phi(r+\lambda) Z^{(r)}(x-z;\Phi(r+\lambda)) - \lambda W^{(r)}(x-z)]  \overline{W}^{(r+\lambda)}(z_T ) \notag\\
	&\qquad-Z^{(r)}(x-z;\Phi(r+\lambda))  W^{(r+\lambda)}(z_T ) \notag\\
	&=Z^{(r)}(x-z;\Phi(r+\lambda)) [\Phi(r+\lambda) \overline{W}^{(r+\lambda)}(z_T ) - W^{(r+\lambda)}(z_T ) ].
\end{align}
By \eqref{der_gamma_0_0} and \eqref{Z_Phi_q_r_derivative},  we can write
\begin{equation} \label{Z_prime_H_relation}
\frac \partial {\partial x}Z^{(r)}(x-z;\Phi(r+\lambda)) 
= (\Phi({r}+\lambda)-\Phi({r})) H^{(r)}(x-z; \Phi(r+\lambda)) +\Phi({r}) Z^{({r})}(x-z;\Phi(r+\lambda)).
\end{equation}
Using \eqref{def_H_I_integral}, we have, for  $x > z$ and $z_T\not=0$, 
\begin{multline} \label{Z_H_derivative}
\frac{\partial}{\partial z}\left(Z^{(r)}(x-z;\Phi(r+\lambda))
\mathcal{H}(z)
\right) \\
=- \frac \partial {\partial x}Z^{(r)}(x-z;\Phi(r+\lambda))
\mathcal{H}(z)
+Z^{(r)}(x-z;\Phi(r+\lambda))
H^{(r+\lambda)}(z_T;\Phi(r)).
\end{multline}
By \eqref{int_H} and \eqref{Z_prime_H_relation}, this equals
\begin{multline*}
- (\Phi(r+\lambda)- \Phi(r)) H^{(r)}(x-z;\Phi(r+\lambda))
\mathcal{H}(z)
 +Z^{(r)}(x-z;\Phi(r+\lambda))
\Big( H^{(r+\lambda)}(z_T;\Phi(r)) - \Phi(r)
\mathcal{H}(z)
\Big).
\end{multline*}
Furthermore, by \eqref{int_H},
\begin{align*}
H^{(r+\lambda)}(z_T;\Phi(r)) - \Phi(r) \mathcal{H}(z)
 &= H^{(r+\lambda)}(z_T;\Phi(r)) -Z^{(r+\lambda)}(z_T;\Phi(r))+\lambda\frac{\Phi(r+\lambda)}{\Phi(r+\lambda)- \Phi(r)}\overline{W}^{(r+\lambda)}(z_T) \notag\\
 &=\frac {\lambda} {\Phi(r+\lambda)-\Phi(r)} (\Phi(r+\lambda) \overline{W}^{(r+\lambda)}(z_T) - W^{(r+\lambda)}(z_T) ).
 \end{align*}
In sum, we have
\begin{align}\label{derivative_Z_H}
\begin{split}
&\frac{\partial}{\partial z}\left(Z^{(r)}(x-z;\Phi(r+\lambda))
\mathcal{H}(z)
\right)=- {(\Phi(r+\lambda)- \Phi(r))} H^{(r)}(x-z;\Phi(r+\lambda))
\mathcal{H}(z)
 \\
 &\qquad+ \frac {\lambda} {\Phi(r+\lambda)-\Phi(r)} Z^{(r)}(x-z;\Phi(r+\lambda)) (\Phi(r+\lambda) \overline{W}^{(r+\lambda)}(z_T) - W^{(r+\lambda)}(z_T) ).
 \end{split}
 \end{align}
By applying \eqref{der_upsilon_B} and \eqref{derivative_Z_H} in \eqref{def_lambda}, the proof is complete.
\qed\end{proof}

We now prove Proposition \ref{prop_E_derivative_B}.
Differentiating \eqref{equity_value} and using Lemmas \ref{cztery}, \ref{pochodne} and \ref{der_Lambda}
give
\begin{align*}
	&\frac{\partial}{\partial V_B} \mathcal{E}(V;V_B) =- V_B^{-1}\Bigg[\alpha\frac{\psi(1)-r}{\lambda +{r}-\psi(1)}\frac{\Phi(r+\lambda)-\Phi(r)}{1-\Phi(r)} (\Phi(r+\lambda) -1) V_B\notag\\&
	+P \kappa \rho\frac  {(\Phi(r+\lambda)- \Phi(r))^2} {\lambda}
	\int_{-\infty}^{\log (V_B/V_T)} H^{(r+\lambda)}(y; \Phi(r))\diff y
	\Bigg]H^{(r)} \Big(\log \frac V {V_B}; \Phi(r+\lambda) \Big)\notag\\
	&- V_B^{-1}\Bigg[(1-\alpha)\frac{\psi(1)-{r -m}}{\lambda +r+m-\psi(1)}\frac{\Phi(r+m+\lambda)-\Phi(r+m)}{1-\Phi(r+m)} (\Phi(r+m+\lambda) -1) V_B \\&-
	\frac {P \rho+ p} {r+m}\frac{r+m}{\lambda +{r+m}}\frac{\Phi(r+m+\lambda)-\Phi(r+m)}{\Phi(r+m)} \Phi(r+m+\lambda)\Bigg]H^{(r+m)}\Big(\log  \frac V {V_B}; \Phi(r+m+\lambda) \Big), \notag
\end{align*}
which reduces to \eqref{eqn_E_derivative_B} after simplification using Remark \ref{remark_gamma_0}(1).

\subsection{Proof of Proposition \ref{prop_E_der_B_negative}} \label{proof_prop_E_der_B_negative}
In view of the probabilistic expression \eqref{der_gamma_0_0}, $q \mapsto H^{(q)}(x-z; \Phi(q+\lambda))$
is non-increasing for $x,z\in\R$, and hence
\begin{align*}
	\frac{H^{(r)}(x-z; \Phi(r+\lambda))}{H^{(r+m)}(x-z; \Phi(r+m+\lambda))}\geq 1,\qquad\text{for $x,z\in\R$}.
\end{align*}
On the other hand,
because $\psi$ is strictly convex and strictly increasing on $[\Phi(0), \infty)$, its right-inverse $\Phi$ is strictly concave, 
that is $\Phi'(r+\lambda+x)-\Phi'(r+x)<0$ for $x, \lambda > 0$. Therefore,
\begin{align*}
	\frac{\Phi(r+\lambda)-\Phi(r)}{\Phi(r+m+\lambda)-\Phi(r+m)} > 1\qquad\text{for $\lambda>0$}.
\end{align*}
Combining these,
\begin{align}\label{ineq_H}
	\frac{\Phi(r+\lambda)-\Phi(r)}{\Phi(r+m+\lambda)-\Phi(r+m)}\frac{H^{(r)}(x-z; \Phi(r+\lambda))}{H^{(r+m)}(x-z; \Phi(r+m+\lambda))}> 1.
\end{align}

By Remark  \ref{remark_gamma_0}(2)
and
because \eqref{der_gamma_0_0} implies that  $H^{(r+\lambda)}$ is uniformly nonnegative, we have
\begin{align*}
\alpha(1-J^{(r, \lambda)}(0;1))e^z+P \kappa \rho\frac  {\Phi(r+\lambda)- \Phi(r)} {\lambda}
\mathcal{H}(z)
\geq0.
\end{align*}
Hence, by the previous inequality together with \eqref{lambda_b_b}, \eqref{ineq_H}, and \eqref{E_B_B},
\begin{align*}
&L(\log V, \log V_B) = \frac{H^{(r)}(\log V-\log V_B; \Phi(r+\lambda))}{H^{(r+m)}(\log V-\log V_B; \Phi(r+m+\lambda))} \frac{\Phi(r+\lambda)-\Phi(r)}{\Phi(r+m+\lambda)-\Phi(r+m)}\\ &\qquad\times \Bigg(\alpha(1-J^{(r, \lambda)}(0;1)) V_B+P \kappa \rho
\frac  {\Phi(r+\lambda)- \Phi(r)} {\lambda} \int_{-\infty}^{\log V_B-\log V_T} H^{(r+\lambda)}(y; \Phi(r))\diff y
\Bigg)\\&\qquad+(1-\alpha)(1-J^{(r+m, \lambda)}(0;1)) V_B-
	\frac {P \rho+ p} {r+m}(1-J^{(r+m, \lambda)}(0;0)) \\
	&\quad>  \alpha(1-J^{(r, \lambda)}(0;1)) V_B+P \kappa \rho
	\Lambda^{(r,\lambda)} (\log V_B, \log V_B) \\
	 &\qquad+(1-\alpha)(1-J^{(r+m, \lambda)}(0;1)) V_B-
	\frac {P \rho+ p} {r+m}(1-J^{(r+m, \lambda)}(0;0)) \\
	&\quad= \mathcal{E}(V_B; V_B).
\end{align*}
In addition,
because $V_B \geq V_B^*$, by the monotonicity as in Proposition \ref{E_monotonicity}, we have $\mathcal{E}(V_B;V_B) \geq 0$. Note when $V_B^* = 0$ that $\mathcal{E}(V_B;V_B) \geq 0$ for all $V_B > 0$ by Proposition \ref{E_monotonicity}. 

Now, by Proposition \ref{prop_E_derivative_B} and recalling that $H^{(r+m)}$ is positive,
\begin{align*}
	\frac \partial {\partial V_B} \mathcal{E}(V;V_B)
	&<  -  (\Phi(r+m+\lambda)-\Phi(r+m))H^{(r+m)}\Big(\log  \frac V {V_B}; \Phi(r+m+\lambda) \Big)  \frac {\mathcal{E}(V_B;V_B)} {V_B} \leq 0.
\end{align*}
\qed

\subsection{Proof of Proposition \ref{prop_E_derivative_x}} \label{proof_prop_E_derivative_x}
Using Lemma \ref{lemma_int_upsilon} together with \eqref{Z_Phi_q_r_derivative} and \eqref{der_upsilon_B}, for $x\not=\log V_T$ and $z \in \R$ such that $z_T \neq 0$, 
\begin{align}\label{der_upsilon_x}
	&\frac{\partial}{\partial x} \mathcal{I}(x,z)
	=W^{(r+\lambda)}(x- \log V_T) \mathbf{1}_{\{ z_T>0\}} +  W^{(r)}(x-\log V_T) 
\mathbf{1}_{\{ z_T < 0\}} \notag\\ &\qquad- \frac \partial {\partial x}\lambda\int_0^{x-z}W^{(r)}(w)
	\overline{W}^{(r+\lambda)}(x-z-w+z_T) \diff w \mathbf{1}_{\{ z_T>0\}} \notag\\
	&\qquad-\frac \partial {\partial x}Z^{(r)}(x-z;\Phi(r+\lambda)) \overline{W}^{(r+\lambda)}(z_T) \notag\\
	&\quad=W^{(r+\lambda)}(x- \log V_T) \mathbf{1}_{\{ z_T>0\}} +  W^{(r)}(x-\log V_T)  
\mathbf{1}_{\{ z_T < 0\}} \notag\\ &\qquad- \lambda\int_0^{x-z}W^{(r)}(w)
	 W^{(r+\lambda)}(x-z-w+z_T) \diff w \mathbf{1}_{\{ z_T>0\}} -\lambda W^{(r)}(x-z)   \overline{W}^{(r+\lambda)}(z_T) \notag\\
	&\qquad-(\Phi(r+\lambda) Z^{(r)}(x-z;\Phi(r+\lambda)) - \lambda W^{(r)}(x-z))  \overline{W}^{(r+\lambda)}(z_T)\notag\\
	&\quad=I^{(r, \lambda)}(x-z,z_T)-\frac \partial {\partial z} \mathcal{I}(x,z),
\end{align}	
where we used \eqref{I_below_zero} for the case $z_T <0$.
Hence using \eqref{Z_H_derivative} and \eqref{der_upsilon_x} in \eqref{def_lambda}, and by \eqref{resol_dens},
\begin{align}\label{der_lambda_x_3}
\frac{\partial}{\partial x}\Lambda^{(r,\lambda)}(x,z)&=-\frac{\partial}{\partial z}\Lambda^{(r,\lambda)}(x,z)+R^{(r,\lambda)}(x-z,-z_T).
\end{align}

Now we write \eqref{equity_value} as
\begin{align}\label{new_def_E}
	\mathcal{E}(V;V_B)
	&=\mathcal{A}(\log V, \log V_B)+P \kappa \rho \Lambda^{(r,\lambda)}(\log V, \log V_B),
\end{align}
where
\begin{align*}
\mathcal{A}(x,z) &:= e^x -\alpha   e^{z} J^{(r, \lambda)}(x-z; 1) - \frac {P \rho+ p} {r+m}  (1- J^{(r+m, \lambda)}(x-z;0) ) - \left(1-\alpha  \right) e^{z} J^{(r+m, \lambda)}(x-z; 1).
\end{align*}
Differentiating this with respect to $x$ and $z$, we get
\begin{align*}
	\begin{split}
		\frac\partial {\partial x}\mathcal{A}(x,z)
		&= e^x  - \alpha  e^{z} \frac\partial {\partial x} J^{(r, \lambda)}(x-z; 1) + \frac {P \rho+ p} {r+m}   \frac\partial {\partial x} J^{(r+m, \lambda)}(x-z;0)  - \left(1-\alpha  \right) e^{z} \frac\partial {\partial x} J^{(r+m, \lambda)}(x-z; 1), \\
		\frac \partial {\partial z}\mathcal{A}(x,z)
		&= -  \alpha  e^{z} J^{(r, \lambda)}(x-z; 1) +  \alpha  e^{z} \frac\partial {\partial x} J^{(r, \lambda)}(x-z; 1) \\
		&- \frac {P \rho+ p} {r+m}  \frac\partial {\partial x} J^{(r+m, \lambda)}(x-z;0)  - \left(1-\alpha  \right) e^{z} J^{(r+m, \lambda)}(x-z; 1) +  \left(1-\alpha  \right) e^{z} \frac\partial {\partial x} J^{(r+m, \lambda)}(x-z; 1),
	\end{split}
\end{align*}
and hence
\begin{align}\label{der_A}
	\frac \partial {\partial x} \mathcal{A}(x,z)&=e^x-\frac \partial {\partial z} \mathcal{A}(x,z)-\alpha e^{z} J^{(r, \lambda)}(x-z;1)-(1-\alpha)e^{z} J^{(r+m, \lambda)}(x-z;1).
\end{align}

Finally, using \eqref{der_lambda_x_3} and \eqref{der_A} in \eqref{new_def_E}, we obtain that
\begin{align*}
&\frac \partial {\partial V} \mathcal{E}(V;V_B)
= \frac 1 V \frac \partial {\partial x}\Big[\mathcal{A}(x, \log V_B)+P \kappa \rho \Lambda^{(r,\lambda)}(x, \log V_B) \Big] \Big|_{x=\log V}\\
&= \frac 1 V \Big[ V -\frac \partial {\partial z} \mathcal{A}(\log V,z) |_{z = \log V_B}-\alpha V_B J^{(r, \lambda)}(\log \frac V {V_B};1)\notag\\&-(1-\alpha)V_B J^{(r+m, \lambda)}(\log \frac V {V_B};1)
-P \kappa \rho \frac \partial {\partial z} \Lambda^{(r,\lambda)}(\log V,z) |_{z = \log V_B}+ P \kappa \rho  R^{(r,\lambda)} \Big(\log \frac V {V_B},\log \frac {V_T} {V_B} \Big)\Big]  \notag
\end{align*}
which reduces to the desired expression by noting that \eqref{new_def_E} gives
\begin{align*}
\frac \partial {\partial V_B}  \mathcal{E}(V;V_B)
= \frac 1 {V_B} \Big[ \frac \partial {\partial z} \mathcal{A}(\log V,z) |_{z = \log V_B}+
P \kappa \rho \frac \partial {\partial z} \Lambda^{(r,\lambda)}(\log V,z) |_{z = \log V_B} \Big].
\end{align*}

\subsection{Proof of Lemma \ref{limit_lambda_ot}}\label{appendix_limit_lambda_ot}
First we note by Theorem VII.4 in \cite{Bertoin96a}, that for $q\geq0$
\begin{align}\label{lim_frac_phi}
	\lim_{\lambda\to \infty}\frac{\Phi(\lambda+r+m)}{\Phi(\lambda+q)}=1.
\end{align}
On the other hand, identity \eqref{lambda_b_b} implies, for $V_B> 0$, 
that
\begin{align*}
	\frac{\lambda}{\Phi(r+\lambda)-\Phi(r)}\Lambda^{(r,\lambda)}(\log V_B,\log V_B)
	= \begin{cases} \frac{1}{\Phi(r)}\left(\frac{V_B}{V_T}\right)^{\Phi(r)} & \textrm{if } V_B/V_T < 1, \\ \frac{1}{\Phi(r)}  +\int_0^{\log(V_B/V_T)} H^{(r+\lambda)}(y;\Phi(r))\diff y &\textrm{if }  V_B/V_T \geq 1, \end{cases}
\end{align*}	
where we used $H^{(r+\lambda)}(y;\Phi(r)) = \exp(\Phi(r)y)$ for $y \leq 0$.
In addition, by the probabilistic expression of the probabilistic expression of $H^{(r+\lambda)}$ given in \eqref{der_gamma_0_0} and using dominated convergence, we have
\[ \lim_{\lambda\to\infty}
\int_0^{\log(V_B/V_T)} H^{(r+\lambda)}(y;\Phi(r))\diff y=0. \]
This together with \eqref{lim_frac_phi} gives
\begin{align*}
	\lim_{\lambda\to\infty}\frac{\lambda+r+m }{\Phi(\lambda+r+m)}
	\Lambda^{(r,\lambda)}(\log V_B,\log V_B)
	=
	\frac{1}{\Phi(r)}\left[\left( \frac {V_B} {V_T} \right)^{\Phi(r)}\wedge 1\right]. 
\end{align*}
From Remark \ref{remark_gamma_0}(1) and \eqref{lim_frac_phi}, we can conclude that, for $q\geq0$,
\begin{align*}
\frac{\lambda+r+m }{\Phi(\lambda+r+m)}(1-J^{(q, \lambda)}(0;1)) = \frac{\lambda+r+m }{\Phi(\lambda+r+m)} \frac{\psi(1)-{q}}{\lambda +{q}-\psi(1)}  \frac{\Phi({q}+\lambda)-1}{1-\Phi({q})} \xrightarrow{\lambda \uparrow \infty}\frac{\psi(1)-q}{1-\Phi(q)}.
\end{align*}
Combining these and \eqref{E_B_B}, we obtain \eqref{eq_limit_lambda_ot}.
\subsection{Proof of Proposition \ref{CS_limit_T_0}}\label{appendix_CS_limit_T_}

Fix $t > 0$. Let us define the event
\begin{align*}
E := \{ N^{\lambda}_{t} = 1 \} = \{ T^{\lambda}_1 \leq t, T^{\lambda}_2 > t \} = \{ T^{\lambda}_1 \leq t, S > t - T^{\lambda}_1 \}
\end{align*}
where $S := T^{\lambda}_2 - T^{\lambda}_1$ 
has the exponential distribution with the parameter $\lambda$.
Note that
\begin{align}
E \cap \{ T_{V_B}^- < t \} = \{ T^{\lambda}_1\leq t, \, V_{T^{\lambda}_1}<V_B, \; S> t-T^{\lambda}_1 \}. \label{E_intersect}
\end{align}

We start from analyzing the numerator of \eqref{eq:spreads}.
	We decompose it as follows:
	\begin{align}
		f(t):= \mathbb{E}\Big[\big[P-(1-\alpha)V_{T_{V_B}^-} \big]e^{-r T_{V_B}^-}\mathbf{1}_{\{T_{V_B}^- \leq t\}} \Big]&= f_1(t) + f_2(t),
		\end{align}
		where
		\begin{align*}
		f_1(t) &:= \mathbb{E}\Big[\big[P-(1-\alpha)V_{T_{V_B}^-} \big]e^{-r T_{V_B}^-}\mathbf{1}_{\{T_{V_B}^- \leq t\}} \,
		\mathbf{1}_E
		\Big], \\
		f_2(t) &:=\mathbb{E}\Big[\big[P-(1-\alpha)V_{T_{V_B}^-} \big]e^{-r T_{V_B}^-}\mathbf{1}_{\{T_{V_B}^- \leq t\}} \,
		\mathbf{1}_{E^c}
		\Big].
	\end{align*}
	Here, by \eqref{E_intersect} and because $S$ is an independent exponential random variable with parameter $\lambda$,
	\begin{align*}
		f_1(t)&=\mathbb{E}\Big[\big[P-(1-\alpha)V_{T^{\lambda}_1} \big]e^{-r T^{\lambda}_1}\mathbf{1}_{\{T^{\lambda}_1\leq t, \, V_{T^{\lambda}_1}<V_B, \; S>t-T^{\lambda}_1\}}\Big]\\
		&=\mathbb{E}\Big[\big[P-(1-\alpha)V_{T^{\lambda}_1} \big]e^{-r T^{\lambda}_1}e^{-\lambda(t-T^{\lambda}_1) }\mathbf{1}_{\{T^{\lambda}_1\leq t, \, V_{T^{\lambda}_1}<V_B\}}\Big]\\
		&=\mathbb{E}\Big[\int_0^{t}\lambda e^{-\lambda s}\big[P-(1-\alpha)V_{s} \big]e^{-r s}e^{-\lambda(t-s) } \mathbf{1}_{\{V_{s}<V_B\}}\diff s \Big]\\
		&=\lambda e^{-\lambda t} \mathbb{E} \Big[\int_0^{t}\big[P-(1-\alpha)V_{s} \big]e^{-r s} \mathbf{1}_{\{V_{s}<V_B\}}\diff s \Big],
	\end{align*}
	and $|f_2(t)| \leq (|P|+(1-\alpha)V_B) \mathbb{P} \{ N_t^\lambda \geq 2\}
		=o(t)$ as $t \downarrow 0$. 

	Summing these,
	\begin{align}
		f(t)
		=\lambda e^{-\lambda t}\mathbb{E}\Big[\int_0^{t}\big[P-(1-\alpha)V_{s} \big]e^{-r s} \mathbf{1}_{\{V_{s}<V_B\}}\diff s \Big]+o(t).\label{fzt}
	\end{align}
	
	On the other hand, we transform the denominator of \eqref{eq:spreads} as follows:
	\begin{align*}
		g(t):= \mathbb{E} \big[1- e^{-r(t\wedge T_{V_B}^-)}\big]=1-e^{-rt}+
		g_1(t) + g_2(t)
	\end{align*}
where
	\begin{align*}
		g_1(t) := \E\big[(e^{-rt}-e^{-rT_{V_B}^-})\mathbf{1}_{\{T_{V_B}^-\leq t\}}
	\mathbf{1}_E
		\big]
		\quad \textrm{and} \quad
		g_2(t) := \E\big[(e^{-rt}-e^{-rT_{V_B}^-})\mathbf{1}_{\{T_{V_B}^-\leq t\}}
		\mathbf{1}_{E^c}
		\big].
	\end{align*}
	Similar to the computation for $f_1(t)$ and $f_2(t)$, by \eqref{E_intersect}, 
	\begin{multline*}
	g_1(t)=
		\E\left[(e^{-rt}-e^{-rT^{\lambda}_1})\mathbf{1}_{\{T^{\lambda}_1\leq t, \, V_{T^{\lambda}_1}<V_B, \,
		S
		> t-T^{\lambda}_1\}}\right]\\
		=\E\left[e^{-\lambda (t-T^{\lambda}_1)}(e^{-rt}-e^{-rT^{\lambda}_1})\mathbf{1}_{\{T^{\lambda}_1\leq t, \, V_{T^{\lambda}_1}<V_B\}}\right]= \lambda e^{-\lambda t}\E\left[\int_0^{t}(e^{-rt}-e^{-rs})\mathbf{1}_{\{V_{s}<V_B\}}\diff s\right],
	\end{multline*}
	and $g_2(t) \leq \mathbb{P} \{ N_t^\lambda \geq 2\}
		= o(t).$ 
	Hence putting all the pieces together we get that
	\begin{align}\label{gzt}
		g(t) 
		=1-e^{-rt}+ \lambda e^{-\lambda t}\E\left[\int_0^{t}(e^{-rt}-e^{-rs})\mathbf{1}_{\{V_{s}<V_B\}}\diff s\right]+o(t).
	\end{align}
	
	Now, from \eqref{fzt}, \eqref{gzt}, and the mean value theorem,	\begin{align*}
		\lim_{t\to 0}\frac{f(t)}{t} &=\lim_{t\to 0}\frac{1}{t}\left(\lambda e^{-\lambda t}\mathbb{E}\left[\int_0^{t}\big[P-(1-\alpha)V_{s} \big]e^{-r s} \mathbf{1}_{\{V_{s}<V_B\}}\diff s \right]+o(t)\right) =\lambda\big[P-(1-\alpha)V\big] \mathbf{1}_{\{V<V_B\}}, \\
		\lim_{t\to 0}\frac{g(t)}{t} &=\lim_{t\to 0}\frac{1}{t}\left(1-e^{-rt}+\lambda e^{-\lambda t}\E\left[\int_0^{t}(e^{-rt}-e^{-rs})\mathbf{1}_{\{V_{s}<V_B\}}\diff s\right]+o(t)\right)
		=r.
	\end{align*}
	By dividing the former by the latter, we have the claim.
	\subsection{Proof of Proposition \ref{conv_CS_lambda}}\label{proof_conv_CS_lambda}
		By \eqref{der_gamma_0_0}  and \eqref{fun_gamma}, we can write, for any  $\theta\geq0$, and $q \geq 0$,
		\begin{align*}
			J^{(q, \lambda)}(y;\theta)=\frac{\lambda}{\lambda+q-\psi(\theta)}\left(H^{(q)}(y;\theta)-\frac{\psi(\theta)-q}{\lambda}\frac{\Phi(q+\lambda)-\Phi(q)}{\theta-\Phi(q)}H^{(q)}(y;\Phi(q+\lambda))\right).
		\end{align*}
For $y\not=0$, we  now note the following:
\begin{enumerate}
\item For the cases (i) $y < 0$ or (ii) $y > 0$ and $X$ does not have a diffusion component,  we have that $H^{(q)}(y;\Phi(q+\lambda)) \xrightarrow{\lambda \uparrow \infty} 0 $ (because the process does not creep downward as in Exercise 7.6 of \cite{K}) and 
$\frac{\Phi(q+\lambda)-\Phi(q)}{\lambda}$ is bounded for $\lambda>0$ cut-off from zero (which can be verified by the convexity of $\psi$).
\item For the case $y > 0$ and $X$ has a diffusion component, $H^{(q)}(y;\Phi(q+\lambda))  \xrightarrow{\lambda \uparrow \infty}  \E_y [e^{-q \tau_0^-} \mathbf{1}_{\{X_{\tau_0^-} = 0\}}]$ and $\frac{\Phi(q+\lambda)-\Phi(q)}{\lambda} \xrightarrow{\lambda \uparrow \infty} 0$ (because $\psi(\theta) \sim \frac 1 2 \sigma^2 \theta^2$ as $\theta \rightarrow \infty$ where $\sigma$ is the diffusion coefficient of $X$).
\end{enumerate}
Hence, the previous arguments imply that, for $y \neq 0$,
		\begin{align*}
			\lim_{\lambda\to\infty} J^{(q, \lambda)}(y;\theta)=H^{(q)}(y;\theta),\qquad\text{$\theta\geq0$.}
		\end{align*}
		Given that $J^{(q, \lambda)}(y;\theta)$ is the Laplace transform of the random vector $(\tilde{T}_0^- (\lambda)
		,X_{\tilde{T}_0^-(\lambda)})$ (where we put $(\lambda)$ to spell out the dependency on $\lambda$),  by L\'evy's Continuity Theorem we have that
		$(\tilde{T}_0^- (\lambda)
		,X_{\tilde{T}_0^-}(\lambda))$ converges in distribution to $(\tau_0^-,X_{\tau_0^-})$.
		Hence, using Skorohod's Representation Theorem (see Theorem 6.7 in \cite{Bill}) as well as dominated convergence, we obtain, for $V \neq V_B$,
		\begin{align*}
			\lim_{\lambda\to\infty}\mathrm{CS}_{\lambda}(t)&=\frac{r}{P}\lim_{\lambda\to\infty}\frac{\mathbb{E}_{\log (V/V_B)}\Big[\big[P-(1-\alpha)V_B e^{X_{\tilde{T}_0^-(\lambda)}}\big]e^{-r \tilde{T}_0^-(\lambda)}\mathbf{1}_{\{\tilde{T}_0^-(\lambda)\leq t\}} \Big]}{\mathbb{E}_{\log (V/V_B)}\big[1- e^{-r(t\wedge \tilde{T}_0^-(\lambda))}\big]}\\
			&=\frac{r}{P}\frac{\mathbb{E}_{\log (V/V_B)}\Big[\big[P-(1-\alpha)V_B e^{X_{\tau_0^-}}\big]e^{-r \tau_0^-}\mathbf{1}_{\{\tau_0^-\leq t\}} \Big]}{\mathbb{E}_{\log (V/V_B)}\big[1- e^{-r(t\wedge \tau_0^-)}\big]}=CS(t).
		\end{align*}

\section{Proof of Theorem \ref{prop_resolvent}} \label{proof_prop_resolvent}

From 
Theorem 2.7 of \cite{KKR}
 for any Borel set $A$ on $[0, \infty)$, on $\R$, and $(-\infty,0]$ respectively,
\begin{align} \label{resolvent_density}
	\E_x \Big[ \int_0^{\tau_{0}^- } e^{-qt} \mathbf{1}_{\left\{ X_t \in A \right\}} \diff t\Big] &= \int_A \Big[ e^{-\Phi(q) y}W^{(q)}(x) -{W^{(q)}} (x-y) \Big] \diff y, \quad x \geq 0, \\
	\label{resolvent_density_2}
	\E_x\left[ \int_0^{\infty}e^{-(q+\lambda)t } \mathbf{1}_{\{X_t \in A\}}\diff t\right] &=\int_A \left[\frac{e^{\Phi(q+\lambda)(x-y)}}{\psi'(\Phi(q+\lambda))
	}-W^{(q+\lambda)}(x-y)\right]\diff y, \quad x \in \R,\\
	\E_x \Big[ \int_0^{\tau_{0}^+ } e^{-(q+\lambda)t} \mathbf{1}_{\left\{ X_t \in A \right\}} \diff t\Big] &= \int_A \left(e^{\Phi(q+\lambda)x}W^{(q+\lambda)}(-y)-W^{(q+\lambda)}(x-y)\right) \diff y, \quad x \leq 0\label{killed_resolvent}
\end{align}
where $\tau_0^-$ is defined in \eqref{tauzero} and $\tau_0^+:=\inf\{t\geq 0: X_t>0\}$.

We will prove the result for $z=0$ and compute
\begin{align*}
g(x) :=\E_x\Big[ \int_0^{\tilde{T}_0^- } e^{-qt} h(X_t) \diff t \Big], \quad x \in \mathbb{R}.
\end{align*}
The general case follows because the spatial homogeneity of the L\'evy process implies that
$
\E_x\big[ \int_0^{\tilde{T}_z^- } e^{-qt} h(X_t) \diff t \big]
=\E_{x-z}\big[ \int_0^{\tilde{T}_0^-} e^{-qt} h (X_t+z) \diff t \big]$ for $x,z\in\R$.

For $x \in \R$,  by the strong Markov property, 
\begin{align}\label{MPa}
	g(x)&=\E_x\Big[ \int_0^{\tau_0^-  } e^{-qt} h(X_t) \diff t \Big] + \E_x\left[ e^{-q\tau_0^-}g(X_{\tau^-_0})  \mathbf{1}_{\{   \tau^-_0<\infty\}} \right].
\end{align}
In particular, for $x < 0$, again by the strong Markov property,
\begin{align*}
	g(x) = A(x) g(0)+B(x),
\end{align*}
where, for $x \leq 0$,
\begin{align*}
 A(x):=&\E_x \Big[ e^{-q \tau_0^+} \mathbf{1}_{\{   \tau^+_0< T_1^\lambda\}} \Big] =\E_x \Big[ e^{-(q+\lambda) \tau_0^+} \Big] = e^{\Phi(q+\lambda)x},  \nonumber \\
	\begin{split}
B(x):=&\E_x \Big[ \int_0^{\tau_0^+} e^{-qt} \mathbf{1}_{\{ t < T_1^\lambda \}} h(X_t)     \diff t\Big]
		=\int_{-\infty}^0 h(y)
		\left(e^{\Phi(q+\lambda)x}W^{(q+\lambda)}(-y)-W^{(q+\lambda)}(x-y)\right)
		\diff y.
	\end{split}
\end{align*}
Here, the first equality of the former holds by the fact that $T_1^\lambda$ is an independent exponential random variable with parameter $\lambda$ and Theorem 3.12 of \cite{K}. The second equality of the latter is a consequence of \eqref{killed_resolvent}.

Now, by \eqref{der_gamma_0_0},
\begin{align*}
	\E_{x}&\left[e^{-q\tau_0^-}A(X_{\tau_0^-})\mathbf{1}_{\{\tau_0^-<\infty\}}\right]=H^{(q)}(x;\Phi(q+\lambda)), \quad x \in \R,
\end{align*}
for function $H^{(q)}$ defined in \eqref{der_gamma_0_0}.
In addition, by the proof of Theorem 4.1 in \cite{BPY}, we have that
\begin{align*}
	\E_{x}\left[e^{-q\tau_0^-}B(X_{\tau_0^-})\mathbf{1}_{\{\tau_0^-<\infty\}}\right]
	=W^{(q)}(x)\int_{-\infty}^0 h(y)  H^{(q+\lambda)}(-y; \Phi(q))
	\diff y -\int_{-\infty}^0 h(y) I^{(q, \lambda)}(x,-y) \diff y.
\end{align*}
Substituting these in \eqref{MPa} and, then applying  \eqref{resolvent_density} and Remark 4.3 in \cite{BPY}, we obtain, for all $x \in \mathbb{R}$,
\begin{equation}\label{MPa2}
	g(x)=
	g(0)H^{(q)}(x;\Phi(q+\lambda))
	+W^{(q)}(x)\int_{-\infty}^\infty h(y)  H^{(q+\lambda)}(-y; \Phi(q)) \diff y -\int_{-\infty}^\infty h(y) I^{(q, \lambda)}(x,-y) \diff y.
\end{equation}

On the other hand, by the strong Markov property, we can also write 
\begin{align} \label{g_zero_recursion}
\begin{split}
	g(0) = \E_0 \Bigg[ \E_0\Bigg[ \int_0^{T_1^\lambda } e^{-qt} h(X_t) \diff t + \mathbf{1}_{\{X_{T_1^\lambda} > 0 \}}\int_{T_1^\lambda}^{\tilde{T}_0^- } e^{-qt} h(X_t) \diff t  \Big| T_1^\lambda, (X_u)_{0 \leq u \leq T_1^\lambda }\Bigg] \Bigg]
	= \gamma_1 +  \gamma_2, \\\quad \textrm{where} \;
	\gamma_1 := \E_0 \Bigg[\int_0^{T_1^\lambda} e^{-qt} h(X_t) \diff t \Bigg] \quad\text{and}\quad \gamma_2 :=  \E_0 \Bigg[e^{-q T_1^\lambda} g(X_{T_1^\lambda}) \mathbf{1}_{\{ X_{T_1^\lambda} > 0\}}   \Bigg].
	\end{split}
\end{align}
We will compute $\gamma_1$ and $\gamma_2$ below.
First,  observe that
\begin{equation*}
\gamma_1 = \E_0 \left[\int_0^\infty \mathbf{1}_{\{t <T_1^\lambda \}}e^{-qt} h(X_t) \diff t \right] = \E_0 \left[\int_0^{\infty} e^{-(q+\lambda) t} h(X_t) \diff t \right].
\end{equation*}
For $\gamma_2$, by \eqref{resolvent_density_2}, we can write
\begin{align}
	\gamma_2 =\lambda \E_0 \Big[\int_0^{\infty}e^{-(q+\lambda)s}g(X_s) \mathbf{1}_{\{ X_s > 0\}}\diff s\Big]=\frac{\lambda}{\psi'(\Phi(q+\lambda))}\int_0^{\infty}
	e^{-\Phi(q+\lambda)y}g(y)\diff y, \label{gamma_3_rewrite}
\end{align}
which we shall compute using the expression of $g$ as in \eqref{MPa2}. First, by identity (A.8) in \cite{BPY} we have
\begin{align} \label{Z_Phi_integral}
	\begin{split}
		\int_0^{\infty}e^{-\Phi(q+\lambda)y}Z^{(q)}(y;\Phi(q+\lambda))\diff y=\frac{\psi'(\Phi(q+\lambda))}{\lambda},
	\end{split}
\end{align}
while \eqref{scale_function_laplace} gives $\int_0^\infty e^{-\Phi(q+\lambda) y} W^{(q)}(y) \diff y = \lambda^{-1}$, and hence
\begin{align*}
\int_0^{\infty}e^{-\Phi(q+\lambda)y}H^{(q)}(y;\Phi(q+\lambda))\diff y=\frac{\psi'(\Phi(q+\lambda))}{\lambda}-\frac{1}{\Phi(q+\lambda)-\Phi(q)}.
\end{align*}
Again by the proof of Theorem 4.1 in \cite{BPY}, we have
\begin{align*}
	\int_0^{\infty}e^{-\Phi(q+\lambda)y}\int_{-\infty}^{\infty}h(z) I^{(q, \lambda)}(y,-z) \diff z \diff y
	=\frac{\psi'(\Phi(q+\lambda))}{\lambda} \E_0 \left[\int_0^{\infty}e^{-(q+\lambda)t}h(X_t) \diff t \right].
\end{align*}
Substituting these  in \eqref{gamma_3_rewrite} and with the help of \eqref{MPa2}, 
\begin{align*}
	\gamma_2
	&=g(0)\frac{\lambda}{\psi'(\Phi(q+\lambda))}\Big[\frac{\psi'(\Phi(q+\lambda))}{\lambda}-\frac{1}{\Phi(q+\lambda)-\Phi(q)}\Big]\\
	&\qquad- \E_0 \left[\int_0^{\infty}e^{-(q+\lambda)t}h(X_t)\diff t \right]+\frac{1}{\psi'(\Phi(q+\lambda))} \int_{-\infty}^\infty h(y)  H^{(q+\lambda)}(-y; \Phi(q)) \diff y.
\end{align*}
Now substituting the computed values of $\gamma_1$, and $\gamma_2$  in \eqref{g_zero_recursion}
we obtain
\begin{align*}
	g(0)
	&=g(0)- \frac {\lambda} {\Phi(q+\lambda)- \Phi(q)}\frac{g(0)}{\psi'(\Phi(q+\lambda))}+\frac{1}{\psi'(\Phi(q+\lambda))}\int_{-\infty}^\infty h(y)  H^{(q+\lambda)}(-y; \Phi(q)) \diff y,
\end{align*}
and hence, solving for $g(0)$ we obtain
\begin{align*}
	g(0)=
	 \frac  {\Phi(q+\lambda)- \Phi(q)} {\lambda}
	\int_{-\infty}^\infty h(y) H^{(q+\lambda)}(-y; \Phi(q))
	\diff y.
\end{align*}
Substituting this back in \eqref{MPa2}, we have
\begin{align*}
g(x)=Z^{(q)}(x;\Phi(q+\lambda))\frac  {\Phi(q+\lambda)- \Phi(q)} {\lambda}
\int_{-\infty}^\infty h(y) H^{(q+\lambda)}(-y; \Phi(q))\diff y-\int_{-\infty}^\infty h(y)  I^{(q, \lambda)}(x,-y) \diff y.
\end{align*}
Hence the resolvent density is given by \eqref{resol_dens}, as desired. 
 \qed
		

%
%


\end{document}